\documentclass[12pt]{article}

\usepackage[margin=1.2in]{geometry}
\usepackage{amsmath,amssymb,amsthm,mathtools}
\usepackage[expansion=false]{microtype}
\usepackage{enumerate}
\usepackage{enumitem}
\usepackage{hyperref}
\usepackage{centernot}
\usepackage{xcolor}
\usepackage{changepage}
\usepackage{turnstile}
\usepackage{tikz}
\usetikzlibrary{arrows.meta,decorations.pathmorphing,positioning,trees}
\usepackage{array}
\newcolumntype{L}[1]{>{\raggedright\arraybackslash}p{#1}}
\definecolor{hardred}{RGB}{180,30,30}
\definecolor{oxblue}{RGB}{0,33,71}
\definecolor{camblue}{RGB}{163,193,173}
\definecolor{midgray}{RGB}{210,210,210}

\hypersetup{colorlinks=true,linkcolor=blue,citecolor=blue,urlcolor=blue}

\newtheorem{theorem}{Theorem}[section]
\newtheorem{lemma}[theorem]{Lemma}
\newtheorem{corollary}[theorem]{Corollary}
\newtheorem{proposition}[theorem]{Proposition}
\newtheorem{definition}[theorem]{Definition}
\newtheorem{conjecture}[theorem]{Conjecture}
\newtheorem{assumption}[theorem]{Assumption}

\newcommand{\NP}{\mathbf{NP}}
\newcommand{\PP}{\mathbf{P}}
\newcommand{\PH}{\mathbf{PH}}
\newcommand{\BPP}{\mathbf{BPP}}

\newcommand{\NEXP}{\mathbf{NEXP}}
\newcommand{\coNP}{\mathbf{coNP}}
\newcommand{\E}{\mathbf{E}}

\newcommand{\Con}{\mathrm{Con}}
\newcommand{\EA}{\mathrm{EA}}
\newcommand{\Sone}{S^1_2}

\newcommand{\ZFC}{\mathrm{ZFC}}
\newcommand{\SAT}{\texttt{SAT}}
\newcommand{\Unsat}{\texttt{Unsat}}
\newcommand{\CNF}{\texttt{CNF}}
\newcommand{\poly}{\mathrm{poly}}

\newcommand{\PV}{\mathrm{PV}}
\newcommand{\APC}{\mathrm{APC}}

\newcommand{\MKtP}{\mathrm{MK}^{t}\mathrm{P}}
\newcommand{\Kt}{K^{t}}
\newcommand{\KT}{\mathrm{KT}}
\newcommand{\MKTP}{\mathrm{MKTP}}
\newcommand{\PSPACE}{\mathbf{PSPACE}}

\title{Hardness as an Information Constraint: \\A Unifying Meta-Complexity Assumption}

\author{Hunter Monroe}
\date{July 2026}

\begin{document}
\maketitle
\begin{abstract}
Monroe~\cite{MonroeCharacterizing} shows that, if no optimal proof system exists, then every sound arithmetic theory $\mathcal S{\supseteq}\Sone$ with polynomial-time decidable axioms fails to simulate $\Sone{+}\phi_{BB}(k)$ for all sufficiently large $k$, where $\phi_{BB}(k)$ asserts the exact $k$-state Busy Beaver value. This gives the nonexistence hypothesis an information-constraint interpretation in terms of canonical hard instances. If the best-known route to simulation is also necessary---that is, if simulation requires a relative-consistency explanation over a weak base theory---then the same constraint holds for inaccessible Kolmogorov-randomness facts. We call this conjecture \textbf{Kolmogorov Hardness} (\textbf{KH}).

We argue that major open questions in proof and computational complexity can likewise be formulated as information constraints involving Kolmogorov-random strings. Finite-scale and hierarchy-level forms of \textbf{KH} yield, conditionally, dense families of small hard tautologies, $\PH$ noncollapse with explicit dense separators at each level, and $\SAT{\notin}\PP/\poly$. Under separately stated assumptions, variants yield $\PP$-inseparability of a disjoint $\NP$ pair, no mutual help between mutually conditionally random axioms, one-way functions via Liu--Pass, derandomization, and Feige-style random-refutation hardness.

The framework provides a unified working model of complexity theorists' beliefs, organized around canonical hard instances. It seems self-evident that efficient proofs in a theory should not leverage true randomness facts that the theory cannot verify. Yet \textbf{KH} and its variants may be formally independent of standard metatheories: they resemble reflection principles, their internal readings can fail in nonstandard models even when the corresponding external readings are true, and the same information constraints may apply to the metatheories themselves. We propose a research program to extend the model, clarify the case for self-evidence, analyze formal independence, and identify principles that may serve as new axioms.

\end{abstract}
\section{Introduction}
\paragraph{A Common Informational Constraint.}
This paper proposes that several central hardness conjectures in complexity theory can be organized around a common informational constraint. Monroe~\cite[Theorem~3.10]{MonroeCharacterizing} shows that, if no optimal proof system exists~\cite{Krajicek}, then every theory fails to simulate extensions obtained by adjoining sufficiently large exact Busy Beaver value statements. That paper further conjectures that the known sufficient conditions for simulation are exhaustive; if so, the same proof-theoretic obstruction extends from Busy Beaver value statements to inaccessible Kolmogorov-randomness facts. We use this obstruction as a template for a broader framework: feasible proof systems, algorithms, and circuits should not obtain leverage from information that the relevant theory cannot already certify or explain over an appropriate weak base. The aim is not to claim unconditional lower bounds, but to isolate a common proof-theoretic strengthening of familiar hardness expectations. Under finite-scale, hierarchy-level, sparse-refutation, and average-case bridge principles, inaccessible randomness facts give a unified source for strengthened conjectures in proof complexity, circuit complexity, random refutation, derandomization, and cryptography. The theorem of Allender et al. supplies an external calibration point: full membership access to a conventional random-string oracle can carry every $\PSPACE$ computation, while Chaitin incompleteness leaves a fixed sound effective theory with only finitely many certified positive instances. The later computational sections ask whether feasible access to decidable proxies can cross this gap without exposing forbidden unbounded randomness information.

\paragraph{The core conjecture.}
Fix a sound, finitely axiomatized, sequential arithmetic theory $\mathcal S{\supseteq}\Sone$ and let $R$ be the set of strings with high Kolmogorov complexity. Chaitin-style incompleteness says that $\mathcal S$ proves only finitely many true assertions $x{\in}R$~\cite{ChaitinIncompleteness,LiVitanyiBook}.\footnote{From an index for the computable axiom set of $\mathcal S$ and the fixed universal machine~$U$, the usual Chaitin argument gives an effective bound $c_{\mathcal S}$ such that $\mathcal S$ proves no true assertion $K_U(x){>}c_{\mathcal S}$; choosing the computable Chaitin threshold $\chi_{\mathcal S,R}$ with $n{-}d\log n{>}c_{\mathcal S}$ for all $n{\ge}\chi_{\mathcal S,R}$ shows that every true provable instance $x{\in}R$ has $|x|{<}\chi_{\mathcal S,R}$.} The proof-theoretic question is whether such inaccessible truths can nevertheless help $\mathcal S$ prove bounded consistency statements with polynomial-size proofs. The \textbf{Kolmogorov Hardness} (\textbf{KH}) conjecture says that, except for the finite class of random strings whose randomness is already certified over a weak base theory, $\mathcal S$ should not have polynomial-size proofs of $\Con_{\mathcal S{+}(x{\in}R)}(n)$. Equivalently, viewed as a Cook--Reckhow proof system, $\mathcal S$ should not simulate the random-axiom extension $\mathcal S{+}(x{\in}R)$ unless the corresponding relative-consistency implication is already provable over the weak base.

\paragraph{Why the criterion is not mere unprovability.}
The correct obstruction is not simply that $\mathcal S$ fails to prove $\phi$. Pudl\'ak's finite consistency theorem~\cite{Pudlak1986length} gives efficient $\mathcal S$-proofs of $\Con_{\mathcal S}(n)$ even though $\mathcal S$ does not prove $\Con_{\mathcal S}$, and the same phenomenon persists whenever $\mathcal S{+}\phi$ is interpretable in $\mathcal S$. The relevant positive mechanism has two logically separate stages. First, for finitely axiomatized sequential theories, the Friedman--Visser interpretability criterion identifies the weak-base implication $\EA{\vdash}\Con_{\mathcal S}{\rightarrow}\Con_{\mathcal S{+}\phi}$ with interpretability of $\mathcal S{+}\phi$ in $\mathcal S$; the relative-consistency-to-interpretation direction is supplied by Visser's Interpretation Existence Lemma~\cite{Visser2017}. Second, Je\v{r}\'abek's proof-translation argument, as presented by Pudl\'ak~\cite[Lemma~3.6]{PudlakFiniteDomain}, turns the resulting interpretation into simulation. The \textbf{Higher Relative Consistency} (\textbf{HRC}) conjecture (Monroe~\cite{MonroeCharacterizing}) asserts the converse on this same class: if $\mathcal S{\supseteq}\Sone$ is sound, finitely axiomatized, and sequential, $\phi$ is true, and $\EA{\nvdash}\Con_{\mathcal S}{\rightarrow}\Con_{\mathcal S{+}\phi}$, then $\mathcal S$ should not have polynomial-size proofs of $\Con_{\mathcal S{+}\phi}(n)$. \textbf{KH} is the random-axiom specialization of this proposed criterion.

\paragraph{The unifying payoff.}
The main contribution of this paper is to argue that \textbf{KH} is not an isolated proof-complexity conjecture but the core of a broader unifying meta-complexity assumption. In this paper, the \emph{Unifying Meta-Complexity Assumption} is the package consisting of \textbf{KH} together with its finite-scale, hierarchy-level, average-case, no-mutual-help, and bridge-strengthened variants. The common content of the package is that feasible methods should not obtain substantial proof-theoretic, algorithmic, or circuit-theoretic leverage from true randomness information that remains inaccessible over the relevant weak base theory. The single-axiom finite-scale form developed below produces dense families of hard bounded-consistency instances. A distinct pairwise strengthening, together with Conditional Factual No Access, gives the no-mutual-help conclusion for mutually conditionally random axioms. Hierarchy-level strengthenings yield explicit dense separators throughout $\PH$ and imply standard circuit consequences such as $\SAT{\notin}\PP/\poly$. Further resource-specific variants specialize the same proof-theoretic obstruction to additional regimes, giving conditional routes to derandomization, one-way functions via the Liu--Pass characterization of average-case hardness for time-bounded Kolmogorov complexity, and Feige-style random refutation through sparse-support reflection. The natural-proofs discussion instead tests whether the framework is compatible with the established Razborov--Rudich barrier. The unifying claim is not that these frontier consequences are unconditional, but that each reduces to the same kind of obstruction: feasible computation should not obtain leverage from information that the relevant theory cannot already explain.

\paragraph{The information-constraint template.}
The common methodological template is as follows. To rule out a proposed algorithm, proof system, or circuit family, we identify a finite information predicate whose decision would amount to accessing a stronger layer of proof-theoretic or Kolmogorov information. A feasible method at level~$i$ should have a level-respecting explanation of its success: its efficiency and correctness should be certifiable in the corresponding theory $\mathcal T_i$, not only in a stronger theory $\mathcal T_{i+1}$. When such a certificate would force $\mathcal T_i$ to prove unbounded randomness or consistency information unavailable at level~$i$, the relevant reflection principle rules out the method.
Thus the framework is not merely a list of lower-bound conjectures; it is organized by a no-cheating principle for feasible computation, specialized across the paper to proof-theoretic, finite-scale, hierarchy-level, and average-case notions of hardness. Across the extensions, the same template recurs: choose a randomness predicate calibrated to the resource regime, assert that the relevant information is inaccessible to the weak theory at finite scale, and add only the bridge principle needed to translate feasible success into forbidden feasible proofs.

\paragraph{A unified working model, and possibly an axiom schema.}
The framework has two readings. On the modest reading, \textbf{HRC}, \textbf{KH}, and their extensions provide a unified working model for complexity theorists' informal hardness expectations behind central open problems in complexity theory. The model identifies canonical sources of hardness: random strings $x{\in}R$ are mapped to almost-always-hard tautologies, circuit families, and boundary languages, with explicit reflection and calibration assumptions marking where each consequence enters. This reading makes load-bearing assumptions explicit, exposes calibration parameters, and organizes several neighboring programs---including proof-complexity generators, average-case hardness of time-bounded Kolmogorov complexity, and unprovability of circuit lower bounds in bounded arithmetic---as instances or analogs of the information-constraint template.

On the more ambitious reading, the framework suggests an alternative endgame for complexity theory. Rather than expecting every central separation to be resolved only by an unconditional lower-bound proof in a fixed foundational system, one may seek a small family of independently motivated information-constraint principles that unify the expected answers to several open questions and can be tested for robustness, restricted provability, and formal independence. In this role, \textbf{HRC}, \textbf{KH}, and their extensions are candidate axioms, with \textbf{KH} as the most direct candidate for self-evidence among the random-information principles: it says that inaccessible Kolmogorov-randomness facts should not provide proof-theoretic leverage unless they are already weak-base accessible. The finite-scale, hierarchy-level, average-case, and cryptographic variants are less primitive; they are calibrated specializations of the same barrier, with additional assumptions needed to obtain the corresponding complexity-theoretic payoffs. The relevant question is whether these barriers are fruitful, self-evident, formally independent, and robust enough to deserve axiomatic status. This reading is programmatic: it does not substitute an asserted axiom for a missing theorem, but identifies criteria by which a conjectural principle might eventually earn axiom-like status.

\paragraph{Tools from economic theory.}
The methodological orientation is imported from the economic theory of mechanism design (e.g., which auction mechanism maximizes revenue). Complexity theory can be read as the study of optimization under constraints that have not typically been made explicit; the conjectures of this paper make the missing constraint explicit. Mechanism design begins by specifying exactly which information is hidden. Here, a theory holds hidden information if it gains efficient proof leverage tied to true facts that a base theory cannot access: feasible proofs of $\Con_{\mathcal S{+}\phi}(n)$ for each~$n$, with no weak-base explanation of why adjoining $\phi$ is safe. In auction theory, the Revelation Principle (Myerson~\cite{Myerson1979Incentive,Myerson1981optimal}) restructures an arbitrary mechanism so that private information is truthfully revealed, and the informed party is compensated with an information rent. \textbf{HRC}, and its contrapositive \textbf{Feasible Reflection} (\textbf{FR}), make a stronger move: they assert that the private information does not exist in the first place. Every simulation already carries its own weak-base certificate, $\EA{\vdash}\Con_{\mathcal S}{\rightarrow}\Con_{\mathcal S{+}\phi}$---in a slogan, \emph{inexplicable efficiency does not exist}. This assertion is what makes the Theorist vs.\ Nature game winnable for the Theorist. Nature supplies a candidate optimal system represented by a sound, finitely axiomatized sequential theory $\mathcal S$, the Theorist replies with a true extension axiom, and Nature's wins are the simulations. Under the conjectures, every such win must be certified over the weak base; for a random axiom $x{\in}R$, a certificate yields $\EA{+}\Con_{\mathcal S}{\vdash}x{\in}R$, and Chaitin's incompleteness theorem allows only finitely many such instances for each theory. Against every legal Nature move, all but finitely many random-axiom replies then win for the Theorist. Section~\ref{subsec:game} develops this game in detail.

\paragraph{The research program.}
Several concrete targets emerge. One can try to prove restricted instances of \textbf{HRC} or \textbf{KH} for weak proof systems, search for nonstandard models witnessing the failure of converse reflection, sharpen the finite-scale thresholds, test the hierarchy-level predicates against known proof-complexity techniques, analyze the random-axiom disjoint $\NP$ pairs, calibrate the $R_t$ interface for average-case hardness, one-way functions, and derandomization, and determine whether the sparse-support reflection principles needed for Feige's hypothesis hold in any natural proof-theoretic setting.

\paragraph{Main technical contributions.}
The paper has three kinds of contributions.
\begin{enumerate}[label=(\arabic*)]
\item \emph{Bounded-consistency setup.} We recall the standard proof-complexity translation between simulation of arithmetic theories and polynomial-size proofs of bounded consistency statements such as $\Con_{\mathcal S{+}\phi}(n)$. We then state the conjectural \textbf{HRC} dividing line from Monroe~\cite{MonroeCharacterizing}, specialize it to \textbf{KH}, and formulate finite-scale and hierarchy-level strengthenings. This separates the established positive mechanisms from the conjectural reflection principles and from the consequences derived from them.
\item \emph{Technical payoff.} We prove structural consequences of these principles, including density, robustness under changes of universal machine, and, under separate pairwise assumptions, no mutual help, persistence under finite true extensions, and conditional unprovability of universal self-applicable schemas. We then formulate hierarchy-level strengthenings of the random-information principle and show that they yield explicit dense separators at each level of $\PH$ and consequently $\PH$ noncollapse and $\SAT{\notin}\PP/\poly$. This gives a concrete test case for the proposed unification.
\item \emph{Working model and axiom assessment.} We first evaluate \textbf{HRC}, \textbf{KH}, and their finite-scale and hierarchy-level extensions as a unified working model of complexity theorists' beliefs, and then ask whether they could also serve as candidate axioms in the methodological sense above. We also identify bridge targets where additional principles would be needed, including derandomization, one-way functions, Feige-style random refutation, and disjoint $\NP$ pairs.
\end{enumerate}

\paragraph{Related conjectures and programs.}
\textbf{HRC} and \textbf{KH} sit inside an active research program with substantial structure already in place. The most direct predecessor is Pudl\'ak's \emph{Feasible Incompleteness Thesis}~\cite{PudlakFiniteDomain,Pudlak1986length}, which asserts that the phenomenon of G\"odel incompleteness should manifest already at the level of polynomial-length proofs and polynomial-time computations. Pudl\'ak~\cite{Pudlak1986length} conjectures $\mathcal S\centernot{\sststile{}{n^c}}\Con_{\mathcal S{+}\Con_{\mathcal S}}(n)$ for every $c$. Under this conjecture, any true extension $\mathcal S{+}\phi$ with $\mathcal S{+}\phi{\vdash}\Con_{\mathcal S}$ is a sufficient source of non-simulation. Khaniki~\cite{Khaniki} shows that if a computable jump operator exists, then Pudl\'ak's Conjecture holds; thus computable jump operators provide a uniform proof-complexity route to the canonical consistency obstruction. The \textbf{HRC} conjecture by contrast proposes a necessary and sufficient condition for non-simulation, while \textbf{KH}, which follows from \textbf{HRC}, is nonconstructive. Kraj{\'\i}{\v c}ek's proof-complexity generator program~\cite{Krajicek2024Generators,KrajicekGenerator,Krajicek2024SLog,Krajicekproof} conjectures that tautologies encoding range avoidance for pseudorandom generators are hard. The connection to \textbf{KH} is intimate: the range of such a generator consists of strings of small time-bounded Kolmogorov complexity, and the relationship between the two is itself a research target. A literature has proved unconditional unprovability of strong circuit lower bounds in fragments of bounded arithmetic (Pich and Santhanam~\cite{PichSanthanam2021}, M\"uller and Pich~\cite{MullerPich2020}, Li and Oliveira~\cite{LiOliveira2023}, and Byd\v{z}ovsk\'y et al ~\cite{BydzovskyKrajicekOliveira2024}). Ren and Santhanam~\cite{RenSanthanam2022,HiraharaLuRen2023} identify which separations between $\mathrm{MCSP}$, $\mathrm{MK^tP}$ and their search variants can or cannot be obtained by relativizing techniques. M\"uller and Pich's reverse-mathematics program for complexity lower bounds~\cite{ChenLiOliveira2024RM,MullerPich2020} identifies which lower bounds are equivalent (in $\PV_1$ or $\APC_1$) to which combinatorial principles, treating ``the minimal axiomatic system needed to prove a given complexity lower bound'' as the unit of analysis.  Hirahara~\cite{Hirahara2018}, Allender et al~\cite{AllenderetalRandomStrings}, and Cook--Kraj{\'\i}{\v c}ek~\cite{CookKrajicek2007} establish technical connections between meta-complexity, time-bounded Kolmogorov complexity, randomness oracles, and provability of $\NP\subseteq\PP/\poly$ in bounded arithmetic. The technical substrate---bounded consistency, Cook--Reckhow ($p$)-simulation, propositional translation, the link between proof-length and circuit lower bounds---is the subject of Kraj{\'\i}{\v c}ek's two monographs~\cite{Krajicek,Krajicekproof}.

\paragraph{What is new here.}
The contribution of Monroe~\cite{MonroeCharacterizing} and of the present paper relative to these programs is fourfold. First, Monroe~\cite{MonroeCharacterizing} proposes \textbf{HRC} as an exact converse to the known positive direction: $\EA{\vdash}\Con_{\mathcal S}{\rightarrow}\Con_{\mathcal S{+}\phi}$ yields simulation, and \textbf{HRC} asserts that failure of this weak-base implication is precisely the obstruction to simulation. Second, Monroe~\cite{MonroeCharacterizing} identifies the random-axiom case as \textbf{KH}: Chaitin-inaccessible facts $x{\in}R$ should generate almost-always hard bounded-consistency instances. Third, the present paper develops finite-scale and hierarchy-level strengthenings that yield density, pairwise no mutual help under additional assumptions, robustness, explicit dense separators through $\PH$, $\PH$ noncollapse, and $\SAT{\notin}\PP/\poly$ as conditional consequences. Fourth, the present paper proposes a broader working model of canonical hardness, separating the core proof-theoretic obstruction from the calibration assumptions needed for average-case, cryptographic, derandomization, and other frontier consequences. The main novelty is the proposed dividing line itself. Simulation has polynomially checkable certificates, while non-simulation is explained by weak-base unprovability; thus the paper does not merely add another $\NP$-style hardness conjecture, but organizes existing hardness heuristics around the claim that feasible leverage requires weak-base explanatory access.

\paragraph{Classical proof barriers.}
Because the principles considered here are meant to yield consequences beyond the usual formulation of $\PP{\neq}\NP$, including $\PH$ noncollapse and circuit lower bounds, they must be calibrated against the classical proof barriers. The framework is not relativizing in the usual oracle sense: its central claims depend on arithmetized proof predicates, weak-base access to randomness assertions, and bounded-consistency reflection, each of which is sensitive to the way the oracle is represented inside the theory. Thus oracle separations do not refute the framework; rather, they mark the point at which purely relativizing information is insufficient and nonrelativizing proof-theoretic information must enter. Aaronson and Wigderson's algebrization barrier~\cite{AaronsonWigderson} strengthens this lesson to relativization-plus-algebraic-extensions; the framework's relationship to algebrization is partial and remains an open calibration problem. Razborov and Rudich's natural-proofs barrier~\cite{RazborovRudich} is the established example of an obstruction in complexity theory that is itself a structural meta-statement. Section~\ref{sec:random-information-boundaries} compares that barrier with the present information-constraint viewpoint. Razborov's program on the unprovability of strong circuit lower bounds in bounded arithmetic~\cite{RazborovUnprovability} is the original conceptual ancestor of the present unprovability questions and remains a methodological touchstone for \S\ref{sec:independence}.
\paragraph{Why \textbf{KH} is an unusual meta-complexity assumption.}
Every hypothesis in the meta-complexity program above is, on its face, about the complexity of a complexity measure: it asserts that some problem is hard to \emph{compute}. \textbf{KH} is not introduced that way. It is a specialization of \textbf{HRC}---a principle about \emph{simulation between arithmetic theories}, asserting that simulation occurs only with an $\EA$-level relative-consistency explanation. \textbf{HRC} mentions neither Kolmogorov complexity nor any complexity measure; it becomes a meta-complexity statement only when the adjoined true sentence~$\phi$ is specialized to a randomness fact $x{\in}R$. \textbf{KH} thus inherits its meta-complexity content from the \emph{choice of axiom}, not from its logical form. This is what lets the framework organize meta-complexity hardness and ordinary proof-theoretic hardness (Busy Beaver sentences, $\Con_{\mathcal S}$) under a single principle: they are the random-axiom and canonical-incompleteness instances of one simulation obstruction.

\paragraph{Organization of the paper.} \S\ref{sec:framework} sets up the bounded-consistency framework, states \textbf{HRC} and \textbf{KH}, and fixes the proof-theoretic conventions used throughout. \S\ref{sec:technical} develops the main finite-scale and hierarchy-level consequences, including density, pairwise no mutual help, robustness, explicit dense separators through $\PH$, $\PH$ noncollapse, and $\SAT{\notin}\PP/\poly$. \S\ref{sec:random-information-boundaries} begins with the gap between full random-string oracle power and positive access certified by a fixed theory, then passes to time-bounded and sparse-support predicates, isolating the additional calibration assumptions needed for average-case, cryptographic, derandomization, and Feige-style random-refutation consequences and comparing the framework with the natural-proofs barrier. \S\ref{sec:disjoint-np-section} gives the disjoint $\NP$-pairs instance of the framework. \S\ref{sec:independence} explains the internal unprovability and independence obstacles for the proposed principles. \S\ref{sec:conclusion} concludes.
\section{Proof-Theoretic Setup: Bounded Consistency and Simulation}
\label{sec:framework}
This section summarizes the proof-theoretic setup and the main results from Monroe~\cite{MonroeCharacterizing}.
\subsection{Basic Setup}

Fix a sound arithmetic theory $\mathcal{S}{\supseteq}\Sone$ with polynomial-time decidable axioms. For any theory~$\mathcal{T}$, $\Con_{\mathcal{T}}(n)$ denotes the bounded consistency statement asserting that $\mathcal{T}$ has no proof of contradiction of length at most~$n$ symbols (Pudl\'ak~\cite{Pudlak1986length,PudlakLengthsOfProofs}). For brevity, call a theory \emph{admissible} if it is a sound extension of $\Sone$ with polynomial-time decidable axioms. Additional hypotheses, such as finite axiomatizability and sequentiality, are stated explicitly when they are used.

\begin{definition}
We write $\mathcal{S}{\sststile{}{n^{O(1)}}}\Con_{\mathcal{T}}(n)$ to mean that the family $(\Con_{\mathcal{T}}(n))_n$ has polynomial-size $\mathcal{S}$-proofs. We say $\mathcal{S}$ \emph{simulates}~$\mathcal{T}$ when this holds.
\end{definition}
For a sentence $\theta$ and a numerical bound $L$, we write $\mathcal S\vdash_{\leq L}\theta$ if $\theta$ has an $\mathcal S$-proof of length at most $L$, and $\mathcal S\nvdash_{\leq L}\theta$ otherwise. This is pointwise notation for a single sentence; it is distinct from the family-level notation $\mathcal S{\sststile{}{n^c}}\phi(n)$ above.

\subsection{Two Unconditional Results}
\label{subsec:positive}

Once $\mathcal S$ has feasible bounded self-consistency proofs, simulation and feasible relative consistency are equivalent formulations.

\begin{theorem}[{\cite[Theorem~3.2]{MonroeCharacterizing}}]
\label{thm:finite-consistency-equivalence}
Let $\mathcal S{\supseteq}\Sone$ be sound with polynomial-time decidable axioms, and assume $\mathcal S{\sststile{}{n^{O(1)}}}\Con_{\mathcal S}(n)$. Then the following are equivalent:
\begin{enumerate}
\item There is a polynomial $p$ such that $\mathcal S{\sststile{}{n^{O(1)}}}\Con_{\mathcal S}(p(n)){\rightarrow}\Con_{\mathcal S{+}\phi}(n)$.
\item $\mathcal S{\sststile{}{n^{O(1)}}}\Con_{\mathcal S{+}\phi}(n)$.
\end{enumerate}
\end{theorem}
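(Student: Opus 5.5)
The two directions are short and asymmetric. For $(1)\Rightarrow(2)$ I would combine the hypothesis $\mathcal S\sststile{}{n^{O(1)}}\Con_{\mathcal S}(n)$ with the implication from (1) by one modus ponens step. Concretely, let $\pi_n$ be the polynomial-size $\mathcal S$-proof of $\Con_{\mathcal S}(p(n))\rightarrow\Con_{\mathcal S+\phi}(n)$. Let $\sigma_{p(n)}$ be the proof of $\Con_{\mathcal S}(p(n))$ obtained by instantiating the feasible self-consistency family at length $p(n)$. Its size is polynomial in $p(n)$, hence polynomial in $n$. Concatenating $\sigma_{p(n)}$, $\pi_n$, and one application of modus ponens gives an $\mathcal S$-proof of $\Con_{\mathcal S+\phi}(n)$ whose length is a sum of polynomials, which is (2). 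The only care needed is that the numeral for $p(n)$ and the formula $\Con_{\mathcal S}(p(n))$ occurring in both proofs are syntactically identical. With binary numerals of length $O(\log n)$ this is automatic, and polynomial-size proofs of the trivial arithmetic identities are available in $\Sone$ if they are needed.

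For $(2)\Rightarrow(1)$ I would observe that (2) already gives polynomial-size proofs of the consequent $\Con_{\mathcal S+\phi}(n)$. In any Hilbert-style system, a proof of $B$ yields a proof of $A\rightarrow B$ by adding the axiom $B\rightarrow(A\rightarrow B)$ and one modus ponens. The overhead is linear in $|A|+|B|$. Take $p(n)=n$, or any polynomial. Then $|\Con_{\mathcal S}(p(n))|$ is polynomial in $n$, since it is a fixed formula applied to a numeral of logarithmic size. Hence $\mathcal S\sststile{}{n^{O(1)}}\Con_{\mathcal S}(p(n))\rightarrow\Con_{\mathcal S+\phi}(n)$. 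This direction does not even use the self-consistency hypothesis.

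I expect no real obstacle. The only step deserving attention is the bookkeeping of proof lengths in the proof-system conventions of \cite{Pudlak1986length}: bounded consistency formulas are written with binary numerals, and the instantiation $n\mapsto p(n)$ of a uniformly polynomial family stays polynomial. I would verify the first of these explicitly, because it makes the composition in $(1)\Rightarrow(2)$ legitimate. The equivalence is therefore essentially definitional. Its content is that, once feasible self-consistency is available, simulation of $\mathcal S+\phi$ is exactly feasible provability of bounded relative consistency.
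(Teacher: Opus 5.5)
Your proposal is correct and is essentially the argument behind this result: the paper gives no proof here and simply cites \cite[Theorem~3.2]{MonroeCharacterizing}, but in the discussion after Theorem~\ref{thm:feasible-relative-consistency} it uses your (1)$\Rightarrow$(2) step verbatim, a single modus ponens against the feasible self-consistency proofs of $\Con_{\mathcal S}(p(n))$. Direction (2)$\Rightarrow$(1) is the weakening you describe, and you are right that it uses neither the self-consistency hypothesis nor soundness.
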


The unconditional foundation of the program is the following constrained positive direction from Monroe~\cite{MonroeCharacterizing}.

\begin{theorem}[{\cite[Theorem~3.4]{MonroeCharacterizing}}]
\label{thm:feasible-relative-consistency}
Let $\mathcal S{\supseteq}\Sone$ be finitely axiomatized and sequential, and let $\phi$ be true. If $\EA{\vdash}\Con_{\mathcal S}{\rightarrow}\Con_{\mathcal S{+}\phi}$, then there is a polynomial $p$ such that $\mathcal S{\sststile{}{n^{O(1)}}}\Con_{\mathcal S}(p(n)){\rightarrow}\Con_{\mathcal S{+}\phi}(n)$. In particular, $\mathcal S{\sststile{}{n^{O(1)}}}\Con_{\mathcal S{+}\phi}(n)$.
\end{theorem}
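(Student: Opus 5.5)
The plan is to follow the two-stage mechanism described in the introduction: first turn the weak-base relative-consistency implication into an interpretation of $\mathcal S{+}\phi$ in $\mathcal S$, and then turn that interpretation into a feasible proof-translation. For the first stage, assume $\EA{\vdash}\Con_{\mathcal S}{\rightarrow}\Con_{\mathcal S{+}\phi}$. Because $\mathcal S$ is finitely axiomatized and sequential, Visser's Interpretation Existence Lemma~\cite{Visser2017}, or equivalently the relevant direction of the Friedman--Visser criterion, yields an interpretation $\iota\colon \mathcal S{+}\phi\rhd\mathcal S$. Concretely, $\mathcal S$ proves $\Con_{\mathcal S}$ on a definable cut $J$ (Pudl\'ak's cut version of the second incompleteness phenomenon for finitely axiomatized sequential theories). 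Inside $\mathcal S$, the $\EA$-proof then gives $\Con_{\mathcal S{+}\phi}$ on a possibly smaller cut, and a Henkin-style arithmetized completeness construction on that cut produces the interpretation. I will treat this lemma as a black box. Its essential output is a single \emph{fixed} interpretation $\iota$, together with fixed $\mathcal S$-proofs of the $\iota$-translations of the finitely many axioms of $\mathcal S{+}\phi$.

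For the second stage I use Je\v{r}\'abek's proof-translation argument as in Pudl\'ak~\cite[Lemma~3.6]{PudlakFiniteDomain}. Take a putative $\mathcal S{+}\phi$-proof $\pi$ of $\bot$ of length at most $n$. Applying $\iota$ formula-by-formula gives an $\mathcal S$-proof of $\bot$. Each translated logical step costs only polynomial overhead, since translations of formulas of size $m$ have size $O(m)$ times a constant depending on $\iota$, and each translated axiom is handled by one of the fixed proofs. This yields a polynomial $p$ such that every $\mathcal S{+}\phi$-proof of $\bot$ of length $\le n$ becomes an $\mathcal S$-proof of $\bot$ of length $\le p(n)$. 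The key point is that this map is polynomial-time and its correctness is provable in $\Sone\subseteq\mathcal S$ by a proof of size polynomial in $n$. For each $n$ one formalizes the contrapositive, ``if $\mathcal S{+}\phi$ has a proof of $\bot$ of length $\le n$, then $\mathcal S$ has one of length $\le p(n)$,'' as a bounded statement with a uniform $\Sone$-proof of polynomial length. That statement is exactly $\Con_{\mathcal S}(p(n)){\rightarrow}\Con_{\mathcal S{+}\phi}(n)$. The ``in particular'' clause then follows by combining this with Pudl\'ak's finite consistency theorem $\mathcal S{\sststile{}{n^{O(1)}}}\Con_{\mathcal S}(p(n))$ via modus ponens. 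Here finite axiomatizability and sequentiality guarantee Pudl\'ak's theorem, and the resulting proof sizes compose polynomially. One can also invoke Theorem~\ref{thm:finite-consistency-equivalence} directly.

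The main obstacle is the uniformity in the second stage. One must check that the $\Sone$-verification that translation preserves proofhood, and the handling of cut-relativized quantifiers in $\iota$, can be carried out with proofs whose length is polynomial in $n$ rather than merely existing. I would first handle this by proving the translation lemma once, universally in $\Sone$ with $n$ as a free parameter, and then instantiating it at numerals $\underline n$. Instantiation costs only $O(\log n)$ plus a polynomial in the size of the numeral-manipulation proofs. Soundness and truth of $\phi$ are not needed for the translation itself; they only ensure the statements are true and so consistent with the intended reading.
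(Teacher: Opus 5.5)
Your proposal is correct and is essentially the paper's own argument. The Friedman--Visser criterion, via Visser's Interpretation Existence Lemma used as a black box exactly as the paper does, gives an interpretation of $\mathcal S{+}\phi$ in $\mathcal S$; Je\v{r}\'abek's proof translation, as in Pudl\'ak's Lemma~3.6, turns it into uniformly $\Sone$-provable, hence polynomial-size, proofs of $\Con_{\mathcal S}(p(n)){\rightarrow}\Con_{\mathcal S{+}\phi}(n)$; and Pudl\'ak's bounded self-consistency theorem yields the ``in particular'' clause by modus ponens.

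One correction to your gloss on the first step, which is not load-bearing: a consistent finitely axiomatized sequential $\mathcal S$ proves $\Con_{\mathcal S}$ on \emph{no} definable cut. That is exactly Pudl\'ak's cut version of the second incompleteness theorem, from the same 1985 work the paper uses for the non-interpretability of $\mathcal S{+}\Con_{\mathcal S}$ in $\mathcal S$. So that sketch should be deleted, since obtaining an interpretation from $\EA{\vdash}\Con_{\mathcal S}{\rightarrow}\Con_{\mathcal S{+}\phi}$ without such a cut is precisely the nontrivial content of the cited criterion.
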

The theorem packages the known positive mechanism in three steps. First, in the finitely axiomatized sequential setting, the Friedman--Visser interpretability criterion identifies $\EA{\vdash}\Con_{\mathcal S}{\rightarrow}\Con_{\mathcal S{+}\phi}$ with interpretability of $\mathcal S{+}\phi$ in $\mathcal S$; the direction from this weak-base relative-consistency implication to an interpretation is supplied by Visser's Interpretation Existence Lemma~\cite{Visser2017}. Second, Je\v{r}\'abek's proof-translation argument, as presented by Pudl\'ak~\cite[Lemma~3.6]{PudlakFiniteDomain}, turns such an interpretation into polynomial-size $\mathcal S$-proofs of $\Con_{\mathcal S}(p(n)){\rightarrow}\Con_{\mathcal S{+}\phi}(n)$ for some polynomial $p$. Third, Pudl\'ak's bounded self-consistency theorem~\cite{Pudlak1986length} supplies polynomial-size $\mathcal S$-proofs of the antecedent $\Con_{\mathcal S}(p(n))$. Thus weak-base relative consistency is not merely an abstract explanation; under these hypotheses, it is a concrete source of simulation.
\begin{figure}[t]
\centering
\small
\begin{tikzpicture}[x=1cm,y=1cm,>=Latex]
  \begin{scope}
    \fill[camblue!35] (0.08,0.18) rectangle (3.15,7.02);
    \fill[black!9] (3.32,0.18) rectangle (11.25,7.02);
  \end{scope}
  \draw[very thick,oxblue] (0,0) -- (0,7.20);
  \node[anchor=east,font=\Large\bfseries] at (-0.16,3.60) {$\mathcal S$};
  \draw[very thick,oxblue] (3.25,0) -- (3.25,7.20);
  \node[font=\large,align=center,text width=3.05cm] at (1.63,6.22) {$\mathcal S$ interprets $\mathcal S{+}\phi$};
  \node[circle,draw=oxblue,very thick,fill=white,minimum size=4.85cm,align=center] (C) at (7.15,4.05) {};
  \node[align=center,font=\large\bfseries,text width=3.4cm] at (7.15,5.32) {Computable~$\phi$};
  \node[circle,draw=hardred,thick,fill=white,minimum size=1.05cm,align=center,font=\large\bfseries] (ConN) at (7.15,4.05) {$\Con_{\mathcal S}$};
  \draw[->,thick] (5.42,4.05) -- (ConN.west);
  \draw[->,thick] (8.88,4.05) -- (ConN.east);
  \draw[->,thick] (7.15,2.20) -- (ConN.south);
  \draw[->,thick] (7.15,5.90) -- (ConN.north);
  \node[draw=none,fill=none,inner sep=4pt,font=\bfseries\normalsize,align=center,text width=3.10cm] at (1.62,0.82) {Easy $\phi$\\[-1pt]\footnotesize $\mathcal S{\sststile{}{n^{O(1)}}}\Con_{\mathcal S{+}\phi}(n)$};
  \node[draw=none,fill=none,inner sep=4pt,font=\bfseries\normalsize,align=center,text width=5.85cm] at (7.20,0.74) {Hard $\phi?$\\[-1pt]\footnotesize $\forall c{:}\mathcal S{\centernot{\sststile{}{n^c}}}\Con_{\mathcal S{+}\phi}(n)$};
\end{tikzpicture}
\caption{Current knowledge about simulation by $\mathcal S$. The left region is the known easy zone: when $\mathcal S$ interprets $\mathcal S{+}\phi$, one obtains feasible bounded-consistency proofs (Theorem~\ref{thm:feasible-relative-consistency}). The middle region records the open problem for computable true $\phi$, with $\Con_{\mathcal S}$ as the central test case: Khaniki~\cite{Khaniki} shows that if a computable jump operator exists, then Pudl\'ak's conjecture holds, so computable hard-jump behavior transfers to the canonical consistency extension.}
\label{fig:current-knowledge}
\end{figure}

\subsection{The Higher Relative Consistency and Feasible Reflection Conjectures}
\label{subsec:hrc}

The preceding subsection gives the positive mechanism: if the relative consistency of $\mathcal S{+}\phi$ over $\mathcal S$ is visible over the weak base~$\EA$, then $\mathcal S$ can simulate the extension $\mathcal S{+}\phi$. The converse is the central conjectural principle. It says that simulation should occur only when there is such a weak-base explanation.

\begin{conjecture}[Higher Relative Consistency, \textbf{HRC}]
\label{conj:HRC}
Let $\mathcal S{\supseteq}\Sone$ be a sound, finitely axiomatized sequential theory, and let $\phi$ be a true sentence. If $\EA{\not\vdash}\Con_{\mathcal S}{\rightarrow}\Con_{\mathcal S{+}\phi}$, then for every constant $c{>}0$, $\mathcal S\centernot{\sststile{}{n^c}}\Con_{\mathcal S{+}\phi}(n)$.\footnote{We expect the non-simulation implications of \textbf{HRC}/\textbf{FR}, \textbf{KH}, and related conjectures to extend to sound effective theories beyond the finitely axiomatized sequential class; no corresponding biconditional characterization is claimed there.}
\end{conjecture}

Thus \textbf{HRC} is the proposed exact converse to the known positive mechanism: no simulation without a weak-base relative-consistency explanation. Equivalently, if $\mathcal S$ has polynomial-size proofs of the bounded consistency statements $\Con_{\mathcal S{+}\phi}(n)$, then the reason should already be visible over~$\EA$ as $\EA{\vdash}\Con_{\mathcal S}{\rightarrow}\Con_{\mathcal S{+}\phi}$.

In this sense \textbf{HRC} is tight against the unconditional positive result: it says that the known relative-consistency route is not merely sufficient but exhaustive, so that complexity theorists are awarded complete information about which theory extensions can be simulated. Later assumptions attempt to extend this maximality principle beyond bounded consistency to the other resource regimes studied below.

This equivalent contrapositive is the form called \textbf{Feasible Reflection}. It asserts that external simulation by $\mathcal S$ must reflect an internal proof-theoretic explanation over the weak base~$\EA$. 

\begin{conjecture}\label{conj:feasible-reflection}
(\textbf{Feasible Reflection (FR)}) Let $\mathcal S{\supseteq}\Sone$ be a sound, finitely axiomatized, sequential theory, and let $\phi$ be a true sentence. If $\mathcal S{\sststile{}{n^{O(1)}}}\Con_{\mathcal S{+}\phi}(n)$, then $\EA{\vdash}\Con_{\mathcal S}{\rightarrow}\Con_{\mathcal S{+}\phi}$.
\end{conjecture}

Although this may appear to assert a powerful new form of induction, its intended force is more modest: it is a no-cheating principle. If $\mathcal S$ has polynomial-size proofs of the bounded consistency statements for $\mathcal S{+}\phi$, then those proofs should not be exploiting $\phi$ as unexplained external information; the relative consistency of the extension should already be visible over~$\EA$.

\begin{proposition}
\label{prop:hrc-fr-equivalence}
\textbf{HRC} (Conjecture~\ref{conj:HRC}) is equivalent to \textbf{FR} (Conjecture~\ref{conj:feasible-reflection}).
\end{proposition}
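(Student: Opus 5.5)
The equivalence is a matter of propositional logic: \textbf{FR} is the contrapositive of \textbf{HRC} under the same quantifier prefix. The plan is to fix the common hypotheses and then compare the two implications instance by instance. So fix a sound, finitely axiomatized, sequential theory $\mathcal S{\supseteq}\Sone$ and a true sentence $\phi$. Both conjectures are universally quantified over exactly this class of pairs $(\mathcal S,\phi)$. It therefore suffices to show that, for each such pair, the \textbf{HRC} instance and the \textbf{FR} instance are logically equivalent.

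For a fixed pair, write $A$ for the statement $\EA{\vdash}\Con_{\mathcal S}{\rightarrow}\Con_{\mathcal S{+}\phi}$ and $B$ for $\mathcal S{\sststile{}{n^{O(1)}}}\Con_{\mathcal S{+}\phi}(n)$. The \textbf{HRC} instance reads: if $\neg A$, then for every $c{>}0$, $\mathcal S\centernot{\sststile{}{n^c}}\Con_{\mathcal S{+}\phi}(n)$. The \textbf{FR} instance reads $B\rightarrow A$. The one step needing care is to identify the consequent of \textbf{HRC} with $\neg B$.

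By the Definition in \S2.1, $B$ means that the family $(\Con_{\mathcal S{+}\phi}(n))_n$ has $\mathcal S$-proofs of size bounded by some polynomial. Any polynomial bound is dominated by $n^c$ for some constant $c$ and all sufficiently large $n$. Conversely, a bound $n^c$ is itself polynomial. So $B$ is equivalent to $\exists c\,\mathcal S{\sststile{}{n^c}}\Con_{\mathcal S{+}\phi}(n)$.

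There is a small point about finitely many exceptional $n$. The family-level notation should be read as ``for all sufficiently large $n$'', or else the finitely many small cases can be absorbed by enlarging the constant $c$, since each fixed $\Con_{\mathcal S{+}\phi}(n)$ is a true sentence with some proof. I will state this convention explicitly. This step is the only place anything beyond bookkeeping is needed, and I expect it to be the main, if minor, obstacle.

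With this identification, the consequent of \textbf{HRC} is $\forall c\,\neg(\mathcal S{\sststile{}{n^c}}\Con_{\mathcal S{+}\phi}(n))$, which is exactly $\neg B$. Hence the \textbf{HRC} instance is $\neg A\rightarrow\neg B$, whose contrapositive is $B\rightarrow A$, the \textbf{FR} instance. Quantifying over all admissible pairs $(\mathcal S,\phi)$ gives \textbf{HRC}$\leftrightarrow$\textbf{FR}.

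No earlier theorem is needed. In particular, Theorem~\ref{thm:feasible-relative-consistency} supplies the converse direction $A\rightarrow B$, which belongs to neither conjecture. Together with either conjecture it yields the biconditional $A\leftrightarrow B$ on this class, and I would note this as a remark after the proof.
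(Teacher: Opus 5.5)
Your proposal is correct and is essentially the paper's proof: the paper likewise notes that the two conjectures are contrapositives once the consequent of \textbf{HRC} is identified with the negation of $\mathcal S{\sststile{}{n^{O(1)}}}\Con_{\mathcal S{+}\phi}(n)$, and your explicit treatment of that identification spells out what the paper states in one line. Adopt the ``all sufficiently large $n$'' convention (the paper fixes this in the footnote to Definition~\ref{def:simulation-game}) rather than absorbing small $n$ by enlarging $c$: at $n=1$ one has $n^c=1$ for every $c$, so enlarging $c$ cannot cover that case.
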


\begin{proof}
The two statements are contrapositives. \textbf{HRC} says that if $\EA{\not\vdash}\Con_{\mathcal S}{\rightarrow}\Con_{\mathcal S{+}\phi}$, then for every constant $c{>}0$, $\mathcal S$ lacks $n^c$-size proofs of $\Con_{\mathcal S{+}\phi}(n)$. Negating the conclusion says exactly that $\mathcal S{\sststile{}{n^{O(1)}}}\Con_{\mathcal S{+}\phi}(n)$. Therefore its contrapositive says that if $\mathcal S{\sststile{}{n^{O(1)}}}\Con_{\mathcal S{+}\phi}(n)$, then $\EA{\vdash}\Con_{\mathcal S}{\rightarrow}\Con_{\mathcal S{+}\phi}$, which is \textbf{FR}. Conversely, contraposing \textbf{FR} gives \textbf{HRC}.
\end{proof}

The formal \textbf{HRC}/\textbf{FR} biconditional is stated for sound, finitely axiomatized, sequential theories because this is the class on which the known positive direction supplies the corresponding characterization. The information-constraint intuition is broader. We expect the hardness implication---failure of weak-base relative consistency implies non-simulation---to apply to any sound effective theory for which the proof predicate and bounded-consistency statements are suitably arithmetized, whether or not the theory is finitely axiomatized or sequential. We likewise intend the hardness directions of \textbf{KH}, \textbf{SETH-K-Finite}, and the later pairwise, hierarchy-level, and bridge principles to extend to their natural broader theory classes. The paper does not assert that the positive or biconditional directions extend without additional hypotheses.

This matches the positive direction, Theorem~\ref{thm:feasible-relative-consistency}, on exactly the same class. If $\EA{\vdash}\Con_{\mathcal S}{\rightarrow}\Con_{\mathcal S{+}\phi}$, then the interpretability and proof-translation mechanism yields feasible relative consistency for $\mathcal S{+}\phi$ over $\mathcal S$, and hence $\mathcal S{\sststile{}{n^{O(1)}}}\Con_{\mathcal S{+}\phi}(n)$. Therefore \textbf{HRC}/\textbf{FR} says that this known mechanism is not merely sufficient but exhaustive for sound, finitely axiomatized sequential theories.

Monroe~\cite{MonroeCharacterizing} shows that, for finitely axiomatized sequential $\mathcal S$, \textbf{HRC} implies Pudl\'ak's Conjecture~\cite{Pudlak1986length}: $\mathcal S$ does not simulate $\mathcal S{+}\Con_{\mathcal S}$. Indeed, if $\EA{\vdash}\Con_{\mathcal S}{\rightarrow}\Con_{\mathcal S{+}\Con_{\mathcal S}}$, then the Friedman--Visser interpretability criterion for finitely axiomatized sequential theories, in the direction supplied by Visser's Interpretation Existence Lemma~\cite{Visser2017}, would make $\mathcal S$ interpret $\mathcal S{+}\Con_{\mathcal S}$. This contradicts Pudl\'ak's theorem that no consistent sequential theory interprets the theory obtained by adjoining its own consistency statement~\cite{Pudlak1985Cuts}; hence the weak base cannot prove $\Con_{\mathcal S}{\rightarrow}\Con_{\mathcal S{+}\Con_{\mathcal S}}$, and \textbf{HRC} gives $\mathcal S{\centernot{\sststile{}{n^c}}}\Con_{\mathcal S{+}\Con_{\mathcal S}}(n)$ for every constant $c{>}0$.
\begin{figure}[t]
\centering
\small
\vspace{0.6em}
\begin{tikzpicture}[box/.style={draw,rounded corners,minimum width=4.8cm,minimum height=1.2cm,align=center,font=\large},hlabel/.style={font=\small,fill=white,inner sep=1.5pt}]
\node[box] (ULbox) at (0,3) {$\mathcal S$ interprets $\mathcal S{+}\phi$};
\node[box] (LLbox) at (0,0) {$\mathcal S{\sststile{}{n^{O(1)}}}\Con_{\mathcal S{+}\phi}(n)$};
\node[box] (URbox) at (8,3) {$\EA{\vdash}\Con_{\mathcal S}{\to}\Con_{\mathcal S{+}\phi}$};
\node[box] (LRbox) at (8,0) {$\mathcal S{\sststile{}{n^{O(1)}}}\Con_{\mathcal S}(p(n)){\to}\Con_{\mathcal S{+}\phi}(n)$};
\draw[->,very thick] (ULbox) -- node[right,font=\small] {Je\v{r}\'abek via Pudl\'ak} (LLbox);
\draw[<->,very thick] (ULbox) -- node[midway,above=18pt,hlabel] {Friedman--Visser~\cite{Visser2017}} (URbox);
\draw[<->,very thick] (LLbox) -- node[midway,below=18pt,hlabel] {Theorem~\ref{thm:finite-consistency-equivalence}} (LRbox);
\draw[->,very thick] ([xshift=-10pt]URbox.south) -- node[left,font=\small] {Theorem~\ref{thm:feasible-relative-consistency}} ([xshift=-10pt]LRbox.north);
\draw[->,very thick,dashed,hardred] ([xshift=10pt]LRbox.north) -- node[right,font=\small\bfseries,text=hardred] {HRC/FR} ([xshift=10pt]URbox.south);
\end{tikzpicture}
\caption{Motivating \textbf{HRC}/\textbf{FR}. The dashed arrow is the proposed converse to Theorem~\ref{thm:feasible-relative-consistency}: efficient finite relative-consistency proofs should require a weak-base proof of the corresponding relative-consistency implication. Under the horizontal equivalences, this also entails the corresponding left-hand converse from simulation to interpretability. The figure assumes that $\mathcal S{\supseteq}\Sone$ is sound, finitely axiomatized, and sequential, that $\phi$ is true, and that $\mathcal S{\sststile{}{n^{O(1)}}}\Con_{\mathcal S}(n)$.
}
\label{figureMotivatingHRCFR}
\end{figure}
\subsection{The Busy Beaver Information Constraint}
\label{subsec:busy-beaver-information}

The preceding subsection gives the positive mechanism: weak-base relative consistency yields simulation. The complementary question is whether failures of simulation can be witnessed by canonical sources of inaccessible information. Theorem~3.10 of Monroe~\cite{MonroeCharacterizing} shows that they can: any hard true extension can be transferred to exact Busy Beaver value statements.

For each $k$, let $t_k{=}BB(k)$ be the true $k$-state Busy Beaver value, and let $\phi_{BB}(k)$ be the true sentence asserting $BB(k){=}t_k$. The mechanism is simple. Given a computably axiomatized true theory $\mathcal T$, one builds a contradiction-search machine that enumerates $\mathcal T$-proofs and halts exactly if it finds a contradiction. For all sufficiently large $k$, this search can be represented by a $k$-state machine, and the exact value $\phi_{BB}(k)$ lets $\Sone{+}\phi_{BB}(k)$ verify that the search never halts. Thus sufficiently large Busy Beaver facts imply $\Con_{\mathcal T}$ over $\Sone$.

\begin{theorem}[{\cite[Theorem~3.10]{MonroeCharacterizing}}]
\label{thm:busy-beaver-transfer}
For a sound theory $\mathcal S{\supseteq}\Sone$, the following are equivalent.
\begin{enumerate}
\item There exists a true sentence $\phi$ such that, for every constant $c{>}0$, $\mathcal S\centernot{\sststile{}{n^c}}\Con_{\mathcal S{+}\phi}(n)$.
\item For all sufficiently large $k$ and every constant $c{>}0$, $\mathcal S\centernot{\sststile{}{n^c}}\Con_{\Sone{+}\phi_{BB}(k)}(n)$.
\end{enumerate}
\end{theorem}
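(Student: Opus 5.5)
The direction (2)$\Rightarrow$(1) is immediate. Fix a large $k$ for which (2) holds and take $\phi:=\phi_{BB}(k)$, which is a true sentence. Since $\mathcal S\supseteq\Sone$, every $(\Sone{+}\phi)$-proof of contradiction is also an $(\mathcal S{+}\phi)$-proof of contradiction. The inclusion is witnessed by a fixed, feasibly verifiable embedding. So $\mathcal S$ proves $\Con_{\mathcal S+\phi}(n)\rightarrow\Con_{\Sone+\phi}(n)$ with proofs of size polynomial in $n$; this is a routine formalization of ``axioms of $\Sone$ are axioms of $\mathcal S$'' on proofs of length $\le n$. Hence polynomial-size proofs of $\Con_{\mathcal S+\phi}(n)$ would give polynomial-size proofs of $\Con_{\Sone+\phi_{BB}(k)}(n)$, which contradicts (2). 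Here I would check that the exponent $c$ only shifts by a constant, which is harmless because (2) quantifies over all $c$.

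For (1)$\Rightarrow$(2), fix the true $\phi$ from (1) and put $\mathcal T:=\mathcal S{+}\phi$, a true theory with polynomial-time decidable axioms, hence computably axiomatized. Build the contradiction-search machine $M_{\mathcal T}$ as described before the theorem. It has some fixed number $k_0$ of states, and by padding it has a $k$-state version for every $k\ge k_0$. Fix $k\ge k_0$. The key claim is that $\Sone{+}\phi_{BB}(k)$ proves $\Con_{\mathcal T}$, and moreover that it proves each bounded statement $\Con_{\mathcal T}(n)$ through a uniform, feasible argument. The reasoning is that $\phi_{BB}(k)$ says no halting $k$-state machine runs longer than $t_k$. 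Since $M_{\mathcal T}$ does not halt (by soundness), one would like $\Sone$ to prove that $M_{\mathcal T}$ does not halt within $t_k$ steps, and then derive non-halting ever. The catch is that the finite check ``$M_{\mathcal T}$ does not halt within $t_k$ steps'' is a single true $\Delta_0$ fact about a fixed huge number. I would build this fact into the conclusion rather than prove it: the strengthened sentence $\phi_{BB}(k)$ asserts the exact value and, as the paper's informal description indicates, lets $\Sone{+}\phi_{BB}(k)$ verify non-halting. In what follows I take $\Sone{+}\phi_{BB}(k)\vdash\Con_{\mathcal T}$ as the content of that description, with one fixed proof $\pi_k$.

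Given that, proceed by contraposition. Suppose that for some $c$ we have $\mathcal S\sststile{}{n^c}\Con_{\Sone+\phi_{BB}(k)}(n)$. Since $\Sone{+}\phi_{BB}(k)\vdash\Con_{\mathcal T}$ via the fixed proof $\pi_k$, any $\mathcal T$-contradiction of length $n$ yields an $(\Sone{+}\phi_{BB}(k))$-contradiction of length $p(n)$. One obtains it by combining $\pi_k$ with the formalized ($\Sone$-provable) fact that a length-$n$ proof of $\bot$ refutes $\Con_{\mathcal T}$. This transformation is polynomial and $\Sone$-verifiable, so $\mathcal S$ proves $\Con_{\Sone+\phi_{BB}(k)}(p(n))\rightarrow\Con_{\mathcal S+\phi}(n)$ in size $\poly(n)$. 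Combining this with the assumed short proofs gives $\mathcal S\sststile{}{n^{O(1)}}\Con_{\mathcal S+\phi}(n)$, which contradicts (1). Because this works for every $k\ge k_0$ and every $c$, statement (2) follows.

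The main obstacle is the uniform transfer step. One must formalize in $\mathcal S$, at polynomial cost in $n$, the passage from a short $\mathcal T$-refutation to a short $(\Sone{+}\phi_{BB}(k))$-refutation. This hinges on $\pi_k$ being a single fixed constant-size proof and on $\Sone$ proving, uniformly in $n$, that a $\mathcal T$-proof of $\bot$ of length $n$ contradicts $\Con_{\mathcal T}$. I would handle it with the standard provable-$\Sigma_1$-completeness of $\Sone$ for polynomial-time proof checking, as in Pudl\'ak's bounded-consistency translations.
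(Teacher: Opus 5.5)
Your proposal is correct and follows the paper's own route. For (2)$\Rightarrow$(1) you specialize to $\phi:=\phi_{BB}(k)$ and use $\Sone\subseteq\mathcal S$; for (1)$\Rightarrow$(2) you pad the contradiction search for $\mathcal S{+}\phi$ to $k$ states so that $\Sone{+}\phi_{BB}(k)$ proves $\Con_{\mathcal S+\phi}$ with one fixed proof $\pi_k$, and the standard polynomial transfer of bounded-consistency proofs does the rest. The ``catch'' you flag is not an obstacle, since non-halting of the padded machine within $t_k$ steps is a true statement about a single fixed finite computation, so $\Sone$ proves it outright with a proof whose size depends on $k$ but not on $n$---which is exactly what makes $\pi_k$ independent of $n$---and nothing needs to be built into $\phi_{BB}(k)$.
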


Thus, if $\mathcal S$ fails to simulate some true extension $\mathcal S{+}\phi$, then the failure can already be witnessed by the canonical Busy Beaver extensions $\Sone{+}\phi_{BB}(k)$ for all sufficiently large~$k$. Conversely, any such Busy Beaver instance is itself a true extension, so Busy Beaver hardness is a special case of hard true extensions. This gives the no-optimal-proof-system hypothesis an information-constraint reading: theories face an unavoidable barrier to exploiting canonical Busy Beaver truths. It also shows that \textbf{KH} is not an isolated randomness conjecture. Busy Beaver facts and Kolmogorov-randomness facts are different sources of inaccessible true information, but both test the same simulation obstruction: simulation should require weak-base explanatory access to the information supplied by the added axiom.
\begin{figure}[t]
\centering
\small
\begin{tikzpicture}[x=1cm,y=1cm,>=Latex,chartlabel/.style={font=\bfseries\small,align=center},chbox/.style={draw,rounded corners,fill=white,align=center,inner sep=4pt},chredbox/.style={draw=hardred,rounded corners,fill=white,align=center,inner sep=4pt,text=hardred}]
  \fill[camblue!35] (0.1,0.25) rectangle (3.1,6.65);
  \fill[midgray!55] (3.1,0.25) rectangle (7.75,6.65);
  \fill[hardred!10] (7.75,0.25) rectangle (11.4,6.65);
  \draw[very thick,oxblue] (0,0) -- (0,6.9);
  \draw[very thick,oxblue] (3.1,0) -- (3.1,6.9);
  \draw[very thick,decorate,decoration={zigzag,segment length=6pt,amplitude=3pt},hardred] (7.75,0) -- (7.75,6.9);
  \draw[-{Latex[length=2.6mm]},thick,black!60] (0.1,0.05) -- (11.55,0.05);
  \node[anchor=east,font=\Large\bfseries] at (-0.18,3.45) {$\mathcal S$};
  \node[chartlabel,text width=2.6cm] at (1.55,6.1) {Proved easy mechanism};
  \node[chartlabel,text width=3.0cm] at (9.65,6.1) {Canonical hard information};
  \node[chbox,text width=2.55cm,font=\scriptsize] at (1.55,4.7) {$\EA{\vdash}\Con_{\mathcal S}{\rightarrow}$ $\Con_{\mathcal S{+}\phi}$};
  \node[font=\scriptsize\bfseries,align=center,text width=2.5cm] at (1.55,3.28) {$\Downarrow$\\simulation};
  \node[chredbox,text width=2.6cm,font=\scriptsize] at (9.65,4.75) {Proved canonical tail:\\$\phi_{BB}(k)$ for sufficiently large $k$};
  \node[font=\scriptsize\bfseries,align=center,text width=2.95cm] at (1.55,0.72) {Easy $\phi$};
  \node[font=\scriptsize\bfseries,align=center,text width=3.55cm] at (5.4,0.72) {Hard $\phi$?};
  \node[font=\scriptsize\bfseries,align=center,text width=3.05cm] at (9.65,0.72) {Canonical hard $\phi$\\[-1pt](if hard $\phi$ exist)};
\end{tikzpicture}
\caption{The Busy Beaver frontier. The left region is the proved easy mechanism; the right region is the proved canonical hard tail supplied by Theorem~\ref{thm:busy-beaver-transfer}: if any true extension of $\mathcal S$ is hard, then the fixed family $\phi_{BB}(k)$ is hard for all sufficiently large~$k$. The middle region is the territory the conjectures of Section~\ref{subsec:hrc} propose to settle.}
\label{fig:bb-frontier}
\end{figure}

\subsection{The Kolmogorov Hardness Conjecture}
Fix a deterministic universal Turing machine~$U$, no time bound, and a constant $d{\geq}3$. Let $R_U{:=}\{x{:}K_U(x){\geq}|x|{-}d\log|x|\}$, the set of Kolmogorov-random strings with logarithmic deficiency at most $d\log|x|$; for background on Kolmogorov complexity and the random set $R$, see Li and Vit{\'a}nyi~\cite{LiVitanyiBook}.\footnote{This convention intentionally differs from the fixed-additive-deficiency convention used in the companion paper~\cite{MonroeCharacterizing}. The simulation-theoretic principles have the same form in the two papers, while the logarithmic-deficiency convention used here allows the density conclusions to be stated directly for $R$.} We write $R$ for~$R_U$ once $U$ is fixed. The same simulation-theoretic conjectures can be formulated using other standard incompressibility predicates, but the resulting instances need not be equivalent because the predicates select different families of added axioms. The logarithmic-deficiency convention is used here because almost all strings of each length lie in $R$, making the quantitative density conclusions direct; a different randomness convention may require corresponding changes to the density estimates. By Chaitin-style incompleteness~\cite{ChaitinIncompleteness}, any sound computably axiomatized extension of arithmetic proves $x{\in}R$ for only finitely many true instances.
\begin{conjecture}\label{conj:kolmogorov-hardness}
(\textbf{Kolmogorov Hardness (KH)}) Let $\mathcal S{\supseteq}\Sone$ be a sound, finitely axiomatized, sequential theory. For every $x{\in}R$ in the standard model, one has $\mathcal S{\sststile{}{n^{O(1)}}}\Con_{\mathcal S{+}(x{\in}R)}(n)$ if and only if $\EA{\vdash}\Con_{\mathcal S}{\rightarrow}\Con_{\mathcal S{+}(x{\in}R)}$.\footnote{We expect the hardness direction of \textbf{KH} to extend to sound computably axiomatized ambient theories, whose sufficiently long true randomness facts remain inaccessible by Chaitin incompleteness.}
\end{conjecture}
\textbf{KH} is not independent of \textbf{HRC}; it is a distinguished random-axiom instance of the same general obstruction. If adding $x{\in}R$ is locally safe because $x{\in}R$ is true, but the weak base cannot certify the corresponding relative-consistency implication, then $\mathcal S$ should not nevertheless have short proofs certifying the bounded consistency of $\mathcal S{+}(x{\in}R)$. In particular, under \textbf{KH}, failure of $\EA{+}\Con_{\mathcal S}$ to prove $x{\in}R$ remains a sufficient obstruction to simulation, by Lemma~\ref{lem:no-access-no-relative-consistency} below, but the biconditional threshold is weak-base relative consistency.

\begin{theorem}[{\cite[Theorem~5.2(4,5)]{MonroeCharacterizing}}]
\label{theoremHRCImpliesKH}
\textbf{HRC} implies \textbf{KH}. Moreover, assume \textbf{HRC}, let $\mathcal S{\supseteq}\Sone$ be a sound, finitely axiomatized, sequential theory, and let $x{\in}R$ hold in the standard model. If $\EA{+}\Con_{\mathcal S}{\not\vdash}x{\in}R$, then $\mathcal S\centernot{\sststile{}{n^{O(1)}}}\Con_{\mathcal S{+}(x{\in}R)}(n)$.
\end{theorem}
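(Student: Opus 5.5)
The plan has two parts, matching the two claims of the theorem.

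\emph{First claim: \textbf{HRC} implies \textbf{KH}.} Fix a sound, finitely axiomatized, sequential $\mathcal S{\supseteq}\Sone$ and $x{\in}R$ true in the standard model. Set $\phi:=(x{\in}R)$; this is a true sentence, so both Theorem~\ref{thm:feasible-relative-consistency} and \textbf{HRC} apply to it.
\begin{itemize}
\item For the ``if'' direction of \textbf{KH}, invoke Theorem~\ref{thm:feasible-relative-consistency} unconditionally. From $\EA{\vdash}\Con_{\mathcal S}{\rightarrow}\Con_{\mathcal S{+}\phi}$ it gives $\mathcal S{\sststile{}{n^{O(1)}}}\Con_{\mathcal S{+}\phi}(n)$.
\item For the ``only if'' direction, use the contrapositive form \textbf{FR}, which is equivalent by Proposition~\ref{prop:hrc-fr-equivalence}. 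Simulation of $\mathcal S{+}\phi$ by $\mathcal S$ forces $\EA{\vdash}\Con_{\mathcal S}{\rightarrow}\Con_{\mathcal S{+}\phi}$.
\end{itemize}
So this part is pure specialization and needs no randomness-specific input.

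\emph{Second claim.} Assume $\EA{+}\Con_{\mathcal S}{\nvdash}x{\in}R$. By \textbf{HRC}, it suffices to show $\EA{\nvdash}\Con_{\mathcal S}{\rightarrow}\Con_{\mathcal S{+}(x{\in}R)}$; this is presumably the content of the paper's Lemma~\ref{lem:no-access-no-relative-consistency}. I would argue contrapositively: suppose $\EA{\vdash}\Con_{\mathcal S}{\rightarrow}\Con_{\mathcal S{+}(x{\in}R)}$, and derive $\EA{+}\Con_{\mathcal S}{\vdash}x{\in}R$.

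The key observation is that $x{\notin}R$ is a $\Sigma_1$ sentence: there is a program $p$ with $|p|<|x|-d\log|x|$ and $U(p)=x$. The argument then proceeds inside $\EA{+}\Con_{\mathcal S}$.
\begin{enumerate}
\item Suppose $x{\notin}R$.
\item By provable $\Sigma_1$-completeness of $\Sone{\subseteq}\mathcal S$, formalized in $\EA$, we get $\mathrm{Prov}_{\mathcal S}(\ulcorner x{\notin}R\urcorner)$.
\item Hence $\mathcal S{+}(x{\in}R)$ proves a contradiction, i.e.\ $\neg\Con_{\mathcal S{+}(x{\in}R)}$.
\item But $\Con_{\mathcal S}$ together with the assumed $\EA$-implication yields $\Con_{\mathcal S{+}(x{\in}R)}$, a contradiction.
\end{enumerate}
Therefore $\EA{+}\Con_{\mathcal S}{\vdash}x{\in}R$, contradicting the hypothesis. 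So the weak-base implication fails, and \textbf{HRC} gives that for every $c$, $\mathcal S\centernot{\sststile{}{n^c}}\Con_{\mathcal S{+}(x{\in}R)}(n)$. This is exactly the failure of $\mathcal S{\sststile{}{n^{O(1)}}}\Con_{\mathcal S{+}(x{\in}R)}(n)$.

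\emph{Expected obstacle.} The delicate step is step~2: formalizing provable $\Sigma_1$-completeness in $\EA$ for the specific $\Sigma_1$ formula expressing $K_U(x)<|x|-d\log|x|$. The concern is that the computation witness $U(p)=x$ has unbounded length. I expect this to be fine because $\EA$ proves the formalized $\Sigma_1$-completeness of $\Sone$ for all $\Sigma_1$ sentences, with exponential proof blowup acceptable in $\EA$. I would cite that standard fact rather than re-derive it. The numeral coding of $x$ and the choice of $d$ play no role beyond making $x{\in}R$ a $\Pi_1$ sentence.
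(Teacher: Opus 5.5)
Your proposal is correct and follows the paper's route. The first claim is the specialization $\phi:=(x{\in}R)$ of \textbf{HRC}/\textbf{FR}, combined with the unconditional positive direction of Theorem~\ref{thm:feasible-relative-consistency}; the second claim is \textbf{HRC} applied after the same contrapositive argument the paper isolates as Lemma~\ref{lem:no-access-no-relative-consistency}, which, like you, treats $\neg(x{\in}R)$ as a $\Sigma_1$ sentence and uses provable $\Sigma_1$-completeness over $\EA$ (H\'ajek--Pudl\'ak), so the argument is internal and uniform in the possibly nonstandard witness.
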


\begin{lemma}\label{lem:no-access-no-relative-consistency}
For each fixed string $x$, if $\EA{+}\Con_{\mathcal S}{\not\vdash}x{\in}R$, then $\EA{\not\vdash}\Con_{\mathcal S}{\rightarrow}\Con_{\mathcal S{+}(x{\in}R)}$.
\end{lemma}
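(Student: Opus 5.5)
We argue by contraposition: assuming $\EA{\vdash}\Con_{\mathcal S}{\rightarrow}\Con_{\mathcal S{+}(x{\in}R)}$, we derive $\EA{+}\Con_{\mathcal S}{\vdash}x{\in}R$. The idea is that for a fixed string $x$, the sentence $x{\in}R$ is (equivalent over $\EA$ to) a $\Pi^0_1$ sentence, namely ``no program $p$ with $|p|{<}|x|{-}d\log|x|$ outputs $x$ on $U$''. Its negation $x{\notin}R$ is therefore $\Sigma^0_1$. We then use the fact that $\EA$ (indeed any theory containing $\Sone$) is $\Sigma^0_1$-complete. Moreover, $\mathcal S{\supseteq}\Sone$ proves every true $\Sigma^0_1$ sentence, and this fact is itself formalizable in $\EA$ as provable $\Sigma^0_1$-completeness.

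\emph{Step 1.} Work inside $\EA{+}\Con_{\mathcal S}$ and argue by cases on the $\Pi^0_1$ sentence $x{\in}R$. Suppose $x{\notin}R$. Then some witness program $p$ and a halting computation exist, and formalized $\Sigma^0_1$-completeness gives $\mathrm{Prov}_{\mathcal S}(\ulcorner x{\notin}R\urcorner)$. This uses the version of formalized $\Sigma_1$-completeness available in $\EA$ for $\Sone$-style theories, since the computation has some (possibly large) length and $\EA$ has exponentiation to code the proof.

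\emph{Step 2.} $\mathcal S{+}(x{\in}R)$ trivially proves $x{\in}R$. Combined with an $\mathcal S$-proof of $x{\notin}R$, this yields $\mathrm{Prov}_{\mathcal S+(x\in R)}(\ulcorner\bot\urcorner)$, i.e.\ $\neg\Con_{\mathcal S{+}(x{\in}R)}$, provably in $\EA$ under the case hypothesis. From the assumed $\EA{\vdash}\Con_{\mathcal S}{\rightarrow}\Con_{\mathcal S{+}(x{\in}R)}$ and the ambient hypothesis $\Con_{\mathcal S}$, we get $\Con_{\mathcal S{+}(x{\in}R)}$, which is a contradiction. Hence $\EA{+}\Con_{\mathcal S}{\vdash}\neg(x{\notin}R)$, that is, $x{\in}R$, contradicting the hypothesis of the lemma.

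\emph{Main obstacle.} The only delicate point is Step 1: verifying that the formalization of $K_U(x){\ge}|x|{-}d\log|x|$ chosen in the paper really has $\Pi^0_1$ form, with a $\Delta_0$ matrix relative to the fixed $U$, and that $\EA$ proves the formalized $\Sigma^0_1$-completeness $\sigma{\rightarrow}\mathrm{Prov}_{\mathcal S}(\ulcorner\sigma\urcorner)$ for $\Sigma^0_1$ sentences $\sigma$. The latter is standard for $\EA$ and theories containing Robinson's $Q$, since $\Sone$ contains $Q$. Because $x$ is a fixed numeral, there are no uniformity issues.
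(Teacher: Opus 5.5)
Your proof is correct and follows essentially the same route as the paper. Both arguments contrapose and use that $\neg(x{\in}R)$ is $\Sigma_1$, with formalized $\Sigma_1$-completeness in $\EA$, to obtain $\EA\vdash\neg(x{\in}R)\rightarrow\neg\Con_{\mathcal S{+}(x{\in}R)}$, then compose this with the assumed implication $\Con_{\mathcal S}\rightarrow\Con_{\mathcal S{+}(x{\in}R)}$. The only cosmetic differences are that you apply $\Sigma_1$-completeness to $\mathcal S$ and lift the proof to the extension (phrased as an internal case split), whereas the paper applies it directly to $\mathcal S{+}(x{\in}R)$.
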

\begin{proof}
For each fixed $x$, $\EA$ proves $\Con_{\mathcal S{+}(x{\in}R)}{\rightarrow}x{\in}R$, uniformly in~$x$. The assertion $\neg(x{\in}R)$ is $\Sigma_1$: it is witnessed by a program of length below the randomness threshold together with a halting computation outputting~$x$. By provable $\Sigma_1$-completeness over $\EA$ for the polynomial-time axiomatized theory $\mathcal S{+}(x{\in}R)$ (H\'ajek--Pudl\'ak~\cite{HajekPudlak}), $\EA$ proves that any such witness yields an $\mathcal S{+}(x{\in}R)$-proof of $\neg(x{\in}R)$; since $x{\in}R$ is an axiom, formalized modus ponens turns this into an $\mathcal S{+}(x{\in}R)$-proof of contradiction. The argument is internal and uniform in the witness, with no case split on whether $x{\in}R$ holds in the standard model. Hence $\EA{\vdash}\neg(x{\in}R){\rightarrow}\neg\Con_{\mathcal S{+}(x{\in}R)}$, so if $\EA{\vdash}\Con_{\mathcal S}{\rightarrow}\Con_{\mathcal S{+}(x{\in}R)}$, then $\EA{+}\Con_{\mathcal S}{\vdash}x{\in}R$. The displayed claim is the contrapositive.
\end{proof}

The definition of $R$ fixes a universal machine $U$. The next result establishes robustness of the density and finite-exception estimates under a change of universal machine. It does not, by itself, transfer the associated simulation claims, because $x{\in}R_U^{(d)}$ and $x{\in}R_{U'}^{(d')}$ are distinct arithmetical axioms.

\begin{theorem}
\label{thm:robustness}
Let $U,U'$ be universal machines with Kolmogorov invariance constant $c=c(U,U')$, so that $|K_U(x)-K_{U'}(x)|\le c$ for every $x$. Fix $d\ge3$ and write $R_U^{(d)}:=\{x:K_U(x)\ge |x|-d\log|x|\}$. Then, for all sufficiently long $x$,
\[
x\in R_U^{(d)}\Longrightarrow x\in R_{U'}^{(d+1)}
\qquad\text{and}\qquad
x\in R_{U'}^{(d)}\Longrightarrow x\in R_U^{(d+1)}.
\]
Consequently, the density estimates and conclusions depending only on those estimates are robust under changing the universal machine, after increasing the logarithmic-deficiency parameter and disregarding finitely many lengths.
\end{theorem}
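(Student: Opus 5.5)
The plan is a direct calculation from the invariance inequality, followed by a remark on what "density estimates" depend on. Take $x\in R_U^{(d)}$ and write $n=|x|$. From $K_{U'}(x)\ge K_U(x)-c$ we get
\[
K_{U'}(x)\ge n-d\log n-c .
\]
To conclude $x\in R_{U'}^{(d+1)}$ it suffices that $n-d\log n-c\ge n-(d+1)\log n$, that is, $\log n\ge c$. This holds for all $n\ge 2^{c}$. The reverse implication is identical with the roles of $U$ and $U'$ swapped, since the invariance bound is symmetric. The threshold $2^c$ is explicit and computable from $c$, so "sufficiently long" means $|x|\ge 2^{c(U,U')}$. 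Only finitely many lengths, and hence finitely many strings, are excluded.

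For the consequence, I would first isolate the density estimates used later. The standard counting argument gives
\[
|\{0,1\}^n\setminus R_U^{(d)}|<2^{\,n-d\log n+1}=2\cdot 2^n/n^{d}.
\]
This holds for every universal machine, because there are fewer than $2^{m}$ programs of length below $m$. The bound therefore holds verbatim for $R_{U'}^{(d+1)}$, with the slightly better factor $n^{-(d+1)}$. Combining the inclusion $R_U^{(d)}\cap\{0,1\}^{\ge 2^c}\subseteq R_{U'}^{(d+1)}$ with the $U$-estimate shows the fraction of length-$n$ strings lying in $R_{U'}^{(d+1)}$ is at least $1-2n^{-d}$. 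The same holds in the other direction. So any conclusion that uses only "all but an $O(n^{-d})$ fraction of each length lies in $R$," or "only finitely many exceptions," transfers after increasing $d$ by one and discarding lengths below $2^c$.

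The main obstacle is not mathematical but one of scope. I must make sure the ``consequently'' clause claims nothing about simulation. The sets change and the axioms $x\in R_U^{(d)}$ versus $x\in R_{U'}^{(d+1)}$ are different sentences, as the text preceding the theorem stresses. I will state explicitly that only counting and finite-exception statements are transferred. I will also note that $d\ge 3$ is preserved, since $d+1\ge 3$, so the transferred sets still fall under the paper's convention.
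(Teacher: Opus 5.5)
Your proposal is correct and is essentially the paper's proof. Invariance gives $K_{U'}(x)\ge |x|-d\log|x|-c$, the extra $\log|x|$ absorbs $c$ once $|x|\ge 2^{c}$, the reverse direction is symmetric, and the density clause rests on the counting bound, which holds for any universal machine. One small remark: that counting bound applies directly to $R_{U'}^{(d+1)}$, so the inclusion is not needed for the density estimate itself; its role is to relate membership in the two sets. Your restriction of the ``consequently'' clause away from simulation claims matches the caveat the paper states right after the theorem.
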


\begin{proof}
If $x\in R_U^{(d)}$, then $K_{U'}(x)\ge K_U(x)-c\ge |x|-d\log|x|-c$. For all sufficiently long $x$, one has $c\le\log|x|$, and therefore $K_{U'}(x)\ge |x|-(d+1)\log|x|$. Thus $x\in R_{U'}^{(d+1)}$. The other implication is symmetric. The standard counting bound for either machine gives density $1-O(m^{-d})$ on length-$m$ strings, with only the deficiency parameter and finitely many lengths affected by the change of machine.
\end{proof}

Accordingly, \textbf{KH}, \textbf{SETH-K-Finite}, and their pairwise variants are formulated relative to the fixed universal machine and deficiency parameter. A machine-invariance result for the simulation claims would additionally require a weak-base formalization transferring consistency and proof-length statements between the corresponding random-axiom extensions; no such transfer is asserted here.
\subsection{A Theorist vs. Nature Game: Simulation Without Hidden Information}
\label{subsec:game}

The conjectures above have a game-theoretic reading, announced in the introduction, that makes the organizing informational constraint explicit: simulation should never draw on information that the weak base cannot certify or explain. The game locates the constraint in a player, so that \textbf{HRC} becomes precisely the assertion that Nature holds no hidden information.

\begin{definition}[Simulation game]
\label{def:simulation-game}
Let $\mathcal C_{\mathrm{fs}}$ be the class of sound, finitely axiomatized sequential theories extending $\Sone$, and let $\Phi$ be a family of true sentences. In the game $G(\Phi)$, Nature selects $\mathcal S{\in}\mathcal C_{\mathrm{fs}}$; the Theorist, seeing $\mathcal S$, selects $\phi{\in}\Phi$. Nature wins the round if $\mathcal S{\sststile{}{n^{O(1)}}}\Con_{\mathcal S{+}\phi}(n)$, and the Theorist wins if, for every constant $c$, $\mathcal S\centernot{\sststile{}{n^c}}\Con_{\mathcal S{+}\phi}(n)$. Exactly one player wins each round.\footnote{The outcomes are complementary: the negation of ``there is a constant $c$ with proofs of size at most $n^c$ for all sufficiently large $n$'' is ``for every $c$ there are infinitely many $n$ without proofs of size $n^c$,'' which is the hardness form used throughout.}
\end{definition}

Nature models a candidate optimal proof system together with the simulations it performs; the Theorist models a complexity theorist seeking hard tautology families. Restricting Nature to $\mathcal C_{\mathrm{fs}}$ does not remove the proof-system applications: under the standard correspondence between Cook--Reckhow proof systems and arithmetic theories, candidate proof systems may be represented by sound, finitely axiomatized sequential theories extending $\Sone$; see Kraj\'{\i}\v{c}ek--Pudl\'ak~\cite{Krajicek}. The nonexistence of an optimal proof system therefore corresponds to the Theorist having a winning reply to every legal Nature move.

\begin{definition}[Explanation and hidden information]
\label{def:explanation}
A position $(\mathcal S,\phi)$ is \emph{explained} if $\EA{\vdash}\Con_{\mathcal S}{\rightarrow}\Con_{\mathcal S{+}\phi}$; the $\EA$-proof is the explanation, a certificate any third party can check without reference to the internals of~$\mathcal S$. Nature \emph{holds hidden information} at $(\mathcal S,\phi)$ if Nature wins at an unexplained position.
\end{definition}

\begin{proposition}[Explained positions are Nature wins]
\label{prop:explained-wins}
If $\mathcal S$ is finitely axiomatized and sequential and the position $(\mathcal S,\phi)$ is explained, then Nature wins at $(\mathcal S,\phi)$, by the canonical strategy: the explanation yields an interpretation of $\mathcal S{+}\phi$ in $\mathcal S$, whose proof translation supplies the required bounded-consistency proofs.
\end{proposition}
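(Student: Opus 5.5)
The plan is to reduce the proposition directly to Theorem~\ref{thm:feasible-relative-consistency}, unpacking that theorem into the three-stage mechanism described after its statement. By Definition~\ref{def:simulation-game}, Nature wins at $(\mathcal S,\phi)$ exactly when $\mathcal S{\sststile{}{n^{O(1)}}}\Con_{\mathcal S{+}\phi}(n)$. By Definition~\ref{def:explanation}, an explained position supplies $\EA{\vdash}\Con_{\mathcal S}{\rightarrow}\Con_{\mathcal S{+}\phi}$. Legal Nature moves lie in $\mathcal C_{\mathrm{fs}}$, and positions use $\phi{\in}\Phi$, a family of true sentences. So the hypotheses of Theorem~\ref{thm:feasible-relative-consistency} all hold: $\mathcal S{\supseteq}\Sone$ is finitely axiomatized and sequential, and $\phi$ is true. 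Its ``in particular'' clause then gives the polynomial-size proofs that constitute a Nature win.

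To exhibit the canonical strategy named in the statement, rather than only the bare implication, I would spell out the three steps in order.

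First, apply Visser's Interpretation Existence Lemma~\cite{Visser2017}, the relative-consistency-to-interpretation direction of the Friedman--Visser criterion for finitely axiomatized sequential theories. This turns the $\EA$-explanation into an interpretation $\tau$ of $\mathcal S{+}\phi$ in $\mathcal S$.

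Second, use Je\v{r}\'abek's proof translation as in Pudl\'ak~\cite[Lemma~3.6]{PudlakFiniteDomain}. Since $\mathcal S{+}\phi$ is finitely axiomatized, the translations of its axioms are finitely many fixed $\mathcal S$-theorems. Translating a putative $\mathcal S{+}\phi$-refutation of length $\le n$ gives an $\mathcal S$-refutation of length $\le p(n)$. Formalizing this translation uniformly yields polynomial-size $\mathcal S$-proofs of $\Con_{\mathcal S}(p(n)){\rightarrow}\Con_{\mathcal S{+}\phi}(n)$.

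Third, discharge the antecedent with Pudl\'ak's bounded self-consistency theorem~\cite{Pudlak1986length}. It gives polynomial-size $\mathcal S$-proofs of $\Con_{\mathcal S}(p(n))$, since a polynomial in $p(n)$ is still polynomial in $n$. One modus ponens step then yields $\mathcal S{\sststile{}{n^{O(1)}}}\Con_{\mathcal S{+}\phi}(n)$.

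The only delicate point is the second step. One must check that the size bound on the translated proofs, and on the $\mathcal S$-proof of the implication, is uniform in $n$ with a fixed polynomial. This is where finite axiomatizability of $\mathcal S{+}\phi$ is used: no unboundedly many axiom translations have to be certified. Since this is exactly the content packaged into Theorem~\ref{thm:feasible-relative-consistency}, which we may assume, I would cite it there rather than re-derive it. The proposition then follows immediately, and the named strategy is just that theorem's proof read as a recipe for Nature.
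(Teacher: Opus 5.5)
Your proposal is correct and follows the paper, whose proof is just the citation of Theorem~\ref{thm:feasible-relative-consistency}; your hypothesis check and your three-step unpacking (Interpretation Existence Lemma, Je\v{r}\'abek's translation via Pudl\'ak, Pudl\'ak's bounded self-consistency theorem) match the mechanism the paper describes right after that theorem. One small correction to your closing remark: finite axiomatizability is used in the first and third steps as well, not only in the translation step, since the Friedman--Visser criterion is stated for finitely axiomatized sequential theories and Pudl\'ak's theorem is applied to the finitely axiomatized $\mathcal S$.
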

\begin{proof}
This is Theorem~\ref{thm:feasible-relative-consistency}.
\end{proof}

\textbf{HRC} asserts that Nature never holds hidden information: every Nature win occurs at an explained position. The contrast with the Revelation Principle is instructive. The Revelation Principle operates in a world where private information exists: it restructures the mechanism so that the information is truthfully revealed, and the informed party retains an information rent. \textbf{HRC} and \textbf{FR} do not restructure simulations so that private information is revealed; they assert that it does not exist in the first place---\emph{inexplicable efficiency does not exist}, matching the \emph{No Inexplicable Efficiency} principle of the companion paper~\cite{MonroeCharacterizing}. There is accordingly no rent for Nature to collect: under \textbf{HRC}, Proposition~\ref{prop:explained-wins} shows that, on the finitely axiomatized sequential class, Nature wins exactly at the explained positions.

\begin{proposition}[The random-string strategy]
\label{prop:random-strategy}
Assume \textbf{HRC}, and let the Theorist play true random axioms, $\Phi{=}\{x{\in}R:x{\in}R\ \text{is true}\}$. Then against every Nature move $\mathcal S$, at most finitely many rounds are Nature wins.
\end{proposition}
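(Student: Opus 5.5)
Fix an arbitrary Nature move $\mathcal S{\in}\mathcal C_{\mathrm{fs}}$ and suppose the Theorist replies with a true axiom $x{\in}R$. The plan is to show that a Nature win at $(\mathcal S,x{\in}R)$ forces $\mathcal S$ to certify the randomness of $x$ over the weak base, and then to use Chaitin incompleteness to show that only finitely many strings $x$ can be certified in this way.

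First, suppose Nature wins at $(\mathcal S,x{\in}R)$, that is, $\mathcal S{\sststile{}{n^{O(1)}}}\Con_{\mathcal S{+}(x{\in}R)}(n)$. By \textbf{HRC}, in its equivalent \textbf{FR} form (Proposition~\ref{prop:hrc-fr-equivalence}), the position is explained: $\EA{\vdash}\Con_{\mathcal S}{\rightarrow}\Con_{\mathcal S{+}(x{\in}R)}$. The contrapositive of Lemma~\ref{lem:no-access-no-relative-consistency} then gives $\EA{+}\Con_{\mathcal S}{\vdash}x{\in}R$. Equivalently, we may cite the second part of Theorem~\ref{theoremHRCImpliesKH} directly: if $\EA{+}\Con_{\mathcal S}{\nvdash}x{\in}R$, then Nature loses. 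So every Nature win lies in the set $W_{\mathcal S}=\{x:\ x{\in}R \text{ is true and } \EA{+}\Con_{\mathcal S}{\vdash}x{\in}R\}$.

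Second, I show that $W_{\mathcal S}$ is finite. The theory $\mathcal T=\EA{+}\Con_{\mathcal S}$ is computably axiomatized, indeed finitely axiomatized. It is also sound, because $\mathcal S$ is sound and therefore $\Con_{\mathcal S}$ is true. Chaitin's incompleteness theorem, in the form of the footnote in the introduction, applies to $\mathcal T$. It yields a constant $c_{\mathcal T}$ such that $\mathcal T$ proves no true statement $K_U(x){>}c_{\mathcal T}$. Hence $\mathcal T$ proves $x{\in}R$ only for strings with $|x|{<}\chi_{\mathcal T,R}$, where $\chi_{\mathcal T,R}$ is the computable threshold beyond which $n{-}d\log n{>}c_{\mathcal T}$. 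There are finitely many strings below this length, so at most finitely many Theorist replies are Nature wins.

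The step I expect to need the most care is this application of Chaitin's theorem. The argument needs soundness of $\EA{+}\Con_{\mathcal S}$ rather than of $\mathcal S$ alone, and it must handle the logarithmic-deficiency form of $R$. Soundness follows from the soundness of $\mathcal S$. For the deficiency form, I would check that a proof of $x{\in}R$ yields a proof of $K_U(x){\ge}|x|{-}d\log|x|$, which exceeds $c_{\mathcal T}$ once $|x|$ is large. Everything else is a direct chaining of results already stated in the paper.
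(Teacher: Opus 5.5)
Your proof is correct and follows the paper's route. The paper places every Nature win in the finite set $C_{\mathrm{rel}}^{\mathcal S}$ of Lemma~\ref{lem:finite-certified-randomness}, and that lemma's finiteness proof consists of exactly your two steps: the contrapositive of Lemma~\ref{lem:no-access-no-relative-consistency}, then Chaitin incompleteness for the sound theory $\EA{+}\Con_{\mathcal S}$; you have inlined that lemma, working with the slightly larger finite set $W_{\mathcal S}\supseteq C_{\mathrm{rel}}^{\mathcal S}$.
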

\begin{proof}
Under \textbf{HRC}, a Nature win at $(\mathcal S,x{\in}R)$ occurs at an explained position, so $x$ lies in the finite set $C_{\mathrm{rel}}^{\mathcal S}$ of Lemma~\ref{lem:finite-certified-randomness}. Every remaining round is therefore a Theorist win. This is Theorem~\ref{theoremHRCImpliesKH} in game form.
\end{proof}

Theorem~\ref{thm:busy-beaver-transfer} gives a related canonicalization, with one important change of target theory: if some position $(\mathcal S,\phi)$ is a Theorist win, then, for all sufficiently large $k$, the theory $\mathcal S$ also fails to simulate the fixed weak-base target $\Sone{+}\phi_{BB}(k)$. Thus Busy Beaver values supply a canonical hard target family, although the target is $\Sone{+}\phi_{BB}(k)$ rather than the literal game extension $\mathcal S{+}\phi_{BB}(k)$. Separately, a computable strategy of the jump-operator form considered by Khaniki~\cite{Khaniki} implies Pudl\'ak's Conjecture, so the distinguished extension $\mathcal S{+}\Con_{\mathcal S}$ is hard.

Two further analogies from the economics of information locate the conjectures; they operate at the level of problem formulation rather than as formal correspondences. First, in disclosure games with verifiable private information, equilibrium forces full unraveling: a party holding favorable verifiable information discloses it, and silence itself becomes informative. \textbf{HRC} asserts the proof-theoretic analogue in a stronger form: not that hidden leverage is revealed in equilibrium, but that unexplained positions carry no feasible proof leverage at all. The disanalogy is deliberate: in a disclosure game, revelation is an act the informed party performs; here nothing is revealed, because the explanation is an $\EA$-proof that either exists or does not, and there is no private stage at which Nature holds the information unshared. Second, when hidden information cannot be disclosed, mechanism design prices it as an information rent; here the observable counterpart of the rent is proof length, the excess $\mathcal S$ must pay on $\Con_{\mathcal S{+}\phi}(n)$ precisely when no weak-base explanation exists. \textbf{KH} evaluates this on the random-axiom family: the explainable true random axioms form a finite set, so under the conjecture the excess is superpolynomial for all but finitely many of them.

\section{Extensions of \textbf{KH} to Finite Scale and \textbf{PH} Noncollapse}
\label{sec:technical}
This section extends \textbf{KH} to the finite scale and to noncollapse of \textbf{PH}.
\subsection{Finite-Scale Hardness}\label{subsec:finitescale}

The \textbf{KH} conjecture is asymptotic: for each fixed true random axiom $x{\in}R$, it asks whether $\mathcal S$ has polynomial-size proofs of the family $\Con_{\mathcal S{+}(x{\in}R)}(n)$. This subsection develops a strictly stronger finite-scale form intended to address the question: do even the strongest proof systems have ubiquitous small hard tautologies, even if chosen at random?

Two length parameters must be distinguished. The parameter $m{:=}|x|$ measures the information content of the added random axiom, while $n$ is the bounded-consistency parameter. For each sufficiently large axiom length $m$ and each desired exponential saving $\epsilon$, the conjecture asks for one threshold $N_{R,m,\epsilon}^{\mathcal S}$ that works uniformly for every true random string $x$ of length $m$. Beyond that threshold, every statement $\Con_{\mathcal S{+}(x{\in}R)}(n)$ is required to have nearly exponential proof complexity unless the extension is already explained by weak-base relative consistency.

Allowing $N_{R,m,\epsilon}^{\mathcal S}$ to depend on the axiom length and on the desired exponential saving avoids a simple pathology. For a particular sentence $\phi$, a theory $\mathcal S$ might contain an axiom asserting $\Con_{\mathcal S{+}\phi}(10^{100})$, or otherwise contain exceptional information about finitely many bounded-consistency instances. Such examples do not explain uniformly recurring short proofs beyond a common threshold, for each fixed $\epsilon$, for all random axioms of a fixed sufficiently large length.

The resulting conjecture is stronger than \textbf{KH} and is not claimed to follow from \textbf{HRC}/\textbf{FR}. Its purpose is to sharpen the information-constraint picture from asymptotic polynomial non-simulation to nearly exponential hardness beyond a uniform length-dependent onset.

Fix an admissible theory $\mathcal S$, and let
$C_{\mathrm{rel}}^{\mathcal S}{:=}\{x{\in}R:\EA{\vdash}\Con_{\mathcal S}{\rightarrow}\Con_{\mathcal S{+}(x{\in}R)}\}$.

\begin{lemma}
\label{lem:finite-certified-randomness}
The set $C_{\mathrm{rel}}^{\mathcal S}$ is finite. Hence there is an integer $k_R^{\mathcal S}$ such that, whenever $|x|{>}k_R^{\mathcal S}$ and $x{\in}R$ is true, one has $\EA{\not\vdash}\Con_{\mathcal S}{\rightarrow}\Con_{\mathcal S{+}(x{\in}R)}$.
\end{lemma}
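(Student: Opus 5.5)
The plan is to reduce finiteness of $C_{\mathrm{rel}}^{\mathcal S}$ to Chaitin incompleteness for the single sound theory $\EA{+}\Con_{\mathcal S}$, using Lemma~\ref{lem:no-access-no-relative-consistency}. Its proof shows that, for every fixed $x$, $\EA{\vdash}\neg(x{\in}R){\rightarrow}\neg\Con_{\mathcal S{+}(x{\in}R)}$. Hence, if $x\in C_{\mathrm{rel}}^{\mathcal S}$, so that $\EA{\vdash}\Con_{\mathcal S}{\rightarrow}\Con_{\mathcal S{+}(x{\in}R)}$, we may chain the two implications to get $\EA{+}\Con_{\mathcal S}{\vdash}x{\in}R$. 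This gives the inclusion
\[
C_{\mathrm{rel}}^{\mathcal S}\subseteq\{x: x{\in}R \text{ is true and } \EA{+}\Con_{\mathcal S}{\vdash}x{\in}R\}.
\]

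Next we check that the theory $\EA{+}\Con_{\mathcal S}$ meets the hypotheses of Chaitin's theorem. It is computably (indeed finitely) axiomatized. It is sound, because $\mathcal S$ is sound, so $\Con_{\mathcal S}$ is true and $\EA$ holds in the standard model. The Chaitin argument recalled in the introduction's footnote then applies. From an index for this axiom set and the fixed universal machine $U$, it yields a constant $c$ such that the theory proves no true statement of the form $K_U(x){>}c$. Since $x{\in}R$ unfolds to $K_U(x){\ge}|x|{-}d\log|x|$, one chooses $\chi$ with $n{-}d\log n{>}c$ for all $n{\ge}\chi$. Then every true provable instance $x{\in}R$ has $|x|{<}\chi$. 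A small point needs checking: soundness is what excludes the case where the theory proves a false $x{\in}R$, so every provable instance is also true.

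The main obstacle is making the formalization precise. One must confirm that the arithmetized sentence $x{\in}R$ is literally the $\Pi_1$ sentence to which the Chaitin argument applies, so that a proof of it for long $x$ lets the Chaitin search machine output $x$ with a short description. I would handle this exactly as in the footnote, working with the same formula $K_U(x){\ge}|x|{-}d\log|x|$ throughout. With that in place, the finiteness of $C_{\mathrm{rel}}^{\mathcal S}$ follows, since only finitely many strings have length below $\chi$.

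Finally, set $k_R^{\mathcal S}:=\max\{|x|:x\in C_{\mathrm{rel}}^{\mathcal S}\}$, or $k_R^{\mathcal S}:=\chi$, or $0$ if the set is empty. If $|x|{>}k_R^{\mathcal S}$ and $x{\in}R$ is true, then $x\notin C_{\mathrm{rel}}^{\mathcal S}$, which by definition means $\EA{\not\vdash}\Con_{\mathcal S}{\rightarrow}\Con_{\mathcal S{+}(x{\in}R)}$. Admissibility of $\mathcal S$ (soundness together with polynomial-time decidable axioms) is all that this argument uses.
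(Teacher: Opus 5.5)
Your proposal is correct and follows the paper's proof. Lemma~\ref{lem:no-access-no-relative-consistency} places $C_{\mathrm{rel}}^{\mathcal S}$ inside the set of true randomness facts proved by the sound theory $\EA{+}\Con_{\mathcal S}$, Chaitin incompleteness makes that set finite, and $k_R^{\mathcal S}$ is chosen to bound the lengths of its elements; your extra detail on soundness and the threshold $\chi$ only spells out the footnote's Chaitin argument, which the paper cites without elaboration.
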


\begin{proof}
By Lemma~\ref{lem:no-access-no-relative-consistency}, if $\EA{\vdash}\Con_{\mathcal S}{\rightarrow}\Con_{\mathcal S{+}(x{\in}R)}$, then $\EA{+}\Con_{\mathcal S}{\vdash}x{\in}R$. Thus $C_{\mathrm{rel}}^{\mathcal S}$ is contained in the set of true randomness assertions proved by the fixed sound theory $\EA{+}\Con_{\mathcal S}$. Chaitin-style incompleteness~\cite{ChaitinIncompleteness} implies that this latter set is finite. Let $k_R^{\mathcal S}$ exceed the maximum length of its elements.
\end{proof}

For every $m{>}k_R^{\mathcal S}$ and every $\epsilon$ with $0{<}\epsilon{<}1$, the threshold $N_{R,m,\epsilon}^{\mathcal S}$ below may depend on $\mathcal S$, the fixed predicate $R$, the axiom length $m$, and $\epsilon$, but not on the particular string $x$ or on the later bounded-consistency parameter $n$. It may absorb the least length of a valid proof string and the syntactic cost of mentioning an axiom $x{\in}R$ of length $m$.\footnote{Equivalently, one may require $N_{R,m,\epsilon}^{\mathcal S}$ to exceed $\lambda m$ for a fixed coding constant $\lambda$ large enough that an axiom $x{\in}R$ of length $m$ can be mentioned in the relevant proof calculus.}

\begin{conjecture}[\textbf{SETH-K-Finite}]
\label{conj:SETHKfinite}
Assume $\mathcal S$ is finitely axiomatized and sequential, and assume $\mathcal S\sststile{}{n^{O(1)}}\Con_{\mathcal S}(n)$. For every $m{>}k_R^{\mathcal S}$ and every $\epsilon$ with $0{<}\epsilon{<}1$, there exists $N_{R,m,\epsilon}^{\mathcal S}$ such that, for every true $x{\in}R\cap\{0,1\}^m$ and every $n{>}N_{R,m,\epsilon}^{\mathcal S}$, one has $\mathcal S\vdash_{\leq 2^{(1-\epsilon)n}}\Con_{\mathcal S{+}(x{\in}R)}(n)$ if and only if $\EA{\vdash}\Con_{\mathcal S}{\rightarrow}\Con_{\mathcal S{+}(x{\in}R)}$.
\end{conjecture}

The threshold $N_{R,m,\epsilon}^{\mathcal S}$ may depend on $\mathcal S$, the fixed predicate $R$, the axiom length $m$, and the desired exponential saving $\epsilon$, but not on the particular string $x$ or on the later bounded-consistency parameter $n$. This is the usual \textbf{SETH}-style quantifier pattern: for every fixed exponential saving, hardness holds beyond an onset that may depend on that saving.

The right-hand side is exactly the weak-base relative-consistency threshold proposed by \textbf{HRC}/\textbf{FR}. By Lemma~\ref{lem:finite-certified-randomness}, it is false throughout the range $m{>}k_R^{\mathcal S}$. Thus, on the quantified tail, \textbf{SETH-K-Finite} is equivalently the assertion that, for every $m{>}k_R^{\mathcal S}$ and every $\epsilon$ with $0{<}\epsilon{<}1$, there exists $N_{R,m,\epsilon}^{\mathcal S}$ such that, for every true $x{\in}R\cap\{0,1\}^m$ and every $n{>}N_{R,m,\epsilon}^{\mathcal S}$, one has $\mathcal S\nvdash_{\leq 2^{(1-\epsilon)n}}\Con_{\mathcal S{+}(x{\in}R)}(n)$. The biconditional is retained because it identifies the same explanatory boundary as \textbf{HRC}/\textbf{FR}.

The quantifier order is important. For each axiom length $m$ and each fixed $\epsilon$ with $0{<}\epsilon{<}1$, one common threshold must work for every true random string of that length. This string-uniformity, rather than uniformity in $\epsilon$, is what yields the density consequence below. The use of $2^{(1-\epsilon)n}$, rather than merely $2^{\epsilon n}$ for some fixed $\epsilon$, expresses the intended \textbf{SETH}-style claim that no fixed exponential saving should be available.

\textbf{SETH-K-Finite} can be read as a formalization of the folklore intuition that ``for any proof system most formulas are hard,'' while avoiding the naive proposal that Kraj\'{\i}\v{c}ek describes as ``void'' for random DNFs~\cite[Ch.~19]{Krajicekproof}; see also Pitassi~\cite{Pitassi2023bYoutube}. The tautologies below are not obtained by sampling formulas directly. They are canonical translations of bounded-consistency statements indexed by true incompressible strings.

\subsubsection{Density}

The first consequence of \textbf{SETH-K-Finite} is a density theorem for single random-axiom extensions. This paper defines $R$ using logarithmic deficiency, so the set of true strings $x{\in}R$ already has density tending to $1$ inside each length. More precisely, all but an $O(m^{-d})$ fraction of strings of length $m$ lie in $R$.

\begin{theorem}
\label{theoremdensity}
Assume \textbf{SETH-K-Finite}. For every $m{>}k_R^{\mathcal S}$ and every $\epsilon$ with $0{<}\epsilon{<}1$, there exists $N_{R,m,\epsilon}^{\mathcal S}$ such that, for every $n{>}N_{R,m,\epsilon}^{\mathcal S}$, at least $(1{-}O(m^{-d}))2^m$ strings $x{\in}\{0,1\}^m$ satisfy $x{\in}R$ and $\mathcal S\nvdash_{\leq 2^{(1-\epsilon)n}}\Con_{\mathcal S{+}(x{\in}R)}(n)$.
\end{theorem}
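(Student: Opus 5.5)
The plan combines two ingredients: the counting bound for $R$, and the tail form of \textbf{SETH-K-Finite} in which the right-hand side of the biconditional has already been shown false.

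First, fix $m{>}k_R^{\mathcal S}$ and $\epsilon\in(0,1)$. Take $N_{R,m,\epsilon}^{\mathcal S}$ to be exactly the threshold supplied by \textbf{SETH-K-Finite} (Conjecture~\ref{conj:SETHKfinite}) for this $m$ and $\epsilon$. The threshold depends only on $\mathcal S$, $R$, $m$, $\epsilon$, which is precisely the quantifier pattern the statement asks for.

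Second, for every true $x\in R\cap\{0,1\}^m$, Lemma~\ref{lem:finite-certified-randomness} applies because $|x|=m>k_R^{\mathcal S}$. It gives $\EA{\not\vdash}\Con_{\mathcal S}{\rightarrow}\Con_{\mathcal S{+}(x{\in}R)}$, so the right-hand side of the biconditional in \textbf{SETH-K-Finite} fails. Hence the left-hand side fails for every $n{>}N_{R,m,\epsilon}^{\mathcal S}$, that is, $\mathcal S\nvdash_{\leq 2^{(1-\epsilon)n}}\Con_{\mathcal S{+}(x{\in}R)}(n)$. This is the tail reformulation already recorded after the conjecture. The key point is that the threshold is uniform in $x$, so one $N$ serves simultaneously for all these strings. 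Consequently, for each such $n$, the set of good strings contains all of $R\cap\{0,1\}^m$.

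Third, I bound $|R\cap\{0,1\}^m|$ by standard counting. The number of programs of length less than $m-d\log m$ is at most $2^{m-d\log m}=2^m m^{-d}$. So at most that many strings of length $m$ have $K_U(x)<m-d\log m$, and therefore $|R\cap\{0,1\}^m|\ge (1-m^{-d})2^m=(1-O(m^{-d}))2^m$. If $\log$ is not base $2$, or there are rounding issues, this only changes the implied constant.

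No step is a real obstacle. The only care needed is to note that string-uniformity of the threshold is what allows a single $N$ for all good strings, and that the hypotheses of \textbf{SETH-K-Finite} (finitely axiomatized, sequential, feasible self-consistency) are assumed to be in force for $\mathcal S$.
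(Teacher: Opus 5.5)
Your proposal is correct and matches the paper's proof. Both take the threshold from \textbf{SETH-K-Finite}, use Lemma~\ref{lem:finite-certified-randomness} to falsify the right-hand side of the biconditional for every true $x\in R\cap\{0,1\}^m$ (with the string-uniform threshold giving a single $N$ for all such $x$), and finish with the counting bound $|R\cap\{0,1\}^m|\ge(1-O(m^{-d}))2^m$, where rounding in the program count affects only the implied constant.
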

\begin{proof}
If $x{\notin}R$ and $|x|{=}m$, then $K_U(x){<}m{-}d\log m$. The number of programs shorter than $m{-}d\log m$ is $O(2^m/m^d)$, so at least $(1{-}O(m^{-d}))2^m$ strings of length $m$ lie in $R$.

Since $m{>}k_R^{\mathcal S}$, Lemma~\ref{lem:finite-certified-randomness} gives $\EA{\not\vdash}\Con_{\mathcal S}{\rightarrow}\Con_{\mathcal S{+}(x{\in}R)}$ for every true $x{\in}R\cap\{0,1\}^m$. For every $n{>}N_{R,m,\epsilon}^{\mathcal S}$, the conclusion follows from the biconditional in \textbf{SETH-K-Finite}.
\end{proof}

For $m{>}k_R^{\mathcal S}$, $0{<}\epsilon{<}1$, and $n{>}N_{R,m,\epsilon}^{\mathcal S}$, define $H_{m,n,\epsilon}^{\mathcal S}{:=}\{x{\in}R\cap\{0,1\}^m{:}\mathcal S\nvdash_{\leq 2^{(1-\epsilon)n}}\Con_{\mathcal S{+}(x{\in}R)}(n)\}$.

\begin{corollary}
\label{cor:no-intermediate-density}
Assume \textbf{SETH-K-Finite}. For every $m{>}k_R^{\mathcal S}$ and every $\epsilon$ with $0{<}\epsilon{<}1$, there exists $N_{R,m,\epsilon}^{\mathcal S}$ such that, for every $n{>}N_{R,m,\epsilon}^{\mathcal S}$, one has $|H_{m,n,\epsilon}^{\mathcal S}|{\geq}(1{-}O(m^{-d}))2^m$.
\end{corollary}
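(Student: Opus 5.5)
The corollary is essentially a restatement of Theorem~\ref{theoremdensity} in terms of the set $H_{m,n,\epsilon}^{\mathcal S}$, so the plan is to unpack the definition and invoke that theorem directly.

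Fix $m{>}k_R^{\mathcal S}$ and $\epsilon$ with $0{<}\epsilon{<}1$. Take $N_{R,m,\epsilon}^{\mathcal S}$ to be the threshold supplied by \textbf{SETH-K-Finite}, which is also the threshold produced in Theorem~\ref{theoremdensity}. This choice matters because the definition of $H_{m,n,\epsilon}^{\mathcal S}$ only makes sense for $n$ beyond that same threshold.

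For each $n{>}N_{R,m,\epsilon}^{\mathcal S}$, Theorem~\ref{theoremdensity} gives at least $(1{-}O(m^{-d}))2^m$ strings $x{\in}\{0,1\}^m$ satisfying both $x{\in}R$ and $\mathcal S\nvdash_{\leq 2^{(1-\epsilon)n}}\Con_{\mathcal S{+}(x{\in}R)}(n)$. These two conditions, together with $|x|{=}m$, are exactly the membership conditions for $H_{m,n,\epsilon}^{\mathcal S}$. So the set counted by the theorem is contained in $H_{m,n,\epsilon}^{\mathcal S}$ (indeed it equals it), and the lower bound on $|H_{m,n,\epsilon}^{\mathcal S}|$ follows.

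If one prefers to avoid citing the theorem, the same bound can be recovered from its two ingredients. First, counting programs shows $|R\cap\{0,1\}^m|\ge(1{-}O(m^{-d}))2^m$. Second, Lemma~\ref{lem:finite-certified-randomness} makes the right-hand side of the \textbf{SETH-K-Finite} biconditional false for every such $x$, so the left-hand side fails as well.

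The only real obstacle is bookkeeping: the $O(m^{-d})$ constant must be uniform, depending only on the counting argument for the fixed $U$ and $d$ and not on $n$ or $\epsilon$. This holds because the count of strings in $R$ does not involve $n$ or $\epsilon$ at all.
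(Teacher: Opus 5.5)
Your proposal is correct and matches the paper's proof, which simply observes that the corollary is the set-theoretic restatement of Theorem~\ref{theoremdensity}. The strings counted in that theorem are exactly the members of $H_{m,n,\epsilon}^{\mathcal S}$, under the same threshold $N_{R,m,\epsilon}^{\mathcal S}$.
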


\begin{proof}
This is the set-theoretic restatement of Theorem~\ref{theoremdensity}.
\end{proof}

Thus \textbf{SETH-K-Finite} rules out an intermediate-density picture in which only a sparse exceptional collection of random axioms gives hard bounded-consistency statements. After the onset $N_{R,m,\epsilon}^{\mathcal S}$ associated with a fixed exponential saving $\epsilon$, essentially every true random axiom of length $m$ yields a hard instance at every later consistency cutoff.

The density statement also produces dense families of small hard propositional tautologies. Fix a standard polynomial-time propositional translation of bounded-consistency statements, and let $\tau_{x,n}$ denote the translation of $\Con_{\mathcal S{+}(x{\in}R)}(n)$. If $x{\in}R$ is true and $|x|{=}m$, then $\mathcal S{+}(x{\in}R)$ is sound, so $\tau_{x,n}$ is a tautology. Its size is $(m{+}n)^{O(1)}$.

\begin{corollary}
\label{cor:dense-small-hard-tautologies}
Assume \textbf{SETH-K-Finite}, and assume that for some constant $a$ the chosen feasible correspondence between the bounded-consistency formalization and its propositional translation carries every propositional proof of $\tau_{x,n}$ of size $s$ to an $\mathcal S$-proof of $\Con_{\mathcal S{+}(x{\in}R)}(n)$ of size at most $s{\cdot}(m{+}n)^a$. For every $m{>}k_R^{\mathcal S}$ and every $\epsilon$ with $0{<}\epsilon{<}1$, there exists $\widetilde N_{R,m,\epsilon}^{\mathcal S}$ such that, for every $n{>}\widetilde N_{R,m,\epsilon}^{\mathcal S}$, at least $(1{-}O(m^{-d}))2^m$ strings $x{\in}\{0,1\}^m$ index tautologies $\tau_{x,n}$ of size $(m{+}n)^{O(1)}$ that require size greater than $2^{(1-\epsilon)n}$ in the propositional proof system associated with $\mathcal S$.
\end{corollary}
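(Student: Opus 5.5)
The plan is to derive Corollary~\ref{cor:dense-small-hard-tautologies} from Theorem~\ref{theoremdensity} (equivalently Corollary~\ref{cor:no-intermediate-density}) by contraposing the assumed proof-transfer bound. Two things then need checking: that the tautologies are small, and that the constant loss $(m{+}n)^a$ is absorbed by slightly adjusting the exponential saving.

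Fix $m{>}k_R^{\mathcal S}$ and $0{<}\epsilon{<}1$, and set $\epsilon'{:=}\epsilon/2$. Apply Theorem~\ref{theoremdensity} with saving $\epsilon'$. This gives $N_{R,m,\epsilon'}^{\mathcal S}$ such that, for every $n{>}N_{R,m,\epsilon'}^{\mathcal S}$, at least $(1{-}O(m^{-d}))2^m$ strings $x$ of length $m$ satisfy two conditions: $x{\in}R$, and $\mathcal S\nvdash_{\leq 2^{(1-\epsilon')n}}\Con_{\mathcal S{+}(x{\in}R)}(n)$.

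For each such $x$, the following facts hold.
\begin{enumerate}
\item Since $x{\in}R$ is true, $\mathcal S{+}(x{\in}R)$ is sound. Hence $\tau_{x,n}$ is a tautology, of size $(m{+}n)^{O(1)}$ by the fixed polynomial-time translation, as noted before the corollary.
\item Suppose $\tau_{x,n}$ had a proof of size $s{\le}2^{(1-\epsilon)n}$ in the propositional system associated with $\mathcal S$. The hypothesis would then yield an $\mathcal S$-proof of $\Con_{\mathcal S{+}(x{\in}R)}(n)$ of size at most $2^{(1-\epsilon)n}(m{+}n)^a$.
\end{enumerate}

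Now choose $\widetilde N_{R,m,\epsilon}^{\mathcal S}\ge N_{R,m,\epsilon'}^{\mathcal S}$ large enough that $(m{+}n)^a\le 2^{\epsilon n/2}$ for all $n{>}\widetilde N$. This is possible because $m$ and $a$ are fixed and $(m{+}n)^a$ grows polynomially in $n$ while $2^{\epsilon n/2}$ grows exponentially. For such $n$, the transferred proof would have size at most $2^{(1-\epsilon/2)n}=2^{(1-\epsilon')n}$, contradicting the second condition from Theorem~\ref{theoremdensity}. So every propositional proof of $\tau_{x,n}$ has size greater than $2^{(1-\epsilon)n}$, and the count $(1{-}O(m^{-d}))2^m$ carries over unchanged.

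The main obstacle is the bookkeeping of dependencies rather than any deep step. The threshold $\widetilde N$ must depend only on $\mathcal S,R,m,\epsilon$ (and the fixed constant $a$), not on $x$. That holds because both $N_{R,m,\epsilon'}^{\mathcal S}$ and the polynomial-absorption bound are uniform in $x$. The second point to check is that ``proof of size $s$'' and ``$\mathcal S\vdash_{\leq L}$'' measure length in compatible units, which is exactly what the assumed correspondence with constant $a$ provides.
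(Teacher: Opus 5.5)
Your proposal is correct and follows the paper's proof essentially step for step. Like the paper, you apply Theorem~\ref{theoremdensity} with saving $\epsilon/2$, enlarge the threshold so that $(m{+}n)^a$ is absorbed by $2^{\epsilon n/2}$, and use the assumed proof transfer to contradict the $2^{(1-\epsilon/2)n}$ lower bound. Your non-strict inequality $(m{+}n)^a\le 2^{\epsilon n/2}$ works as well as the paper's strict one, because $\nvdash_{\leq L}$ already excludes proofs of length exactly $L$.
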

\begin{proof}
Apply Theorem~\ref{theoremdensity} with $\epsilon/2$ in place of $\epsilon$. Thus there exists $N_{R,m,\epsilon/2}^{\mathcal S}$ such that, for every $n{>}N_{R,m,\epsilon/2}^{\mathcal S}$, at least $(1{-}O(m^{-d}))2^m$ strings $x$ satisfy $x{\in}R$ and
$\mathcal S\nvdash_{\leq 2^{(1-\epsilon/2)n}}\Con_{\mathcal S{+}(x{\in}R)}(n)$.

Choose $\widetilde N_{R,m,\epsilon}^{\mathcal S}{\geq}N_{R,m,\epsilon/2}^{\mathcal S}$ sufficiently large that $(m{+}n)^a{<}2^{\epsilon n/2}$ whenever $n{>}\widetilde N_{R,m,\epsilon}^{\mathcal S}$. Suppose that, for one of the strings supplied by Theorem~\ref{theoremdensity}, the tautology $\tau_{x,n}$ had a propositional proof of size at most $2^{(1-\epsilon)n}$. The assumed correspondence would yield an $\mathcal S$-proof of $\Con_{\mathcal S{+}(x{\in}R)}(n)$ of size at most
$2^{(1-\epsilon)n}(m{+}n)^a
<2^{(1-\epsilon/2)n}$,
contradicting Theorem~\ref{theoremdensity}. Therefore every such $\tau_{x,n}$ requires propositional proof size greater than $2^{(1-\epsilon)n}$.
\end{proof}
The single-axiom density theorem is a direct consequence of \textbf{SETH-K-Finite}. The next question is genuinely pairwise: whether adding one independently random axiom can help prove the bounded consistency of a different random-axiom extension.

\subsubsection{Pairwise Hardness and No Mutual Help}

Ordinary \textbf{SETH-K-Finite} concerns proofs in $\mathcal S$ of $\Con_{\mathcal S{+}(x{\in}R)}(n)$. It does not determine whether the different theory $\mathcal S{+}(y{\in}R)$ can help prove that same statement. Genuine no mutual help therefore requires a pairwise strengthening.

For strings $x,y{\in}\{0,1\}^m$, write $x{\mathrel{\perp_R}}y$ if $x,y{\in}R$, $K_U(x{\mid}y){\geq}m{-}d\log m$, and $K_U(y{\mid}x){\geq}m{-}d\log m$. Thus $x{\mathrel{\perp_R}}y$ says that each string remains random when the other is supplied as auxiliary information.

The following theorem identifies the information that a weak-base relative-consistency implication between the two extensions would provide.

\begin{theorem}
\label{theoremPairwiseRelativeConsistencyImpliesAccess}
For every pair of strings $x,y$, if $\EA{\vdash}\Con_{\mathcal S{+}(y{\in}R)}{\rightarrow}\Con_{\mathcal S{+}(x{\in}R)}$, then $\EA{+}\Con_{\mathcal S{+}(y{\in}R)}{\vdash}x{\in}R$.
\end{theorem}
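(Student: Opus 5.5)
The plan is to mirror the proof of Lemma~\ref{lem:no-access-no-relative-consistency}, replacing the base consistency hypothesis $\Con_{\mathcal S}$ by $\Con_{\mathcal S{+}(y{\in}R)}$. The only input actually needed from the target side is the uniform implication
\[
\EA\vdash\neg(x{\in}R)\rightarrow\neg\Con_{\mathcal S{+}(x{\in}R)},
\]
which holds for every fixed string $x$. It does not depend on $y$, on any relation between $x$ and $y$, or on whether $x{\in}R$ is true in the standard model.

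First I would re-establish this implication exactly as in the earlier lemma. The sentence $\neg(x{\in}R)$ is $\Sigma_1$: it is witnessed by a program $p$ with $|p|<|x|-d\log|x|$ together with a halting computation of $U$ on $p$ that outputs $x$.

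Formalized $\Sigma_1$-completeness over $\EA$ (H\'ajek--Pudl\'ak) applies to the polynomial-time axiomatized theory $\mathcal S{+}(x{\in}R)$, which contains $\Sone$. It shows, provably in $\EA$, that any such witness yields an $\mathcal S{+}(x{\in}R)$-proof of $\neg(x{\in}R)$. Appending the axiom $x{\in}R$ and formalized modus ponens then produces an $\mathcal S{+}(x{\in}R)$-proof of a contradiction.

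Contraposing, I obtain $\EA\vdash\Con_{\mathcal S{+}(x{\in}R)}\rightarrow x{\in}R$.

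Second, I chain the two implications. By hypothesis, $\EA\vdash\Con_{\mathcal S{+}(y{\in}R)}\rightarrow\Con_{\mathcal S{+}(x{\in}R)}$. Composing this with the first step gives $\EA\vdash\Con_{\mathcal S{+}(y{\in}R)}\rightarrow x{\in}R$. The deduction theorem, which is trivial here because we only add a sentence as a hypothesis, then yields $\EA{+}\Con_{\mathcal S{+}(y{\in}R)}\vdash x{\in}R$, as claimed.

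The main point to check, and the only potential obstacle, is that the formalized $\Sigma_1$-completeness step is available in $\EA$ uniformly for the extended theory $\mathcal S{+}(x{\in}R)$. The underlying witness search has no a priori time bound: the computation of $U$ may be long. Even so, $\EA$ proves, for each standard $\Sigma_1$ formula, that a true instance of that formula is provable in any theory containing $\Sone$ (indeed $Q$). The proof it produces has size elementary in the witness, and $\EA$ can manipulate such proofs.

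Since $x$ is a fixed numeral, no uniformity in $x$ inside $\EA$ is actually required; it suffices that the argument is internal in the witness variable. That is exactly the situation already handled in Lemma~\ref{lem:no-access-no-relative-consistency}, so I would cite that lemma's argument rather than redo it.
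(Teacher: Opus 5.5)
Your proposal is correct and follows the paper's proof: the paper likewise takes from the proof of Lemma~\ref{lem:no-access-no-relative-consistency} the implication $\EA\vdash\Con_{\mathcal S{+}(x{\in}R)}\rightarrow x{\in}R$, composes it with the hypothesis $\EA\vdash\Con_{\mathcal S{+}(y{\in}R)}\rightarrow\Con_{\mathcal S{+}(x{\in}R)}$, and concludes $\EA{+}\Con_{\mathcal S{+}(y{\in}R)}\vdash x{\in}R$. Your remarks on formalized $\Sigma_1$-completeness in $\EA$ and on needing only pointwise uniformity in the fixed numeral $x$ are accurate, but they add nothing beyond what that lemma already supplies.
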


\begin{proof}
The proof of Lemma~\ref{lem:no-access-no-relative-consistency} establishes uniformly in $x$ that $\EA{\vdash}\Con_{\mathcal S{+}(x{\in}R)}{\rightarrow}x{\in}R$. Composing this implication with $\EA{\vdash}\Con_{\mathcal S{+}(y{\in}R)}{\rightarrow}\Con_{\mathcal S{+}(x{\in}R)}$ gives $\EA{\vdash}\Con_{\mathcal S{+}(y{\in}R)}$ ${\rightarrow}x{\in}R$, which is equivalent to $\EA{+}\Con_{\mathcal S{+}(y{\in}R)}{\vdash}x{\in}R$.
\end{proof}

\begin{corollary}
\label{corollaryPairwiseFactualNoAccess}
For every pair of strings $x,y$, if $\EA{+}\Con_{\mathcal S{+}(y{\in}R)}{\not\vdash}x{\in}R$, then $\EA{\not\vdash}\Con_{\mathcal S{+}(y{\in}R)}{\rightarrow}\Con_{\mathcal S{+}(x{\in}R)}$.
\end{corollary}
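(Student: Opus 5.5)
The plan is to obtain Corollary~\ref{corollaryPairwiseFactualNoAccess} as the contrapositive of Theorem~\ref{theoremPairwiseRelativeConsistencyImpliesAccess}, mirroring how Lemma~\ref{lem:no-access-no-relative-consistency} is itself phrased as a contrapositive. No new arithmetization is needed: all the proof-theoretic content, namely the uniform $\EA$-proof of $\Con_{\mathcal S{+}(x{\in}R)}{\rightarrow}x{\in}R$ obtained from provable $\Sigma_1$-completeness, is already packaged in that theorem.

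Fix strings $x,y$ and suppose $\EA{+}\Con_{\mathcal S{+}(y{\in}R)}{\not\vdash}x{\in}R$. Toward a contradiction, assume $\EA{\vdash}\Con_{\mathcal S{+}(y{\in}R)}{\rightarrow}\Con_{\mathcal S{+}(x{\in}R)}$. Theorem~\ref{theoremPairwiseRelativeConsistencyImpliesAccess} then yields $\EA{+}\Con_{\mathcal S{+}(y{\in}R)}{\vdash}x{\in}R$, contradicting the hypothesis. Therefore $\EA{\not\vdash}\Con_{\mathcal S{+}(y{\in}R)}{\rightarrow}\Con_{\mathcal S{+}(x{\in}R)}$.

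The one point to check is that Theorem~\ref{theoremPairwiseRelativeConsistencyImpliesAccess} is stated for every pair $x,y$, with no truth, randomness, or $x{\mathrel{\perp_R}}y$ assumptions. Its contrapositive therefore holds with the same unrestricted quantification, which matches the corollary exactly. The step through the deduction theorem, identifying $\EA{\vdash}\Con_{\mathcal S{+}(y{\in}R)}{\rightarrow}x{\in}R$ with provability in $\EA{+}\Con_{\mathcal S{+}(y{\in}R)}$, is already made inside that theorem's proof and is standard for first-order theories with a sentence adjoined. There is no real obstacle; the argument is purely logical.
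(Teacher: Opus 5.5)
Your proposal is correct and matches the paper's own proof, which likewise obtains the corollary as the contrapositive of Theorem~\ref{theoremPairwiseRelativeConsistencyImpliesAccess}. Your check that the theorem quantifies over all pairs $x,y$ without truth, randomness, or $x{\mathrel{\perp_R}}y$ hypotheses is correct, so the contrapositive carries the same unrestricted scope as the corollary.
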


\begin{proof}
This is the contrapositive of Theorem~\ref{theoremPairwiseRelativeConsistencyImpliesAccess}.
\end{proof}

The converse information step is not supplied by conditional Kolmogorov randomness alone. Although $K_U(x{\mid}y){\geq}m{-}d\log m$ says that the literal bits of $y$ do not give a short conditional description of $x$, the sentence $\Con_{\mathcal S{+}(y{\in}R)}$ could in principle carry information not reducible to those bits. The required factual nonaccess principle is therefore stated explicitly.

\begin{conjecture}[\textbf{Conditional Factual No Access}]
\label{conjectureConditionalFactualNoAccess}
Fix an admissible theory $\mathcal S$. There exists $k_{R,\mathrm{access}}^{\mathcal S}$ such that, for every $m{>}k_{R,\mathrm{access}}^{\mathcal S}$ and every $x,y{\in}\{0,1\}^m$ satisfying $x{\mathrel{\perp_R}}y$, one has $\EA{+}\Con_{\mathcal S{+}(y{\in}R)}{\not\vdash}x{\in}R$ and $\EA{+}\Con_{\mathcal S{+}(x{\in}R)}{\not\vdash}y{\in}R$.
\end{conjecture}

The conjectural content is that, even after the weak base is given the consistency of the extension containing $y{\in}R$, it still lacks access to the mutually conditionally random fact $x{\in}R$, and conversely. The corresponding failure of relative consistency follows formally.

\begin{theorem}
\label{theoremConditionalNoAccess}
Assume \textbf{Conditional Factual No Access}. Then, for every $m{>}k_{R,\mathrm{access}}^{\mathcal S}$ and every $x,y{\in}\{0,1\}^m$ satisfying $x{\mathrel{\perp_R}}y$, one has $\EA{\not\vdash}\Con_{\mathcal S{+}(y{\in}R)}{\rightarrow}$ $\Con_{\mathcal S{+}(x{\in}R)}$ and $\EA{\not\vdash}\Con_{\mathcal S{+}(x{\in}R)}{\rightarrow}\Con_{\mathcal S{+}(y{\in}R)}$.
\end{theorem}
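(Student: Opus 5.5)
The plan is to derive both conclusions directly from \textbf{Conditional Factual No Access} by applying Corollary~\ref{corollaryPairwiseFactualNoAccess} twice, once in each direction.

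First fix $m{>}k_{R,\mathrm{access}}^{\mathcal S}$ and strings $x,y{\in}\{0,1\}^m$ with $x{\mathrel{\perp_R}}y$. Conjecture~\ref{conjectureConditionalFactualNoAccess} then supplies the two nonaccess facts $\EA{+}\Con_{\mathcal S{+}(y{\in}R)}{\not\vdash}x{\in}R$ and $\EA{+}\Con_{\mathcal S{+}(x{\in}R)}{\not\vdash}y{\in}R$.

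Next, apply Corollary~\ref{corollaryPairwiseFactualNoAccess} to the ordered pair $(x,y)$. Its hypothesis is exactly the first nonaccess fact, so it yields $\EA{\not\vdash}\Con_{\mathcal S{+}(y{\in}R)}{\rightarrow}\Con_{\mathcal S{+}(x{\in}R)}$. Now swap the roles of the two strings. Corollary~\ref{corollaryPairwiseFactualNoAccess} holds for every pair of strings, and the relation $x{\mathrel{\perp_R}}y$ is symmetric by definition. Applying the corollary to $(y,x)$ with the second nonaccess fact gives $\EA{\not\vdash}\Con_{\mathcal S{+}(x{\in}R)}{\rightarrow}\Con_{\mathcal S{+}(y{\in}R)}$. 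These are the two conclusions of the theorem.

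No step is a genuine obstacle, since the substantive content sits in the conjecture and in Theorem~\ref{theoremPairwiseRelativeConsistencyImpliesAccess}. The one point to check is that the uniform $\EA$-provable implication $\Con_{\mathcal S{+}(x{\in}R)}{\rightarrow}x{\in}R$ from Lemma~\ref{lem:no-access-no-relative-consistency} is available for arbitrary $x$. That lemma establishes it for every fixed string, so Theorem~\ref{theoremPairwiseRelativeConsistencyImpliesAccess}, and hence Corollary~\ref{corollaryPairwiseFactualNoAccess}, applies with no side condition on $x$ or $y$.
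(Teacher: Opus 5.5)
Your proposal is correct and follows the paper's proof exactly: invoke \textbf{Conditional Factual No Access} for the nonaccess facts, then apply Corollary~\ref{corollaryPairwiseFactualNoAccess} to $(x,y)$ and again to $(y,x)$. Your appeal to the symmetry of $x\mathrel{\perp_R}y$ is harmless but unnecessary, since the conjecture already supplies both nonaccess facts for the given pair.
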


\begin{proof}
By \textbf{Conditional Factual No Access}, $\EA{+}\Con_{\mathcal S{+}(y{\in}R)}{\not\vdash}x{\in}R$. Corollary~\ref{corollaryPairwiseFactualNoAccess} therefore gives $\EA{\not\vdash}\Con_{\mathcal S{+}(y{\in}R)}{\rightarrow}\Con_{\mathcal S{+}(x{\in}R)}$. Interchanging $x$ and $y$ gives the reverse nonimplication.
\end{proof}

A separate finite-hardness principle is needed to turn this failure of weak-base relative consistency into exponential proof lower bounds.

\begin{conjecture}[\textbf{Pairwise SETH-K-Finite}]
\label{conjecturePairwiseSETHKFinite}
Fix an admissible, finitely axiomatized sequential theory $\mathcal S$. There exists $k_{R,\mathrm{hard}}^{\mathcal S}$ such that, for every $m{>}k_{R,\mathrm{hard}}^{\mathcal S}$ and every $\epsilon$ with $0{<}\epsilon{<}1$, there exists $N_{R,m,\epsilon}^{\mathcal S,\mathrm{pair}}$ such that, for every $x,y{\in}\{0,1\}^m$ satisfying $x{\mathrel{\perp_R}}y$ and every $n{>}N_{R,m,\epsilon}^{\mathcal S,\mathrm{pair}}$, one has
\[
\mathcal S{+}(y{\in}R)\vdash_{\leq 2^{(1-\epsilon)n}}\Con_{\mathcal S{+}(x{\in}R)}(n)
\quad\Longleftrightarrow\quad
\EA{\vdash}\Con_{\mathcal S{+}(y{\in}R)}{\rightarrow}\Con_{\mathcal S{+}(x{\in}R)}.
\]
\end{conjecture}

Because $x{\mathrel{\perp_R}}y$ is symmetric and the threshold is uniform over all ordered pairs of length $m$, applying the conjecture to $(y,x)$ gives the reverse comparison.

\begin{theorem}
\label{theoremNoMutualHelp}
Assume \textbf{Conditional Factual No Access} and \textbf{Pairwise SETH-K-Finite}, and let $k_{R,\mathrm{pair}}^{\mathcal S}{:=}\max\{k_{R,\mathrm{access}}^{\mathcal S},k_{R,\mathrm{hard}}^{\mathcal S}\}$. Then, for every $m{>}k_{R,\mathrm{pair}}^{\mathcal S}$ and every $\epsilon$ with $0{<}\epsilon{<}1$, there exists $N_{R,m,\epsilon}^{\mathcal S,\mathrm{pair}}$ such that, for every $x,y{\in}\{0,1\}^m$ satisfying $x{\mathrel{\perp_R}}y$ and every $n{>}N_{R,m,\epsilon}^{\mathcal S,\mathrm{pair}}$, one has $\mathcal S{+}(y{\in}R)\nvdash_{\leq 2^{(1-\epsilon)n}}\Con_{\mathcal S{+}(x{\in}R)}(n)$ and $\mathcal S{+}(x{\in}R)\nvdash_{\leq 2^{(1-\epsilon)n}}\Con_{\mathcal S{+}(y{\in}R)}(n)$.
\end{theorem}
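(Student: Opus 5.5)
The plan is to combine Theorem~\ref{theoremConditionalNoAccess} with \textbf{Pairwise SETH-K-Finite} in exactly the way Theorem~\ref{theoremdensity} combined Lemma~\ref{lem:finite-certified-randomness} with \textbf{SETH-K-Finite}: the access conjecture kills the right-hand side of the pairwise biconditional, and the biconditional then forces failure of the left-hand side.

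Fix $m{>}k_{R,\mathrm{pair}}^{\mathcal S}$ and $\epsilon$ with $0{<}\epsilon{<}1$. Since $m{>}k_{R,\mathrm{hard}}^{\mathcal S}$, \textbf{Pairwise SETH-K-Finite} supplies a threshold $N_{R,m,\epsilon}^{\mathcal S,\mathrm{pair}}$, and I take this as the threshold in the statement. It depends only on $\mathcal S$, $R$, $m$ and $\epsilon$, not on the particular pair. Now let $x,y{\in}\{0,1\}^m$ with $x{\mathrel{\perp_R}}y$ and let $n{>}N_{R,m,\epsilon}^{\mathcal S,\mathrm{pair}}$. Since also $m{>}k_{R,\mathrm{access}}^{\mathcal S}$, Theorem~\ref{theoremConditionalNoAccess} gives $\EA{\not\vdash}\Con_{\mathcal S{+}(y{\in}R)}{\rightarrow}\Con_{\mathcal S{+}(x{\in}R)}$. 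The pairwise biconditional, instantiated at the ordered pair $(x,y)$, then yields $\mathcal S{+}(y{\in}R)\nvdash_{\leq 2^{(1-\epsilon)n}}\Con_{\mathcal S{+}(x{\in}R)}(n)$.

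For the reverse inequality, I use the symmetry of $\mathrel{\perp_R}$ noted after \textbf{Pairwise SETH-K-Finite}. Since $y{\mathrel{\perp_R}}x$ holds and the threshold is uniform over all ordered pairs of length $m$, the same $n$ exceeds the threshold for $(y,x)$. The second half of Theorem~\ref{theoremConditionalNoAccess} gives $\EA{\not\vdash}\Con_{\mathcal S{+}(x{\in}R)}{\rightarrow}\Con_{\mathcal S{+}(y{\in}R)}$, and the biconditional then gives the second lower bound.

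There is no real obstacle here; the only point needing care is bookkeeping. One has to check that taking the maximum $k_{R,\mathrm{pair}}^{\mathcal S}$ makes both conjectures applicable at the same $m$. One also has to check that the threshold depends only on $(m,\epsilon)$ and not on the particular pair, so that a single $N$ serves both orderings. Finally, the hypotheses of \textbf{Pairwise SETH-K-Finite}, namely that $\mathcal S$ is admissible, finitely axiomatized and sequential, must be assumed for the $\mathcal S$ in the statement, and the plan tacitly assumes this.
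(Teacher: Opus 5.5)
Your proposal is correct and follows essentially the same route as the paper: Theorem~\ref{theoremConditionalNoAccess} falsifies the right-hand side of the \textbf{Pairwise SETH-K-Finite} biconditional in both orderings, so the left-hand side fails beyond the pair-uniform threshold. Your explicit quantifier order (fixing $m$ and $\epsilon$ before choosing the pair) and your note that $\mathcal S$ must also be finitely axiomatized and sequential are slightly more careful than the paper's write-up, but the argument is the same.
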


\begin{proof}
Fix $m{>}k_{R,\mathrm{pair}}^{\mathcal S}$ and $x,y{\in}\{0,1\}^m$ satisfying $x{\mathrel{\perp_R}}y$. By Theorem~\ref{theoremConditionalNoAccess}, $\EA{\not\vdash}\Con_{\mathcal S{+}(y{\in}R)}{\rightarrow}\Con_{\mathcal S{+}(x{\in}R)}$. The biconditional in \textbf{Pairwise SETH-K-Finite} therefore gives $\mathcal S{+}(y{\in}R)\nvdash_{\leq 2^{(1-\epsilon)n}}\Con_{\mathcal S{+}(x{\in}R)}(n)$ for every $n{>}N_{R,m,\epsilon}^{\mathcal S,\mathrm{pair}}$. Interchanging $x$ and $y$ gives the reverse lower bound.
\end{proof}

The conclusion concerns the original target $\Con_{\mathcal S{+}(x{\in}R)}(n)$: adjoining $y{\in}R$ does not give short proofs of that bounded-consistency statement. It is stronger and more specific than saying only that the joint extension $\mathcal S{+}(x{\in}R){+}(y{\in}R)$ remains hard.

\subsubsection{Density of No-Mutual-Help Pairs}

The pairwise principles also yield a density statement. Since this paper uses logarithmic-deficiency randomness, almost every ordered pair of equal-length strings is mutually conditionally random in the sense above.

\begin{lemma}
\label{lemmaMutuallyRandomPairsDense}
For every sufficiently large $m$, at least $(1{-}O(m^{-d}))2^{2m}$ ordered pairs $(x,y){\in}\{0,1\}^m{\times}\{0,1\}^m$ satisfy $x{\mathrel{\perp_R}}y$.
\end{lemma}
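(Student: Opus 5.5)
The plan is a union bound over the four ways a pair can fail to satisfy $x{\mathrel{\perp_R}}y$: $x{\notin}R$, $y{\notin}R$, $K_U(x{\mid}y){<}m{-}d\log m$, or $K_U(y{\mid}x){<}m{-}d\log m$. Each failure set will be shown to have at most $O(2^{2m}/m^d)$ ordered pairs, so the complement has at least $(1{-}O(m^{-d}))2^{2m}$ pairs.

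For the unconditional conditions I reuse the counting from the proof of Theorem~\ref{theoremdensity}. At most $O(2^m/m^d)$ strings $x$ of length $m$ lie outside $R$. Each such $x$ pairs with $2^m$ choices of $y$, giving $O(2^{2m}/m^d)$ bad ordered pairs, and symmetrically for $y{\notin}R$. The same counting also handles the boundary: strings of length $m$ have deficiency threshold $m{-}d\log m$ computed at length exactly $m$, so no adjustment between $|x|$ and $m$ is needed.

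For the conditional conditions, fix $y$. The number of programs $p$ with $|p|{<}m{-}d\log m$ is less than $2^{m-d\log m}=2^m/m^d$. Each such program, run by $U$ with auxiliary input $y$, outputs at most one string. Hence fewer than $2^m/m^d$ strings $x$ satisfy $K_U(x{\mid}y){<}m{-}d\log m$. Summing over the $2^m$ choices of $y$ gives fewer than $2^{2m}/m^d$ bad ordered pairs, and the symmetric count for $K_U(y{\mid}x)$ is the same. Adding the four contributions gives a total exceptional set of size $O(2^{2m}/m^d)$, which proves the bound for every sufficiently large $m$. Largeness of $m$ is needed only so that $d\log m$ is meaningful and the $O(\cdot)$ constants absorb rounding of $d\log m$ to an integer.

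The main obstacle is fixing the convention for conditional complexity $K_U(\cdot{\mid}\cdot)$, which the paper uses without defining it. I would take the standard plain conditional complexity, with $U$ receiving $y$ on an auxiliary tape. The only property the argument needs is that, for fixed $y$, the map from programs to outputs is a partial function; this holds for any deterministic machine. If a prefix-free or self-delimiting variant were intended instead, the count only improves, since prefix-free codes are sparser. So the estimate is robust, and no appeal to the simulation conjectures is needed: the lemma is purely combinatorial.
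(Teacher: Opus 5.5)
Your proposal is correct and follows essentially the same route as the paper. Both use a union bound over the four failure conditions. The conditional conditions are handled by counting, for each fixed $y$, the fewer than $O(2^{m-d\log m})$ short programs run on auxiliary input $y$, and the unconditional ones by the ordinary counting bound; your remarks on rounding and on the conditional-complexity convention are harmless refinements that the paper leaves implicit.
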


\begin{proof}
For each fixed $y{\in}\{0,1\}^m$, fewer than $2^{m-d\log m}$ strings $x$ satisfy $K_U(x{\mid}y){<}$ $m{-}d\log m$. Hence the number of ordered pairs failing $K_U(x{\mid}y){\geq}m{-}d\log m$ is $O(2^{2m}/m^d)$. The same estimate applies after interchanging $x$ and $y$. The ordinary counting bound gives the same estimate for pairs in which $x{\notin}R$ or $y{\notin}R$. A union bound gives the conclusion.
\end{proof}

\begin{corollary}
\label{corollaryDenseNoMutualHelp}
Assume \textbf{Conditional Factual No Access} and \textbf{Pairwise SETH-K-Finite}, and let $k_{R,\mathrm{pair}}^{\mathcal S}{:=}\max\{k_{R,\mathrm{access}}^{\mathcal S},k_{R,\mathrm{hard}}^{\mathcal S}\}$. For every sufficiently large $m{>}k_{R,\mathrm{pair}}^{\mathcal S}$ and every $\epsilon$ with $0{<}\epsilon{<}1$, there exists $N_{R,m,\epsilon}^{\mathcal S,\mathrm{pair}}$ such that, for every $n{>}N_{R,m,\epsilon}^{\mathcal S,\mathrm{pair}}$, at least $(1{-}O(m^{-d}))2^{2m}$ ordered pairs $(x,y){\in}\{0,1\}^m{\times}\{0,1\}^m$ satisfy $\mathcal S{+}(y{\in}R)\nvdash_{\leq 2^{(1-\epsilon)n}}\Con_{\mathcal S{+}(x{\in}R)}(n)$ and $\mathcal S{+}(x{\in}R)\nvdash_{\leq 2^{(1-\epsilon)n}}\Con_{\mathcal S{+}(y{\in}R)}(n)$.
\end{corollary}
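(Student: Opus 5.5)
The plan is to combine the pointwise no-mutual-help conclusion of Theorem~\ref{theoremNoMutualHelp} with the counting estimate of Lemma~\ref{lemmaMutuallyRandomPairsDense}, in the same way that Theorem~\ref{theoremdensity} combined \textbf{SETH-K-Finite} with the counting bound for single random strings. No new proof-theoretic input is needed. The only care required is in the order of quantifiers: the threshold $N_{R,m,\epsilon}^{\mathcal S,\mathrm{pair}}$ must be chosen before $n$, and it must not depend on the particular pair.

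First, fix $m$ large enough for two things at once. We need $m{>}k_{R,\mathrm{pair}}^{\mathcal S}$, and we need $m$ beyond the ``sufficiently large'' bound implicit in Lemma~\ref{lemmaMutuallyRandomPairsDense}. Taking the maximum of these two bounds is what the phrase ``every sufficiently large $m{>}k_{R,\mathrm{pair}}^{\mathcal S}$'' in the statement allows. Then fix $\epsilon$ with $0{<}\epsilon{<}1$, and take $N_{R,m,\epsilon}^{\mathcal S,\mathrm{pair}}$ to be the threshold supplied by Theorem~\ref{theoremNoMutualHelp}, which ultimately comes from \textbf{Pairwise SETH-K-Finite}. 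Because that conjecture quantifies the threshold uniformly over all ordered pairs of length $m$ satisfying $x{\mathrel{\perp_R}}y$, this single number serves for every such pair.

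Next, fix any $n{>}N_{R,m,\epsilon}^{\mathcal S,\mathrm{pair}}$. Let $P_m$ be the set of ordered pairs $(x,y){\in}\{0,1\}^m{\times}\{0,1\}^m$ with $x{\mathrel{\perp_R}}y$. By Lemma~\ref{lemmaMutuallyRandomPairsDense}, $|P_m|{\geq}(1{-}O(m^{-d}))2^{2m}$. For each $(x,y){\in}P_m$, Theorem~\ref{theoremNoMutualHelp} gives both nonprovability statements, $\mathcal S{+}(y{\in}R)\nvdash_{\leq 2^{(1-\epsilon)n}}\Con_{\mathcal S{+}(x{\in}R)}(n)$ and the symmetric one with $x$ and $y$ interchanged. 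So every pair in $P_m$ satisfies the conclusion, and the count follows directly.

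The main point to check, rather than a real obstacle, is this uniformity: the density count is taken at a fixed $n$, and the same threshold must cover all pairs simultaneously. The string-uniformity of \textbf{Pairwise SETH-K-Finite} is exactly what makes this work. A secondary point is that the $O(m^{-d})$ constant from Lemma~\ref{lemmaMutuallyRandomPairsDense} does not depend on $n$ or $\epsilon$, so the density bound holds as stated.
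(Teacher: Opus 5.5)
Your proposal is correct and follows essentially the same route as the paper, which simply cites Lemma~\ref{lemmaMutuallyRandomPairsDense} for the $(1{-}O(m^{-d}))2^{2m}$ pairs with $x\mathrel{\perp_R}y$ and then applies Theorem~\ref{theoremNoMutualHelp} to each of them. Your explicit handling of the quantifier order and pair-uniformity of the threshold, and of absorbing the lemma's ``sufficiently large $m$'' into the choice of $m$, spells out what the paper's two-line proof leaves implicit.
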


\begin{proof}
By Lemma~\ref{lemmaMutuallyRandomPairsDense}, a $(1{-}O(m^{-d}))$ fraction of ordered pairs of length $m$ satisfy $x{\mathrel{\perp_R}}y$. Apply Theorem~\ref{theoremNoMutualHelp}.
\end{proof}

\subsubsection{Relative Consistency Version}

The formulation of \textbf{SETH-K-Finite} above is already the finite-scale relative-consistency version specialized to random axioms. Its right-hand side is not provability of $x{\in}R$ in $\mathcal S$, nor even provability of $x{\in}R$ in $\EA{+}\Con_{\mathcal S}$. The proposed exact threshold is the \textbf{HRC}/\textbf{FR} condition $\EA{\vdash}\Con_{\mathcal S}{\rightarrow}\Con_{\mathcal S{+}(x{\in}R)}$. Lemma~\ref{lem:no-access-no-relative-consistency} shows that factual nonaccess is a sufficient obstruction, but the relative-consistency condition remains the biconditional boundary.

The same distinction appears in the pairwise version. The relevant threshold for help from $y{\in}R$ is not merely whether the bits of $y$ encode $x$, but whether the weak base proves $\Con_{\mathcal S{+}(y{\in}R)}{\rightarrow}\Con_{\mathcal S{+}(x{\in}R)}$. Conditional Factual No Access is used only to rule out that relative-consistency implication for mutually conditionally random pairs; Pairwise SETH-K-Finite then converts the failure of the implication into exponential proof hardness.

A fully general finite-scale relative-consistency principle can be formulated for any uniformly encoded family of true sentences. Let $\Phi_m$ be a finite family of true sentences whose codes have length $m$, and suppose that there is a threshold $k_{\Phi}^{\mathcal S}$ beyond which no $\phi{\in}\Phi_m$ satisfies $\EA{\vdash}\Con_{\mathcal S}{\rightarrow}\Con_{\mathcal S{+}\phi}$. The corresponding familywise schema would assert that, for every $m{>}k_{\Phi}^{\mathcal S}$ and every $\epsilon$ with $0{<}\epsilon{<}1$, there exists $N_{\Phi,m,\epsilon}^{\mathcal S}$ such that, for every $\phi{\in}\Phi_m$ and every $n{>}N_{\Phi,m,\epsilon}^{\mathcal S}$, one has $\mathcal S\vdash_{\leq 2^{(1-\epsilon)n}}\Con_{\mathcal S{+}\phi}(n)$ if and only if $\EA{\vdash}\Con_{\mathcal S}{\rightarrow}\Con_{\mathcal S{+}\phi}$.

For arbitrary true sentences, however, the existence of a canonical semantic threshold $k_{\Phi}^{\mathcal S}$ is not automatic. A sentence may encode exceptional bounded information about particular proof lengths, the family may contain infinitely many weak-base-explainable extensions, and there may be no natural density measure on the family. Random axioms solve these problems simultaneously. Chaitin incompleteness supplies the semantic threshold $k_R^{\mathcal S}$; the string length $m$ supplies a canonical information parameter; the threshold $N_{R,m,\epsilon}^{\mathcal S}$ absorbs finitely bounded and syntactic exceptions at each fixed exponential saving; and logarithmic-deficiency randomness supplies the density theory.

Thus the general relative-consistency formulation supplies the explanatory boundary, while the random-axiom specialization supplies the canonical indexing, the string-uniform onset for each fixed exponential saving, and the density-one family of small hard tautologies.
\subsection{Noncollapse of \textbf{PH}}
\label{sec:ph}

This section gives the main complexity-theoretic test case for the program. We formulate a hierarchy-level strengthening of \textbf{KH} that converts the Kolmogorov-randomness obstruction into ordinary polynomial-hierarchy separations. Informally, the assumption says that a polynomial-time Turing machine with access to level-$i$ feasible information should not decide the next layer of finite Kolmogorov-randomness information. Such a machine, if it existed, would collapse the corresponding levels of \textbf{PH}. A separate reflection argument explains why a certified level-$i$ decision procedure would conflict with relativized Chaitin-style incompleteness.

At level $i{=}0$, the earlier framework reaches the first separation. By Theorem~\ref{theoremHRCImpliesKH}, \textbf{HRC} implies \textbf{KH}. Consequently, every candidate proof system has a true random-axiom extension that it does not simulate. By the standard correspondence between Cook--Reckhow proof systems and finitely axiomatized sequential theories, no optimal proof system exists, and hence $\NP{\neq}\coNP$, or equivalently $\Sigma_1^p{\neq}\Pi_1^p$. For $i{\geq}1$, however, two distinct randomness predicates are needed. The unbounded carrier $R_i$ measures Kolmogorov randomness relative to true level-$i$ arithmetical information and supplies the relativized Chaitin obstruction. A separate finite proxy $\widehat R_i^t$, defined relative to a fixed $\Pi_i^p$-complete language, has a characteristic predicate $Q_i^t$ whose decision complexity naturally lies at level $\Pi_{i+1}^p$.

The hierarchy-level argument is therefore not a direct application of the relative-consistency principles \textbf{HRC}/\textbf{FR}. The relevant computational datum is polynomial-time $\Pi_i^p$-oracle decidability of $Q_i^t$, rather than polynomial-size proofs of $\Con_{\mathcal S{+}\phi}(n)$. The proof-theoretic rationale must bridge the finite proxy back to the unbounded carrier: certified success on strings lying in both $R_i$ and $\widehat R_i^t$ should force the corresponding level-$i$ theory to prove unbounded randomness assertions $x{\in}R_i$, beyond the finite exceptions allowed by relativized Chaitin-style incompleteness.

\subsubsection{Unbounded Randomness and the Finite Proxy}

Let $\Pi_0^{\mathrm{true}}$ be empty and, for every $i{\geq}1$, let $\Pi_i^{\mathrm{true}}$ denote the set of all true $\Pi_i$ sentences. Fix, for every $i{\geq}0$, a universal oracle Turing machine $U^{(i)}$ with oracle access to $\Pi_i^{\mathrm{true}}$, with $U^{(0)}$ having no oracle. Let $K_{U^{(i)}}(x)$ be the corresponding oracle Kolmogorov complexity, and define $R_i{:=}\{x:K_{U^{(i)}}(x){\geq}|x|{-}d\log|x|\}$. Thus $R_i$ consists of the strings that remain Kolmogorov-random even relative to all true $\Pi_i$ information.\footnote{Equivalently, one may define $R_i$ using a universal machine with access to the corresponding Turing jump. Changing between standard complete oracles for the same arithmetical level changes the associated oracle Kolmogorov complexity only by an additive constant.}

The unbounded predicate $R_i$ is not itself a finite polynomial-hierarchy predicate. Its oracle is the full set $\Pi_i^{\mathrm{true}}$, not a feasible oracle at level $\Pi_i^p$. We therefore introduce a separate finite proxy.

For every $i{\geq}1$, fix a standard $\Pi_i^p$-complete language $L_i$ and a universal oracle machine $V^{(i)}$ with oracle access to $L_i$. For a polynomial time bound $t$, let $\widehat K_i^t(x)$ be the least length of a program that makes $V^{(i)}$ output $x$ within $t(|x|)$ steps, and define $\widehat R_i^t{:=}\{x:\widehat K_i^t(x){\geq}|x|{-}d\log|x|\}$. Let $Q_i^t(x)$ be the characteristic predicate of $\widehat R_i^t$; thus $Q_i^t(x)$ asserts that no program of length less than $|x|{-}d\log|x|$ makes $V^{(i)}$, with oracle access to $L_i$, output $x$ within $t(|x|)$ steps.

The two predicates play different roles. The set $R_i$ is the unbounded information carrier: it measures randomness relative to all true $\Pi_i$ information and is subject to relativized Chaitin-style incompleteness in the theory $\mathcal T_i{:=}\mathcal S{+}\Pi_i^{\mathrm{true}}$. The set $\widehat R_i^t$ is the finite computational proxy: its predicate $Q_i^t$ can be placed at a definite level of $\PH$. No identification of $\Pi_i^{\mathrm{true}}$ with $L_i$ is intended. All effectiveness and incompleteness assertions in this subsection are understood relative to the oracle $\Pi_i^{\mathrm{true}}$. In particular, $\mathcal T_i$ is effectively axiomatized relative to that oracle, and the corresponding Chaitin argument is the oracle-relative version using the same level-$i$ information in both the theory and the universal machine.

Both predicates have density tending to $1$ inside each length. For every $n$, all but an $O(n^{-d})$ fraction of strings in $\{0,1\}^n$ lie in $R_i$, and the same estimate holds for $\widehat R_i^t$. Consequently, for every fixed polynomial time bound $t$, the intersection $R_i\cap\widehat R_i^t$ has density $1{-}O(n^{-d})$ inside $\{0,1\}^n$. In particular, it contains infinitely many strings of unbounded length that simultaneously carry the unbounded randomness needed for the Chaitin obstruction and are YES-instances of the finite predicate $Q_i^t$.

\begin{lemma}
\label{lem:finite-ph-coding}
For every $i{\geq}1$ and every polynomial time bound $t$, the predicate $Q_i^t$ lies uniformly in $\Pi_{i+1}^p$. Moreover, if $\Pi_i^p{=}\Pi_{i+1}^p$, then $Q_i^t$ is decidable by a polynomial-time Turing machine with oracle access to $\Pi_i^p$.
\end{lemma}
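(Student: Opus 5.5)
We unfold the definition of $Q_i^t$ and read off its quantifier structure. By definition, $x\in\widehat R_i^t$ iff for every program $p$ with $|p|<|x|-d\log|x|$, the oracle machine $V^{(i)}$ run on $p$ with oracle $L_i$ for $t(|x|)$ steps does not output $x$. The outer quantifier ranges over polynomially many bits ($|p|<|x|$), so it is a single polynomially bounded universal quantifier. The inner predicate, namely ``$V^{(i)L_i}(p)$ does not output $x$ within $t(|x|)$ steps'', is decided by a polynomial-time machine with oracle $L_i\in\Pi_i^p$, hence lies in $\PP^{\Pi_i^p}=\PP^{\Sigma_i^p}\subseteq\Delta_{i+1}^p$. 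Thus $Q_i^t\in\mathrm{co}\text{-}\NP^{\Delta_{i+1}^p}$, and we must show that this class is contained in $\Pi_{i+1}^p=\mathrm{co}\text{-}\NP^{\Sigma_i^p}$.

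The key step is the standard absorption argument: $\NP^{\PP^{\Sigma_i^p}}=\NP^{\Sigma_i^p}$. We would make this explicit rather than cite it, because uniformity in $t$ matters. The complement of $Q_i^t$ is the predicate ``there exists $p$ short such that $V^{(i)L_i}(p)$ outputs $x$ within $t(|x|)$ steps''. An $\NP^{\Sigma_i^p}$ machine guesses $p$ together with the full computation transcript, including every oracle query $q_j$ and its claimed answer $b_j$. It checks the transcript deterministically and then verifies all the answers. A claimed answer $b_j=1$ asserts $q_j\in L_i$, which is a $\Pi_i^p$ condition; a claimed answer $b_j=0$ asserts $q_j\notin L_i$, which is a $\Sigma_i^p$ condition. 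Both can be verified with a polynomial number of $\Sigma_i^p$-oracle calls (a $\Pi_i^p$ fact is checked by one $\Sigma_i^p$ query and negation). Hence the complement lies in $\NP^{\Sigma_i^p}=\Sigma_{i+1}^p$, so $Q_i^t\in\Pi_{i+1}^p$. The machine is built uniformly from $t$, $V^{(i)}$ and a fixed $\Pi_i^p$ description of $L_i$, which gives the ``uniformly'' clause.

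For the second claim, assume $\Pi_i^p=\Pi_{i+1}^p$. Then $Q_i^t\in\Pi_i^p$, and a polynomial-time Turing machine with oracle access to $\Pi_i^p$ decides it trivially with one oracle query on input $x$. If the statement intends the oracle to be the fixed complete language $L_i$, we compose with a many-one polynomial-time reduction from $Q_i^t$ to $L_i$, which exists by $\Pi_i^p$-completeness of $L_i$.

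The only delicate point is the guess-and-verify step for the oracle answers. Negative answers ($q_j\notin L_i$) are $\Sigma_i^p$ statements and cannot be certified by an existential guess alone, so the verification must go through the $\Sigma_i^p$ oracle rather than be folded into the guess. Everything else is routine bookkeeping of polynomial bounds: $|p|<|x|$, transcript length at most $t(|x|)$, and query lengths at most $t(|x|)$.
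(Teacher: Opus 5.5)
Your proof is correct and takes the same route as the paper. The paper simply notes that the complement of $Q_i^t$ is an existential guess of a short program followed by a polynomial-time simulation of $V^{(i)}$ with oracle $L_i$, hence lies in $\NP^{\Pi_i^p}=\Sigma_{i+1}^p$, and that $\Pi_i^p=\Pi_{i+1}^p$ puts $Q_i^t$ in $\Pi_i^p\subseteq\PP^{\Pi_i^p}$. Your transcript-guessing step is valid but unnecessary: the $\NP^{\Sigma_i^p}$ machine only needs to guess the program, and can then answer each $L_i$-query deterministically with one negated oracle call.
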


\begin{proof}
The complement of $Q_i^t$ asserts that there exists a program of length less than $|x|{-}d\log|x|$ that makes $V^{(i)}$, with oracle access to the fixed language $L_i{\in}\Pi_i^p$, output $x$ within $t(|x|)$ steps. This is an $\NP^{\Pi_i^p}{=}\Sigma_{i+1}^p$ predicate. Hence $Q_i^t{\in}\Pi_{i+1}^p$, uniformly in $i$ and $t$.

If $\Pi_i^p{=}\Pi_{i+1}^p$, then $Q_i^t{\in}\Pi_i^p{\subseteq}\PP^{\Pi_i^p}$, so $Q_i^t$ is decidable by a polynomial-time Turing machine with oracle access to $\Pi_i^p$.
\end{proof}

\subsubsection{SETH-K-PH}

The assumption \textbf{SETH-K-PH} is the main bridge from the random-information framework to standard polynomial-hierarchy lower bounds. It asserts that level-$i$ feasible information cannot decide a suitable finite randomness predicate at the next level.

\begin{assumption}[\textbf{SETH-K-PH}]
\label{ass:algph}
For every $i{\geq}1$, there exists a polynomial time bound $t_i$ such that the canonical finite predicate $Q_i^{t_i}$, whose YES-set is $\widehat R_i^{t_i}$, is not decidable by any polynomial-time Turing machine with oracle access to $\Pi_i^p$.
\end{assumption}

\begin{theorem}
\label{thm:phnoncollapse}
Assume \textbf{SETH-K-PH}. Then $\Pi_i^p{\neq}\Pi_{i+1}^p$ for every $i{\geq}1$. Consequently, $\PH$ does not collapse.
\end{theorem}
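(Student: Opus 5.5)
The argument is a direct contrapositive through Lemma~\ref{lem:finite-ph-coding}, applied level by level.

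Fix $i{\geq}1$ and suppose, for contradiction, that $\Pi_i^p{=}\Pi_{i+1}^p$. Take the polynomial time bound $t_i$ supplied by \textbf{SETH-K-PH} for this level. By the second clause of Lemma~\ref{lem:finite-ph-coding}, the collapse $\Pi_i^p{=}\Pi_{i+1}^p$ makes $Q_i^{t_i}$ decidable by a polynomial-time Turing machine with oracle access to $\Pi_i^p$. Concretely, $Q_i^{t_i}$ lies in $\Pi_{i+1}^p{=}\Pi_i^p$, so a single query to a $\Pi_i^p$-complete oracle decides it. This contradicts Assumption~\ref{ass:algph}. Hence $\Pi_i^p{\neq}\Pi_{i+1}^p$ for every $i{\geq}1$.

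For the noncollapse conclusion, I would recall the standard fact that $\PH$ collapses exactly when $\Sigma_k^p{=}\Sigma_{k+1}^p$ for some $k$. By closure under complement, this happens exactly when $\Pi_k^p{=}\Pi_{k+1}^p$ for some $k$. The collapse could also occur at $k{=}0$, i.e.\ $\PP{=}\NP$. That case is excluded because it would propagate upward, giving $\Pi_1^p{=}\Pi_2^p$, which the first part already forbids. So no level coincides, and $\PH$ is infinite.

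There is no real obstacle, since the statement is essentially the contrapositive of the lemma. The only points needing care are the following.
\begin{itemize}
\item The time bound $t_i$ from the assumption must be the same polynomial used in the lemma. This is fine because the lemma holds uniformly for every polynomial $t$.
\item The upward-propagation fact must be invoked cleanly to cover the $i{=}0$ level in the noncollapse corollary.
\end{itemize}
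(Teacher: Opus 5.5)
Your proposal is correct and follows the paper's proof essentially verbatim: you assume $\Pi_i^p{=}\Pi_{i+1}^p$, apply the second clause of Lemma~\ref{lem:finite-ph-coding} with the bound $t_i$ from \textbf{SETH-K-PH}, and contradict the assumption. Your explicit treatment of the bottom level, where $\PP{=}\NP$ would force $\Pi_1^p{=}\Pi_2^p$, is a slightly more careful version of the paper's one-line remark that any finite collapse makes two adjacent levels coincide.
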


\begin{proof}
Fix $i{\geq}1$, and suppose toward a contradiction that $\Pi_i^p{=}\Pi_{i+1}^p$. Let $t_i$ be the time bound supplied by \textbf{SETH-K-PH}. By Lemma~\ref{lem:finite-ph-coding}, the predicate $Q_i^{t_i}$ is decidable by a polynomial-time Turing machine with oracle access to $\Pi_i^p$. This contradicts \textbf{SETH-K-PH}. Hence $\Pi_i^p{\neq}\Pi_{i+1}^p$ for every $i{\geq}1$. If $\PH$ collapsed at any finite level, then two adjacent levels would coincide, so $\PH$ is infinite.
\end{proof}

\begin{corollary}
\label{cor:circuit-consequences}
Assume \textbf{SETH-K-PH}. Then $\SAT{\notin}\PP/\poly$ and $\NEXP{\not\subseteq}\PP/\poly$.
\end{corollary}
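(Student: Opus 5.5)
The plan is to derive both circuit consequences from the $\PH$ noncollapse supplied by Theorem~\ref{thm:phnoncollapse}, using classical collapse theorems as black boxes. No new use of Kolmogorov randomness is needed. The proof-theoretic content of \textbf{SETH-K-PH} enters only through the noncollapse statement.

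First, for $\SAT{\notin}\PP/\poly$, suppose toward a contradiction that $\SAT{\in}\PP/\poly$. Then $\NP{\subseteq}\PP/\poly$, and the Karp--Lipton theorem gives $\PH{=}\Sigma_2^p$. In particular $\Pi_2^p{=}\Pi_3^p$: indeed $\Sigma_2^p{=}\Pi_2^p$ follows from the collapse, and then every higher level equals $\Pi_2^p$. This contradicts Theorem~\ref{thm:phnoncollapse} at $i{=}2$. The only bookkeeping is to note that the theorem is stated with $\Pi$-levels for $i{\geq}1$, and that a collapse to $\Sigma_2^p$ forces adjacent $\Pi$-levels to coincide at some $i{\geq}1$.

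Second, for $\NEXP{\not\subseteq}\PP/\poly$, I would invoke the Impagliazzo--Kabanets--Wigderson theorem: if $\NEXP{\subseteq}\PP/\poly$ then $\NEXP{=}\mathbf{MA}$. Since $\mathbf{MA}{\subseteq}\Sigma_2^p$ and $\PH{\subseteq}\mathbf{EXP}{\subseteq}\NEXP$, this yields $\PH{\subseteq}\Sigma_2^p$, again collapsing the hierarchy. That contradicts Theorem~\ref{thm:phnoncollapse}.

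A simpler route would be: $\SAT{\notin}\PP/\poly$ gives $\NP{\not\subseteq}\PP/\poly$, hence $\NEXP{\not\subseteq}\PP/\poly$ trivially, because $\NP{\subseteq}\NEXP$. So $\NEXP{\not\subseteq}\PP/\poly$ already follows from the first part, and I would present it that way, mentioning IKW only as a remark.

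The main obstacle is therefore minimal. I only need to state Karp--Lipton precisely and confirm that its collapse conclusion matches the exact noncollapse form proved in Theorem~\ref{thm:phnoncollapse}.
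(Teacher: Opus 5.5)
Your proposal is correct and matches the paper's proof. Karp--Lipton turns $\SAT{\in}\PP/\poly$ into a collapse of $\PH$ to $\Sigma_2^p$, contradicting Theorem~\ref{thm:phnoncollapse}, and $\NEXP{\not\subseteq}\PP/\poly$ follows from $\NP{\subseteq}\NEXP$; the paper re-invokes Karp--Lipton instead of citing the first part, but this is the same argument, and your explicit check that the collapse forces $\Pi_2^p{=}\Pi_3^p$ (and the optional IKW remark) is valid bookkeeping the paper leaves implicit.
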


\begin{proof}
By Theorem~\ref{thm:phnoncollapse}, $\PH$ does not collapse to $\Sigma_2^p$. If $\SAT{\in}\PP/\poly$, then $\NP{\subseteq}\PP/\poly$, and the Karp--Lipton theorem~\cite{KarpLipton} implies that $\PH$ collapses to $\Sigma_2^p$, a contradiction. Hence $\SAT{\notin}\PP/\poly$.

If $\NEXP{\subseteq}\PP/\poly$, then $\NP{\subseteq}\NEXP{\subseteq}\PP/\poly$, so Karp--Lipton again implies that $\PH$ collapses to $\Sigma_2^p$. Therefore $\NEXP{\not\subseteq}\PP/\poly$.
\end{proof}

\subsubsection{Explicit Dense \texorpdfstring{\textbf{PH}}{PH} Separators}
\label{subsubsec:explicit-dense-ph-separators}

Bare noncollapse, $\Pi_i^p{\neq}\Pi_{i+1}^p$, asserts only the existence of some separating language. \textbf{SETH-K-PH} gives more structure: it identifies a canonical finite separator, places it at the expected level of the polynomial hierarchy, shows that its YES-instances have density tending to $1$, and localizes where any lower-level decision procedure must fail.

\begin{theorem}
\label{thm:ph-density}
Assume \textbf{SETH-K-PH}. For every $i{\geq}1$, let $t_i$ be the time bound supplied by the assumption. Then:
\begin{enumerate}[label=\textup{(\arabic*)}]
\item \emph{(Explicitness.)} The predicate $Q_i^{t_i}$ is an explicit language in $\Pi_{i+1}^p{\setminus}\PP^{\Pi_i^p}$. In particular, it witnesses $\Pi_i^p{\neq}\Pi_{i+1}^p$.
\item \emph{(Density.)} Its YES-set $\widehat R_i^{t_i}$ has density at least $1{-}n^{-d}$ inside $\{0,1\}^n$. Thus almost every length-$n$ string is a YES-instance.
\item \emph{(Sparse-side localization.)} Every polynomial-time $\Pi_i^p$-oracle machine that is correct on all of $\widehat R_i^{t_i}$ for all sufficiently large $n$ must err, for infinitely many $n$, on some string in the sparse complement $\overline{\widehat R_i^{t_i}}$.
\end{enumerate}
\end{theorem}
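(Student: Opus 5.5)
The three parts are independent, and I would prove them in order, each directly from results already in hand.

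For (1), Lemma~\ref{lem:finite-ph-coding} places $Q_i^{t_i}$ in $\Pi_{i+1}^p$. The complement of $Q_i^{t_i}$ is an explicit $\NP^{L_i}$ predicate: guess a short program and simulate $V^{(i)}$ for $t_i(|x|)$ steps with queries to the fixed $L_i$. This makes $Q_i^{t_i}$ explicit once $L_i$, $V^{(i)}$, $d$ and $t_i$ are fixed. \textbf{SETH-K-PH} says directly that $Q_i^{t_i}\notin\PP^{\Pi_i^p}$. Since $\Pi_i^p\subseteq\PP^{\Pi_i^p}$, we get $Q_i^{t_i}\in\Pi_{i+1}^p\setminus\Pi_i^p$, which witnesses $\Pi_i^p\neq\Pi_{i+1}^p$. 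This agrees with Theorem~\ref{thm:phnoncollapse}.

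For (2), I would use the counting argument from Theorem~\ref{theoremdensity}. A string $x\in\{0,1\}^n$ outside $\widehat R_i^{t_i}$ is output by $V^{(i)}$ from some program of length less than $n-d\log n$. Each program outputs at most one string, and there are fewer than $2^{n-d\log n}=2^n n^{-d}$ such programs. Hence at most $2^n n^{-d}$ strings lie outside, and the density is at least $1-n^{-d}$. No asymptotic constant is needed here, because the bound counts programs directly and the time bound only shrinks the set of outputs.

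For (3), I would argue by contradiction. Suppose a polynomial-time $\Pi_i^p$-oracle machine $M$ is correct on all of $\widehat R_i^{t_i}$ for all $n\ge n_0$, and errs on the complement for only finitely many $n$. Then $M$ is correct on every string of length at least some $n_1$. Patch $M$ with a finite lookup table for strings shorter than $n_1$. The patched machine is still a polynomial-time $\Pi_i^p$-oracle machine, and it decides $Q_i^{t_i}$ exactly, contradicting \textbf{SETH-K-PH}. So $M$ errs for infinitely many $n$, and by assumption those errors fall on strings in $\overline{\widehat R_i^{t_i}}$. By (2), that complement has at most $2^n n^{-d}$ elements per length, which justifies calling it sparse-side.

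The only delicate step is the quantifier bookkeeping in (3). The hypothesis ``correct on $\widehat R_i^{t_i}$ for all sufficiently large $n$'' must be combined with ``errs on the complement for only finitely many $n$'' to yield eventual total correctness. The finite-patching step then relies on \textbf{SETH-K-PH} being a statement about deciding the language exactly, which is insensitive to finitely many inputs.
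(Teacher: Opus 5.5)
Your proposal is correct and follows the paper's proof essentially step for step: part (1) combines Lemma~\ref{lem:finite-ph-coding} with \textbf{SETH-K-PH}, part (2) is the same program-counting bound, and part (3) is the same contradiction obtained by hard-wiring the finitely many exceptional lengths into a polynomial-time $\Pi_i^p$-oracle decider. One minor caveat, which the paper shares: the exact bound $1-n^{-d}$ with no constant tacitly takes $n-d\log n$ to be an integer (e.g.\ by rounding $\log$ up); with a real threshold the number of shorter programs is only below $2^{\lceil n-d\log n\rceil}$, which gives $1-2n^{-d}$ and does not affect the conclusion.
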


\begin{proof}
For \textup{(1)}, Lemma~\ref{lem:finite-ph-coding} gives $Q_i^{t_i}{\in}\Pi_{i+1}^p$, while \textbf{SETH-K-PH} gives $Q_i^{t_i}{\notin}\PP^{\Pi_i^p}$. Hence $Q_i^{t_i}$ explicitly separates $\Pi_i^p$ from $\Pi_{i+1}^p$.

For \textup{(2)}, a string $x{\in}\{0,1\}^n$ lies outside $\widehat R_i^{t_i}$ only if some program of length less than $n{-}d\log n$ makes $V^{(i)}$, with oracle access to $L_i$, output $x$ within $t_i(n)$ steps. There are fewer than $2^{n-d\log n}{=}2^n/n^d$ such programs, and each program produces at most one length-$n$ output. Therefore $|\overline{\widehat R_i^{t_i}}\cap\{0,1\}^n|{<}2^n/n^d$.

For \textup{(3)}, suppose a polynomial-time $\Pi_i^p$-oracle machine $M$ is correct on every input in $\widehat R_i^{t_i}$ at all sufficiently large lengths and is also correct on every input in $\overline{\widehat R_i^{t_i}}$ at all but finitely many lengths. Then $M$ decides $Q_i^{t_i}$ correctly except at finitely many lengths. Since those exceptional lengths contain only finitely many strings in total, their values can be hard-wired into the machine. This gives a polynomial-time $\Pi_i^p$-oracle decider for $Q_i^{t_i}$, contradicting \textbf{SETH-K-PH}. Hence $M$ must err on the sparse complement for infinitely many lengths.
\end{proof}

Theorem~\ref{thm:ph-density} is the surplus over bare noncollapse. The conclusion $\Pi_i^p{\neq}\Pi_{i+1}^p$ alone gives no canonical separating language, no density statement, and no localization of where lower-level procedures fail. Under \textbf{SETH-K-PH}, the finite proxy $Q_i^{t_i}$ supplies an explicit dense separator at every level.

This is not average-case hardness under the uniform distribution. Since $\widehat R_i^{t_i}$ has density tending to $1$, the accept-all algorithm is already correct on almost all inputs. The conclusion is instead structural: noncollapse is witnessed by an explicit finite Kolmogorov predicate rather than by an unspecified separating language. The unbounded carrier $R_i$ enters in the proof-theoretic rationale below, not in the direct complexity-theoretic proof of noncollapse.

\subsubsection{Feasible Reflection at Level $i$}

The preceding assumption is algorithmic: level-$i$ feasible computation should not decide the finite proxy $Q_i^t$. This subsection gives a proof-theoretic rationale for that assumption. The rationale necessarily connects two distinct objects. The finite predicate $Q_i^t$ is defined using the feasible oracle $L_i{\in}\Pi_i^p$, while the relativized Chaitin obstruction applies to the unbounded carrier $R_i$, defined using all true $\Pi_i$ information.

Let $\mathcal T_i{:=}\mathcal S{+}\Pi_i^{\mathrm{true}}$. Since both $R_i$ and $\widehat R_i^t$ have density tending to $1$, their intersection has density tending to $1$. We therefore formulate the reflection principle on strings $x{\in}R_i\cap\widehat R_i^t$. Such strings are true unbounded random instances and also YES-instances of the finite predicate being decided.

The certification principle should be read with a strict level discipline. A purported level-$i$ feasible method must have its efficiency and correctness certified by the corresponding level-$i$ theory $\mathcal T_i$, rather than by the stronger theory $\mathcal T_{i+1}$. Allowing $\mathcal T_{i+1}$ to certify the method would use the next layer of arithmetical truth to explain a computation advertised as level-$i$ feasible.

\begin{assumption}[\textbf{Certified Feasible Reflection at Level~$i$}]
\label{ass:feref}
For every $i{\geq}1$, let $\mathcal T_i{:=}\mathcal S{+}\Pi_i^{\mathrm{true}}$. Let $t$ be a polynomial time bound, and let $M$ be a polynomial-time Turing machine with oracle access to $\Pi_i^p$. Suppose that, for a sufficiently long string $x{\in}R_i\cap\widehat R_i^t$, the theory $\mathcal T_i$ certifies that $M$ accepts $x$ and that this acceptance is correct for $Q_i^t(x)$. Then $\mathcal T_i{\vdash}x{\in}R_i$.
\end{assumption}

This is the reflection step from finite computational success to unbounded explanatory access. It does not identify $\widehat R_i^t$ with $R_i$; it applies only on their intersection.

\begin{assumption}[\textbf{Level-Respecting Certification Bridge}]
\label{ass:level-respecting-certification-bridge}
For every $i{\geq}1$, let $\mathcal T_i{:=}\mathcal S{+}\Pi_i^{\mathrm{true}}$. If a polynomial-time Turing machine $M$ with oracle access to $\Pi_i^p$ correctly decides $Q_i^t$, then, for every sufficiently long string $x{\in}R_i\cap\widehat R_i^t$, the theory $\mathcal T_i$, rather than $\mathcal T_{i+1}$, certifies that $M$ accepts $x$ and that this acceptance is correct for $Q_i^t(x)$.
\end{assumption}

The Level-Respecting Certification Bridge is the substantive additional assumption. Certified Feasible Reflection alone rules out only $\mathcal T_i$-certified feasible success. The bridge says that an actual level-$i$ decider for the finite proxy cannot require level-$(i{+}1)$ information merely to explain its success on the dense intersection of the finite YES-set with the unbounded random carrier.

\begin{theorem}
\label{thm:reflection-bridge-implies-seth-k-ph}
Assume Certified Feasible Reflection at Level~$i$ and the Level-Respecting Certification Bridge for every $i{\geq}1$. Then \textbf{SETH-K-PH} holds.
\end{theorem}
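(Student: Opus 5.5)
The argument goes by contradiction, fixing one level $i{\geq}1$ at a time. Fix $i$ and any polynomial time bound $t$; it suffices to show that some (in fact every) $t_i{=}t$ makes $Q_i^{t}$ undecidable by polynomial-time $\Pi_i^p$-oracle machines. Suppose instead that a polynomial-time machine $M$ with oracle access to $\Pi_i^p$ correctly decides $Q_i^t$. The Level-Respecting Certification Bridge then applies. It gives a length threshold $\ell_M$ such that, for every $x{\in}R_i\cap\widehat R_i^t$ with $|x|{>}\ell_M$, the theory $\mathcal T_i{=}\mathcal S{+}\Pi_i^{\mathrm{true}}$ certifies that $M$ accepts $x$ and that this acceptance is correct for $Q_i^t(x)$.

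Next, feed this certification into Certified Feasible Reflection at Level~$i$. For each such sufficiently long $x$, it yields $\mathcal T_i{\vdash}x{\in}R_i$. I will take the larger of the two "sufficiently long" thresholds, one from the bridge and one from the reflection assumption. The result is that $\mathcal T_i$ proves $x{\in}R_i$ for every $x{\in}R_i\cap\widehat R_i^t$ beyond a single length bound.

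Density then supplies infinitely many such strings. As noted before Lemma~\ref{lem:finite-ph-coding}, $R_i\cap\widehat R_i^t$ has density $1{-}O(n^{-d})$ in $\{0,1\}^n$, so it is nonempty at every large length. Hence $\mathcal T_i$ proves infinitely many true assertions $x{\in}R_i$, with unbounded $|x|$.

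The contradiction comes from the relativized Chaitin incompleteness theorem. $\mathcal T_i$ is sound and axiomatized effectively relative to the oracle $\Pi_i^{\mathrm{true}}$, and $R_i$ is defined through the universal machine $U^{(i)}$ with the same oracle. The oracle version of the argument in the footnote to the introduction therefore gives a constant $c_{\mathcal T_i}$ such that $\mathcal T_i$ proves no true statement $K_{U^{(i)}}(x){>}c_{\mathcal T_i}$. It follows that only strings of length below a computable-in-the-oracle threshold $\chi_{\mathcal T_i,R_i}$ can have $x{\in}R_i$ proved in $\mathcal T_i$. This contradicts the previous step. So no such $M$ exists, and \textbf{SETH-K-PH} holds with $t_i{=}t$ for any choice of $t$.

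I expect the main obstacle to be verifying that the relativized Chaitin argument really applies to $\mathcal T_i$. Its axiom set $\Pi_i^{\mathrm{true}}$ is not computable, so the usual self-referential description of "search for a $\mathcal T_i$-proof of $K(x){>}c$ and output $x$" must run on $U^{(i)}$ with the $\Pi_i^{\mathrm{true}}$ oracle. That search needs the oracle to enumerate the axioms of $\mathcal T_i$ and to check proofs, and the description length must be charged to $K_{U^{(i)}}$. I will check this alignment of oracles, which the paper stipulates in defining $\mathcal T_i$ and $R_i$, rather than treat it as automatic. The remaining steps just chain the two assumptions.
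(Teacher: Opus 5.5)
Your proposal is correct and follows the paper's proof essentially step for step. You feed a putative $\Pi_i^p$-oracle decider $M$ for $Q_i^t$ through the Level-Respecting Certification Bridge and then Certified Feasible Reflection, use the density of $R_i\cap\widehat R_i^t$ to obtain infinitely many $\mathcal T_i$-provable true instances $x\in R_i$, and contradict relativized Chaitin incompleteness, exactly as the paper does; your remarks that every polynomial $t$ works, that the two length thresholds can be merged, and that the oracle alignment in the Chaitin argument needs checking are harmless refinements the paper leaves implicit.
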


\begin{proof}
Fix $i{\geq}1$ and suppose that the level-$i$ instance of \textbf{SETH-K-PH} fails. Then, for every polynomial time bound $t$, the predicate $Q_i^t$ is decidable by a polynomial-time Turing machine with oracle access to $\Pi_i^p$. Fix such a time bound $t$ and a corresponding decider $M$.

By the Level-Respecting Certification Bridge, for every sufficiently long $x{\in}R_i\cap\widehat R_i^t$, the theory $\mathcal T_i$ certifies that $M$ accepts $x$ and that this acceptance is correct for $Q_i^t(x)$. Certified Feasible Reflection at Level~$i$ therefore gives $\mathcal T_i{\vdash}x{\in}R_i$ for every sufficiently long $x$ in this intersection.

The intersection $R_i\cap\widehat R_i^t$ has density $1{-}O(n^{-d})$ and in particular contains infinitely many strings of unbounded length. Thus $\mathcal T_i$ proves infinitely many true assertions $x{\in}R_i$, contradicting relativized Chaitin-style incompleteness for $\mathcal T_i$ relative to $\Pi_i^{\mathrm{true}}$. Hence the level-$i$ instance of \textbf{SETH-K-PH} holds. Since $i$ was arbitrary, \textbf{SETH-K-PH} follows.
\end{proof}

This theorem is a proof-theoretic rationale for \textbf{SETH-K-PH}, not the direct proof of $\PH$ noncollapse. The direct implication to noncollapse comes from the finite proxy $Q_i^{t_i}{\in}\Pi_{i+1}^p$. The reflection argument explains why a level-$i$ decider for that proxy would be suspicious: level-respecting certification of its success on $R_i\cap\widehat R_i^t$ would force the level-$i$ theory to prove unbounded randomness facts beyond the finite exceptions permitted by relativized Chaitin-style incompleteness.

\section{From \textbf{KH} to Computational Randomness}
\label{sec:random-information-boundaries}
The preceding sections use the unbounded Kolmogorov-randomness predicate $R$ as a source of fixed true axioms $x{\in}R$. That is the right object for \textbf{KH}: Chaitin-style incompleteness gives a genuine proof-theoretic obstruction, since a fixed sound computably axiomatized theory proves only finitely many true assertions $x{\in}R$. Computational applications require a second step. Average-case complexity, cryptography, natural proofs, and derandomization are formulated using decidable predicates over varying inputs, not fixed unbounded truth assertions. This section separates three kinds of access: (i) full membership-oracle access; (ii) positive facts certified by a fixed theory; and (iii) feasible access to decidable time-bounded proxies. The first two give an unconditional calibration point; the third is where the problem-specific bridge assumptions enter.

\subsection{An Access Ladder for Random Strings}
\label{subsec:oracle-access-ladder}

Let $C$ denote plain Kolmogorov complexity and let $KS$ denote the space-bounded Kolmogorov measure used by Allender, Buhrman, Kouck\'y, van Melkebeek, and Ronneburger~\cite{AllenderetalRandomStrings}. To make their threshold explicit, put
\[
R_C^{1/2}:=\{x:C(x)\ge |x|/2\},
\qquad
R_{KS}^{1/2}:=\{x:KS(x)\ge |x|/2\}.
\]
Their full-oracle theorem gives
\begin{equation}
\label{eq:allender-full-oracle-benchmark}
\PSPACE=\mathbf{ZPP}^{R_{KS}^{1/2}}\subseteq\PP^{R_C^{1/2}}.
\end{equation}
Thus unrestricted membership access to a conventional linear-threshold random-string oracle is sufficient for every $\PSPACE$ computation to be performed in polynomial time with that oracle. This is a statement about the full two-sided oracle: an algorithm may ask arbitrary membership questions and receive both positive and negative answers.

Now fix an admissible theory $\mathcal T$ and retain only the positive instances whose randomness $\mathcal T$ proves: $(R_C^{1/2})^+_{\mathcal T}{:=}\{x{:}\mathcal T{\vdash} C(x){\ge} |x|/2\}$.

\begin{proposition}
\label{prop:full-versus-certified-random-oracle}
The set $(R_C^{1/2})^+_{\mathcal T}$ is finite. Consequently, $\PP^{(R_C^{1/2})^+_{\mathcal T}}{=}\PP$. Hence the full oracle in~\eqref{eq:allender-full-oracle-benchmark} can support every $\PSPACE$ computation, whereas the positive fragment certified by a fixed sound effective theory supplies no asymptotic oracle information at all.
\end{proposition}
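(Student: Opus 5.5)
The plan is to prove finiteness by a Chaitin-style argument, then obtain the oracle collapse as a formal consequence of finiteness, and finally read off the contrast with~\eqref{eq:allender-full-oracle-benchmark}.

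\emph{Finiteness.} Since $\mathcal T$ is admissible, it has polynomial-time decidable axioms, so its theorems are computably enumerable. Consider the machine $M_k$ that, on input a binary numeral for $k$, enumerates $\mathcal T$-proofs in order until it finds a proof of a sentence $C(x)\ge|x|/2$ with $|x|\ge k$, and then outputs that~$x$. If $(R_C^{1/2})^+_{\mathcal T}$ were infinite, $M_k$ would halt for every~$k$. Its output $x$ satisfies $C(x)\ge|x|/2\ge k/2$, because $\mathcal T$ is sound. On the other hand, $C(x)\le \log k+O(1)$, since $x$ is described by $M_k$ together with~$k$. For large $k$ these bounds contradict each other, so $(R_C^{1/2})^+_{\mathcal T}$ is finite.

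Soundness is essential here, so that $\mathcal T$-proved randomness is true randomness. The finiteness is also uniform in the threshold, matching the footnote on $\chi_{\mathcal S,R}$. I would note that the argument parallels Lemma~\ref{lem:finite-certified-randomness}, with provability in $\mathcal T$ replacing provability in $\EA{+}\Con_{\mathcal S}$.

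\emph{Oracle collapse.} A finite set $F$ is decidable in constant time. Hard-wire the finite table of $F$ into the machine, and replace each oracle query $q\in F$ by a lookup in that table. This replacement costs $O(|q|)$ time, and the queries have polynomial length. It follows that $\PP^{F}\subseteq\PP$, and the reverse inclusion is trivial.

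This hard-wiring step is nonuniform in~$\mathcal T$: the set $F$ exists but may not be computable from an index for $\mathcal T$. That is acceptable, because the claim is only that $\PP^{F}=\PP$ as classes, not that the collapse is effective. I would say this explicitly.

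\emph{Contrast with the full oracle.} Combining the collapse with~\eqref{eq:allender-full-oracle-benchmark}, the full oracle $R_C^{1/2}$ yields $\PSPACE\subseteq\PP^{R_C^{1/2}}$, while the certified positive fragment yields only~$\PP$. Here ``no asymptotic oracle information'' means precisely that the oracle answers can be hard-wired.

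The main obstacle is the finiteness step. It has two delicate points. First, the threshold $|x|/2$ differs from the logarithmic deficiency used for $R$, so the Chaitin bound must be redone for the linear threshold; this is in fact easier. Second, I must confirm that "admissible" includes computable enumerability of theorems and soundness for the arithmetized statement $C(x)\ge|x|/2$. That statement is $\Pi_1$, and soundness of $\mathcal T$ covers it.
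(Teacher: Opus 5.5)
Your proposal is correct and follows the paper's route: a Chaitin-style bound gives finiteness, and the finite certified fragment is then hard-wired to get $\PP^{(R_C^{1/2})^+_{\mathcal T}}=\PP$. The only difference is presentational. You rerun the Berry argument directly for the threshold $|x|/2$, whereas the paper cites the Chaitin constant $c_{\mathcal T}$ and converts a $\mathcal T$-proof of $C(x)\ge|x|/2$ into a proof of $C(x)>c_{\mathcal T}$, which also yields the explicit bound $|x|\le 2c_{\mathcal T}$.
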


\begin{proof}
By Chaitin-style incompleteness there is a constant $c_{\mathcal T}$ such that $\mathcal T$ proves no true assertion $C(x){>}c_{\mathcal T}$. If $\mathcal T\vdash C(x){\ge} |x|/2$ and $|x|/2{>}c_{\mathcal T}$, elementary arithmetic converts that proof into a proof of $C(x){>}c_{\mathcal T}$, a contradiction. Soundness therefore implies that every member of $(R_C^{1/2})^+_{\mathcal T}$ has $|x|{\le} 2c_{\mathcal T}$, up to harmless rounding. The certified positive fragment is finite, and a finite oracle can be hardwired into a polynomial-time machine.
\end{proof}

The penalty is maximal in an \emph{oracle-information} sense: replacing full access by fixed-theory-certified positive access removes every asymptotic positive membership bit. This does not prove a strict complexity-class loss; asserting a strict drop from the $\PSPACE$ benchmark to $\PP$ would in particular require $\PP\ne\PSPACE$. Nor does the proposition identify the full theorem set of $\mathcal T$ with a finite oracle. It concerns only positively certified membership in $R_C^{1/2}$. Finally, $R_C^{1/2}$ is not the logarithmic-deficiency predicate $R{=}\{x{:}K_U(x){\ge} |x|{-}d\log |x|\}$
used in \textbf{KH}; transferring the Allender inclusion to that different threshold would require a separate theorem.

This access gap is the computational analogue of the paper's central information constraint. Full random-string information can be extraordinarily powerful, but a fixed theory can certify only a finite positive fragment. The remaining question is whether an efficient procedure for a \emph{decidable proxy} of randomness can obtain some of the power of the full oracle without producing forbidden proofs of genuine unbounded randomness. The rest of the section makes that question explicit. The unbounded predicate $R$ remains the source of Chaitin inaccessibility; time-bounded predicates such as $R_t$, $Q_0^t$, $K^t$, and $KT/MKTP$ provide computational interfaces; and each application states separately the bridge needed to pass from feasible success at the interface back to inaccessible $R$-information.

\subsection{From $R$ to $R_t$}
\label{subsec:r-to-rt}

The principles in previous sections use the unbounded Kolmogorov-randomness predicate $R$. This choice is essential in the fixed-axiom proof-theoretic formulation. If $R$ is replaced naively by a time-bounded predicate $R_t$, the resulting statement loses the intended inaccessibility: for each fixed true instance $x{\in}R_t$, the assertion $x{\in}R_t$ is a finite bounded-computation claim, and hence is eventually provable in sufficiently weak arithmetic. Thus $R_t$ should not replace $R$ in \textbf{KH} or in the basic fixed-axiom form of \textbf{SETH-K-Finite}.

Nevertheless, $R_t$ is the right interface with average-case complexity, cryptography, natural proofs, and derandomization. These subjects concern decidable predicates over varying inputs, with uniform bounds in the input length. In that setting the relevant issue is not whether a fixed true assertion $x{\in}R_t$ is ever provable, but whether the boundary between time-bounded and genuine randomness can be detected efficiently often enough to be useful. The time-bounded predicate therefore gives a computational companion to the unbounded $R$-principles: $R$ supplies the reflection-theoretic source of inaccessibility, while $R_t$ supplies the decidable predicate needed for uniform and average-case formulations.

For a time bound~$t$, let $Q_0^t(z)$ denote the base-level time-bounded randomness predicate: $Q_0^t(z)$ holds iff no program of length ${<}|z|{-}d\log |z|$ outputs~$z$ within $t(|z|)$ steps on the fixed universal machine. Thus $Q_0^t$ is the global predicate for time-bounded incompressibility, distinct from a pointed random axiom $x{\in}R$, which is a true unbounded randomness assertion at a specific string. A global algorithm for $Q_0^t$ does not automatically certify any particular assertion $x{\in}R$. The bridge assumption needed throughout this section is therefore a spillover principle: sufficiently successful feasible access to the decidable time-bounded boundary should expose genuine randomness information often enough to contradict Chaitin-style incompleteness.

\subsection{One-way Functions from Average-Case Boundary Reflection}
\label{subsec:owf-main}

The Liu--Pass characterization makes the preceding spillover question especially sharp. For sufficiently large polynomial time bounds, one-way functions exist exactly when a corresponding time-bounded Kolmogorov-complexity predicate is mildly hard on average. Thus, after the necessary shift from fixed unbounded randomness facts $x{\in}R$ to decidable time-bounded predicates, one-way functions become a direct test case: the missing ingredient is an average-case reflection principle saying that a heuristic for the time-bounded boundary would expose genuine randomness information too often.

The direct predicate $\Con_{\mathcal S{+}(x{\in}R)}(n)$ is unsuitable for average-case cryptographic transfer on distributions concentrated on true random strings, because these consistency statements are automatically true and the constant-$1$ algorithm succeeds. The better object is the boundary between genuine randomness and feasible time-bounded randomness, captured below by the boundary-exposure predicate $E_{\mathcal S}(x,N)$. This is why $R_t$ appears here rather than in the fixed-axiom formulation: the average-case problem must be decidable and uniform in~$x$, while the underlying reflection obstruction is still supplied by the unbounded predicate~$R$.

\begin{definition}[Boundary-exposure predicate]
\label{def:boundary-exposure-main}
For a theory $\mathcal S$ and a threshold~$N$, let $E_{\mathcal S}(x,N){:=}\exists m{\le}N\,\neg\Con_{\mathcal S{+}(x{\in}R)}(m)$. Equivalently, $E_{\mathcal S}(x,N)$ holds if there is a contradiction proof of length at most~$N$ from the extension $\mathcal S{+}(x{\in}R)$.
\end{definition}

For a polynomial time bound~$t$, let $\Kt(z)$ denote the $t$-time-bounded Kolmogorov complexity of $z$: the length of the shortest program that outputs $z$ within $t(|z|)$ steps on the fixed universal machine. The object Liu--Pass characterize is the problem of \emph{computing} $\Kt$ on uniformly random inputs, not a fixed-threshold decision predicate. We use the gapped decision form $\MKtP$---on input $z$, decide whether $z$ has a description below the chosen length threshold that outputs $z$ within $t(|z|)$ steps---only as an interface to the boundary-exposure predicate below; the equivalence with one-way functions is stated for computing $\Kt$. Any reduction routed through the decision problem $\MKtP$ should cite the corresponding decision/$\mathrm{MINKT}$ variant rather than \cite{LiuPass}, since that equivalence is the one Liu--Pass actually establish for the value problem.

\begin{theorem}[Liu--Pass~\cite{LiuPass}]
\label{thm:liu-pass-main}
For every polynomial time bound $t$ satisfying $t(n){\geq}(1{+}\delta)n$ for some constant $\delta{>}0$, one-way functions exist if and only if $K^t$ is mildly hard on average over the uniform distribution: there is a polynomial $p$ such that no probabilistic polynomial-time algorithm computes $K^t(x)$ correctly on more than a $1{-}1/p(n)$ fraction of strings $x{\in}\{0,1\}^n$ for all sufficiently large $n$.
\end{theorem}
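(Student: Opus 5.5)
The plan is to prove the two directions separately, following Liu--Pass~\cite{LiuPass}. In both directions the key quantity is the relation between the uniform distribution on $\{0,1\}^n$ and the output distribution of a natural ``random short program'' sampler. None of the paper's own conjectures (\textbf{KH}, \textbf{SETH-K-Finite}) is used. This is a purely cryptographic statement, imported as the computational interface for the boundary-exposure discussion that follows.

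\emph{Mild hardness of $\Kt$ implies one-way functions.} I would define a candidate function
\[
f(\ell,\Pi)=(\ell,\ U_{t}(\Pi_{[1..\ell]})),
\]
where $\ell\in\{1,\dots,n{+}c\}$ is chosen uniformly, $\Pi\in\{0,1\}^{n+c}$ is uniform, and $U_t$ runs the fixed universal machine for $t(n)$ steps. The argument then proceeds in three steps.
\begin{enumerate}[label=(\roman*)]
\item Suppose $f$ is not even weakly one-way. The Impagliazzo--Luby theorem then upgrades an inverter to a \emph{distributional} inverter. Its output preimage is statistically close to a uniformly random preimage, except on a $1/q(n)$ fraction of the output mass, for any chosen polynomial $q$.
\item On input $x$, the heuristic for $\Kt$ tries $\ell=1,2,\dots$ and outputs the least $\ell$ for which the inverter returns a genuine preimage of $(\ell,x)$. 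The inverter's output can always be verified by running $U$ for $t$ steps, so any answer returned is an upper bound on $\Kt(x)$.
\item The crucial domination estimate is the following. Every $x\in\{0,1\}^n$ has $\Kt(x)\le n+c$, using $t(n)\ge(1+\delta)n$ to print $x$ literally. Hence $x$ receives mass at least $2^{-(n+c)}/(n{+}c)$ under $f$ at the correct length $\ell=\Kt(x)$. The uniform distribution is therefore dominated by the output distribution of $f$ up to a polynomial factor. Failure probability $1/q$ under $f$ then translates into failure probability $\mathrm{poly}(n)/q(n)$ under the uniform distribution. Choosing $q$ large compared with $p$ contradicts mild hardness.
\end{enumerate}
Finally, Yao's amplification turns the weak one-way function into a strong one.

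\emph{One-way functions imply mild hardness of $\Kt$.} From a one-way function I would obtain a pseudorandom generator $G:\{0,1\}^{n-\lambda}\to\{0,1\}^n$ by H\aa stad--Impagliazzo--Levin--Luby. Its outputs satisfy $\Kt(G(s))\le n-\lambda+O(\log n)$ once $t$ exceeds the running time of $G$. By counting, a uniform $x$ has $\Kt(x)\ge n-\lambda/2$ except with probability $2^{-\lambda/2}$. A heuristic computing $\Kt$ with error below $1/p$ then yields the threshold test ``$\Kt(x)<n-\lambda/2$''. This test distinguishes $G(U)$ from uniform, provided the heuristic's errors are not concentrated on the range of $G$.

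\emph{The main obstacle} is exactly this concentration issue. The heuristic errs on at most a $1/p$ fraction of \emph{uniform} strings, but $G(U)$ may place much more weight on those error strings. I would handle it as Liu--Pass do, by building a conditionally entropy-preserving generator. This is a generator whose output distribution, conditioned on a noticeable event, is close to uniform on a set of size about $2^{n-\lambda}$. With such a generator, uniform-measure error bounds transfer, up to a polynomial loss, to the generator's distribution. Tuning $\lambda=O(\log n)$ against $p$ then gives the distinguisher. The time bound $t(n)\ge(1+\delta)n$ is used in both directions: for the literal-printing upper bound above, and to absorb the overhead of simulating the generator inside the time-bounded description.
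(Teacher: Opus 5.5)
The paper does not prove this theorem; it imports it from Liu--Pass, so your outline is a reconstruction of their argument. You correctly identify the concentration obstacle and the role of conditionally entropy-preserving generators.

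\textbf{First direction.} This is essentially Liu--Pass's argument: the weak one-way function $f(\ell,\Pi)$, the least-$\ell$ heuristic, and the pointwise domination bound $2^{-(n+c)}/(n+c)$, which comes from $K^t(x)\le n+c$. Step (i), however, is unnecessary. No preimage of $(\ell,x)$ exists for $\ell<K^t(x)$, so the heuristic is correct whenever the inverter succeeds on $(K^t(x),x)$. An ordinary inverter that fails with probability $1/q$ on $f(U)$, combined with your domination bound, already gives uniform error at most $2^c(n+c)/q$. As phrased, step (i) also misattributes Impagliazzo--Luby. That theorem does not turn an inverter for this particular $f$ into a distributional inverter for $f$; it uses an inverter for a hashed variant of $f$. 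So it is available under the global hypothesis that no one-way functions exist, not under ``$f$ is not weakly one-way''.

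\textbf{Second direction: the gap is the time bound.} You assert $K^t(G(s))\le n-\lambda+O(\log n)$ ``once $t$ exceeds the running time of $G$''. But the theorem is claimed for \emph{every} polynomial $t\ge(1+\delta)n$, for instance $t(n)=2n$. The HILL generator, or an entropy-preserving variant, runs in some fixed polynomial time $n^{a}$ that may be far larger. A linear time budget cannot absorb a superlinear simulation, so as written you only get hardness of $K^t$ for $t$ dominating the generator's running time. Your closing remark about ``absorbing the overhead'' does not say how this would happen.

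The missing step is padding. Let $G'(s,r)$ be $G(s)$ followed by $r$, with $|r|$ polynomially larger than the running time of $G$. Then:
\begin{enumerate}[label=(\roman*)]
\item $G'$ runs within $(1+\delta)$ times its output length.
\item The code of $G'$ together with $s$ and $r$ witnesses $K^t(G'(s,r))\le |s|+|r|+O(\log n)$.
\item Appending independent uniform bits preserves conditional pseudorandomness, since a distinguisher can sample $r$ itself.
\item The additive stretch $\lambda$ and the additive entropy loss are unchanged, so both remain logarithmic in the new seed length.
\end{enumerate}
You should also let $|r|$ vary so that every sufficiently large output length is realized. Failure of mild hardness only yields heuristics that succeed on infinitely many lengths, and these could otherwise avoid the sparse output lengths of a padded generator.

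Finally, the existence of conditionally secure entropy-preserving generators from arbitrary one-way functions carries the real weight of this direction, and your proposal asserts it rather than constructing it. That is acceptable only as a citation of Liu--Pass, which is exactly how the paper treats the entire theorem.
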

\begin{assumption}[\textbf{Average-Case Boundary Reflection}]
\label{ass:avg-boundary-reflection-main}
There exist a computable polynomial time bound $t$, a polynomial $N(n)$, and a polynomial-time samplable ensemble $\mathcal D{=}\{D_n\}$ supported on $R_t{\cap}\{0,1\}^n$ such that both $\Pr_{x\sim D_n}[x{\in}R]$ and $\Pr_{x\sim D_n}[x{\in}R_t{\setminus}R]$ are bounded away from $0$ uniformly in $n$, and such that the following reflection principle holds: if some probabilistic polynomial-time algorithm decides $E_{\mathcal S}(x,N(n))$ on $x{\sim}D_n$ with success probability $1{-}o(1)$, then for almost every $x{\sim}D_n$ with $x{\in}R$, one has $\EA{+}\Con_{\mathcal S}{\vdash}x{\in}R$.
\end{assumption}
Average-Case Boundary Reflection may be viewed as combining two
conceptually distinct principles. The first is a \emph{certification bridge}: whenever a canonical boundary task is actually solved by a probabilistic polynomial-time algorithm, there is an explanation of that success at the appropriate proof-theoretic level.  Formally, if $\mathcal C$ is the intended class of canonical boundary tasks, define
\[
\mathsf{Easy}(X) \;:\Longleftrightarrow\; \exists A\in\mathsf{PPT}\; \mathsf{Succ}_X(A),
\]
and
\[
\mathsf{Cert}_{T}(X) \;:\Longleftrightarrow\; \exists A\in\mathsf{PPT}\;\bigl(T\vdash \mathsf{Succ}_X(\bar A)\bigr),
\]
where $\bar A$ denotes a canonical name for $A$.  The certification
principle is
\[
\forall X\in\mathcal C\;\bigl(\mathsf{Easy}(X)\rightarrow\mathsf{Cert}_{T}(X)\bigr).
\tag{CFS}
\]

The second ingredient is a \emph{certified reflection} principle: a proof in $T$ that a boundary algorithm succeeds can be transformed into genuine information about the unbounded randomness predicate $R$.  In the present paper this is expressed by the conclusion of Average-Case Boundary Reflection, namely that sufficiently successful boundary algorithms would force
\[
\EA+\Con_{\mathcal S}\vdash x\in R
\]
for almost every sampled true random string.

Thus the overall information flow factors as
\[
\text{actual feasible success}
\Longrightarrow
\text{certified feasible success}
\Longrightarrow
\text{provability of many true }x\in R,
\]
where the final implication contradicts Chaitin-style incompleteness. Viewed this way, Average-Case Boundary Reflection is an algorithmic instance of the general \emph{No Inexplicable Efficiency} principle: efficient computation should admit an explanation at the corresponding proof-theoretic level.  The certification bridge is precisely the step that converts proof-theoretic non-certifiability into an actual computational hardness statement.

The two-sided support condition is essential. If $\mathcal D_n$ were concentrated almost entirely on genuinely random strings, then $E_{\mathcal S}(x,N(n))$ would be false with probability $1{-}o(1)$, and the constant algorithm outputting ``false'' would succeed. Thus the assumption would collapse immediately into an impossible demand that $\EA{+}\Con_{\mathcal S}$ prove almost all sampled true randomness assertions. The condition that both $R$ and $R_t{\setminus}R$ have non-negligible mass is what prevents the boundary problem from being defeated by a constant heuristic.

\begin{proposition}
\label{prop:avg-boundary-hardness-main}
\textbf{Average-Case Boundary Reflection} implies that no probabilistic polynomial-time algorithm decides $E_{\mathcal S}(\cdot,N(n))$ on $x{\sim}D_n$ with success probability $1{-}o(1)$.
\end{proposition}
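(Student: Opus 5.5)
The plan is a proof by contradiction, in the same form as Theorem~\ref{thm:reflection-bridge-implies-seth-k-ph}. We combine the reflection conclusion of \textbf{Average-Case Boundary Reflection} with the Chaitin finiteness of certified randomness (Lemma~\ref{lem:finite-certified-randomness}, or more directly the proof of Lemma~\ref{lem:no-access-no-relative-consistency} together with Chaitin incompleteness for the fixed sound theory $\EA{+}\Con_{\mathcal S}$). So suppose, toward a contradiction, that some probabilistic polynomial-time algorithm $A$ decides $E_{\mathcal S}(x,N(n))$ on $x{\sim}D_n$ with success probability $1{-}o(1)$. Take the parameters $t$, $N$, and $\mathcal D$ to be the ones supplied by the assumption.

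The first step is to invoke the reflection clause. It gives that, for almost every $x{\sim}D_n$ with $x{\in}R$, the theory $\EA{+}\Con_{\mathcal S}$ proves $x{\in}R$. I read ``almost every'' as: the conditional probability, given $x{\in}R$, of drawing an uncertified $x$ is $o(1)$. Since $\Pr_{x\sim D_n}[x{\in}R]$ is bounded below by a constant $\beta{>}0$, for all sufficiently large $n$ the set
\[
\{x\in\{0,1\}^n : x\in R,\ \EA{+}\Con_{\mathcal S}\vdash x\in R\}
\]
has $D_n$-mass at least $\beta/2$. In particular, this set is nonempty for all sufficiently large $n$.

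The second step is the counting contradiction. For each large $n$ we obtain at least one string of length $n$ whose true randomness is proved by $\EA{+}\Con_{\mathcal S}$. Over all large $n$ this gives infinitely many distinct true assertions $x{\in}R$ provable in $\EA{+}\Con_{\mathcal S}$. This theory is sound, because $\mathcal S$ is sound and so $\Con_{\mathcal S}$ is true, and it is effectively axiomatized. Chaitin-style incompleteness, as in the footnote of the introduction, therefore bounds the length of every such provable true instance by $\chi$. That contradicts the existence of provable instances at every large length, so no such algorithm $A$ exists.

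I expect the main obstacle to be the quantifier reading of ``almost every'', not the argument itself. If ``almost every'' were allowed to mean merely ``a $1{-}o(1)$ fraction of some vanishing event'', nonemptiness could fail. So I would make explicit that the positive lower bound on $\Pr[x{\in}R]$ is exactly what turns the almost-every clause into nonempty certified sets at all large lengths. I would also note in passing that the second support condition, on $R_t{\setminus}R$, is not needed for this implication. That condition only ensures that the hypothesis of the reflection principle is not trivially satisfiable by a constant algorithm.
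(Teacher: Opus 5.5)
Your proposal is correct and follows the paper's own proof: a $1{-}o(1)$ decider satisfies the antecedent of \textbf{Average-Case Boundary Reflection}, so $\EA{+}\Con_{\mathcal S}\vdash x{\in}R$ for a positive-mass set of true random strings, which contradicts Chaitin-style incompleteness for the sound effective theory $\EA{+}\Con_{\mathcal S}$. The paper states this as ``positive-probability, and hence infinite''; your use of the lower bound on $\Pr_{x\sim D_n}[x{\in}R]$ to obtain a certified string at every sufficiently large length spells out that step correctly.
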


\begin{proof}
A success-$1{-}o(1)$ decider would satisfy the antecedent of \textbf{Average-Case Boundary Reflection}. The conclusion would imply $\EA{+}\Con_{\mathcal S}{\vdash}x{\in}R$ for a positive-probability, and hence infinite, set of true random strings~$x$. This contradicts Chaitin-style incompleteness for the fixed weak base $\EA{+}\Con_{\mathcal S}$. Therefore no such decider exists.
\end{proof}

To connect this native boundary-hardness statement to one-way functions, one needs a calibration between the proof-theoretic boundary predicate and time-bounded Kolmogorov complexity.

\begin{assumption}[\textbf{Boundary Calibration}]
\label{ass:boundary-calibration-main}
The ensemble $\mathcal D$, the threshold $N(n)$, and the time bound $t$ are chosen so that two conditions hold. First, on $\mathcal D$ the boundary-exposure predicate $E_{\mathcal S}(\cdot,N(n))$ tracks $t$-time-bounded Kolmogorov complexity up to a polynomial-time computable reindexing. Second, this correspondence bridges $\mathcal D$ to the uniform distribution on which Liu--Pass operates: any average-case algorithm computing $\Kt$ over the uniform distribution on $\{0,1\}^n$ that is strong enough to violate the Liu--Pass hardness condition yields, through the reindexing, a $1{-}o(1)$-success heuristic for $E_{\mathcal S}(\cdot,N(n))$ on $x{\sim}D_n$.
\end{assumption}

\begin{theorem}
\label{thm:owf-from-boundary-main}
Assume \textbf{Average-Case Boundary Reflection} and \textbf{Boundary Calibration}. Then one-way functions exist.
\end{theorem}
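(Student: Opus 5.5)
The argument is a contrapositive chain through the Liu--Pass characterization (Theorem~\ref{thm:liu-pass-main}). Fix the time bound $t$, threshold $N(n)$, and ensemble $\mathcal D$ supplied by \textbf{Average-Case Boundary Reflection}. These are also the parameters for which \textbf{Boundary Calibration} is assumed. First we check that $t$ meets the Liu--Pass hypothesis $t(n)\geq(1{+}\delta)n$; if it does not, we enlarge $t$ to $\max\{t(n),2n\}$. This enlargement is harmless only if the calibration is read for the adjusted bound, so I would build that reading into the choice of parameters in \textbf{Boundary Calibration} rather than into the proof.

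Now suppose, for contradiction, that one-way functions do not exist. By Theorem~\ref{thm:liu-pass-main}, $K^t$ is then not mildly hard on average over the uniform distribution. Concretely, for every polynomial $p$ there is a probabilistic polynomial-time algorithm computing $K^t(x)$ correctly on more than a $1{-}1/p(n)$ fraction of $x\in\{0,1\}^n$ for infinitely many $n$.

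The second clause of \textbf{Boundary Calibration} turns such an algorithm into a probabilistic polynomial-time heuristic for $E_{\mathcal S}(\cdot,N(n))$ on $x\sim D_n$ with success probability $1{-}o(1)$. The reduction goes through the polynomial-time reindexing: compute $K^t$ on the reindexed input and compare the value with the threshold that the first clause of the calibration associates with the boundary predicate. Proposition~\ref{prop:avg-boundary-hardness-main} says no such heuristic exists, which is the contradiction. Hence $K^t$ is mildly hard on average, and one-way functions exist by the forward direction of Theorem~\ref{thm:liu-pass-main}.

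The main obstacle is quantifier bookkeeping between the two notions. The negation of Liu--Pass hardness gives success $1{-}1/p(n)$ only for each fixed $p$ and possibly only infinitely often. Proposition~\ref{prop:avg-boundary-hardness-main}, by contrast, rules out success $1{-}o(1)$, and in its proof the Chaitin contradiction needs infinitely many sampled lengths.

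I would resolve the first mismatch by diagonalizing over polynomials $p_k(n)=n^k$. The idea is to build a single algorithm that, at length $n$, runs the $p_k$-algorithm for the largest $k$ still valid. This yields a $1{-}o(1)$ success rate along an infinite set of lengths. Making the choice of $k$ uniform in $n$ is delicate, and if uniformity fails I would instead rely on a nonuniform index sequence.

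For the second mismatch I would note that an infinite set of lengths suffices for the contradiction. The Chaitin argument only needs infinitely many distinct true assertions $x\in R$ provable in $\EA{+}\Con_{\mathcal S}$. If \textbf{Average-Case Boundary Reflection} is read strictly as requiring success at all large $n$, I would instead read \textbf{Boundary Calibration} as already delivering the heuristic in the form its antecedent requires, so that the theorem follows directly.
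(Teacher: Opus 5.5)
This is essentially the paper's proof, run as a contradiction rather than a forward chain: Proposition~\ref{prop:avg-boundary-hardness-main} excludes a $1{-}o(1)$-success heuristic for $E_{\mathcal S}(\cdot,N(n))$ on $D_n$, the second clause of \textbf{Boundary Calibration} says any $\Kt$-algorithm violating Liu--Pass hardness would produce one, so $\Kt$ is mildly hard on average and Theorem~\ref{thm:liu-pass-main} yields one-way functions. The paper settles your bookkeeping worries the way your final fallback does, by reading \textbf{Boundary Calibration} (including its parameter choice, which leaves $t(n){\geq}(1{+}\delta)n$ implicit) as directly supplying the heuristic the reflection antecedent needs; your diagonalization or nonuniform-index patch is therefore unnecessary, and a nonuniform heuristic would not meet that probabilistic polynomial-time antecedent anyway.
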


\begin{proof}
By Proposition~\ref{prop:avg-boundary-hardness-main}, no probabilistic polynomial-time algorithm decides $E_{\mathcal S}(\cdot,N(n))$ on $x{\sim}D_n$ with success probability $1{-}o(1)$. By \textbf{Boundary Calibration}, any algorithm computing $\Kt$ over the uniform distribution strong enough to violate the Liu--Pass hardness condition would transfer to such a forbidden heuristic for $E_{\mathcal S}(\cdot,N(n))$. Hence $\Kt$ is mildly hard on average over the uniform distribution. By Theorem~\ref{thm:liu-pass-main}, one-way functions exist.
\end{proof}

The theorem isolates the cryptographic content cleanly. The hardness assumption is not an ad hoc assertion that one-way functions exist. It is the same reflection idea used throughout the paper, applied to an average-case boundary predicate. The remaining mathematical work is calibration: proving that the proof-theoretic boundary between $\mathcal S{+}(x{\in}R)$ consistency and contradiction exposure tracks the Liu--Pass time-bounded Kolmogorov boundary on a suitable two-sided ensemble.

\paragraph{Relation to computational boundary hardness.} Recent work of Liu and Pass~\cite{LiuPassBoundary2025} studies a different but closely related boundary phenomenon for time-bounded Kolmogorov complexity. They show that boundary hardness of randomized $K^t$ suffices for one-way functions, and obtain a corresponding characterization for plain $K^t$ under the standard derandomization assumption $\E\not\subseteq\mathrm{ioSIZE}(2^{o(n)})$. Their boundary separates strings that are time-bounded random from strings lying just inside the time-bounded compressible region. This provides independent computational support for treating a randomness boundary as the relevant cryptographic interface. It does not, however, supply the proof-theoretic calibration required here: the predicate $E_{\mathcal S}(x,N)$ records bounded contradiction exposure from the axiom $x{\in}R$, so one still needs a reduction showing that an algorithm for the Liu--Pass boundary yields a high-success heuristic for $E_{\mathcal S}$ on the two-sided ensemble, or conversely. Thus \textbf{Boundary Calibration} should be presented as the paper's remaining proof-theoretic transfer problem, not as a restatement of the computational boundary-hardness theorem.

\subsection{Natural Proofs as a Compatibility Test}
\label{subsec:naturalproofs-main}

The next application is explanatory rather than deductive. The natural-proofs barrier of Razborov--Rudich~\cite{RazborovRudich} is the closest classical analog of the information-constraint viewpoint: under the existence of sufficiently hard pseudorandom functions, no large constructive property can be useful against sufficiently powerful circuit classes. The precise claim here is not that \textbf{KH} proves this barrier. The claim is that the random-information framework is compatible with it and suggests why circuit lower bounds for the canonical random-information families should require methods that do not efficiently expose a large randomness boundary.

Recall that a combinatorial property $\mathcal P{=}\{\mathcal P_n\}$ is $\PP/\poly$-natural if membership in $\mathcal P_n$ is decidable by polynomial-size circuits from the $2^n$-bit truth table and $\Pr_f[f{\in}\mathcal P_n]{\geq}2^{-O(n)}$. It is useful against a circuit class $\mathcal C$ if no family in $\mathcal C$ has the property for infinitely many input lengths. Thus a useful property accepts sufficiently many hard functions while eventually rejecting every function computed by circuits in $\mathcal C$.

For a polynomial time bound $t$, let $\mathcal R_t^{\mathrm{tt}}$ be the family of Boolean functions whose length-$2^n$ truth tables satisfy $Q_0^t$. A small circuit for a function gives a short efficient description of its truth table, so small-circuit truth tables lie on the time-bounded-compressible side of the boundary. Conversely, almost every truth table has high time-bounded Kolmogorov complexity. This makes $\mathcal R_t^{\mathrm{tt}}$ a natural model for the random side of the circuit-complexity boundary.

This comparison does not itself reproduce the Razborov--Rudich theorem. A large constructive property useful against $\PP/\poly$ need not decide the exact predicate $Q_0^t$, and high time-bounded Kolmogorov complexity by itself does not rule out the existence of a natural property. Deriving a natural-proofs barrier from the present framework would require an additional theorem showing that any large constructive property useful against the relevant circuit class yields feasible access to a randomness boundary forbidden by an appropriate reflection principle.

The established natural-proofs barrier nevertheless supplies an important calibration test. It confirms, under its cryptographic hypothesis, that large efficient properties useful against strong circuit classes cannot be available. The information-constraint framework points in the same qualitative direction: a lower-bound method that efficiently identifies a large family of truth tables as hard risks exposing usable information about the boundary between efficiently describable and random truth tables. At present this is a structural analogy and a compatibility result, not a consequence of \textbf{KH}.
\subsection{Derandomization}
\label{subsec:derandomization-main}
The final application concerns hardness-vs-randomness. For circuit complexity there is a standard metacomplexity interface: uppercase $\KT$ complexity and the corresponding minimum $\KT$ problem $\MKTP$. Fix a random-access universal machine. A $\KT$ description of a string $z{\in}\{0,1\}^N$ is a program $d$ and a time bound $T$ such that, for every bit position $i{\leq}N$, the machine on input $(d,i)$ outputs the $i$th bit of $z$ within $T$ steps. Define $\KT(z)$ to be the minimum value of $|d|{+}T$ over all such descriptions.

This is the circuit-calibrated form of time-bounded Kolmogorov complexity. Allender, Grochow, van Melkebeek, Moore, and Morgan~\cite{AllenderGrochowMelkebeekMooreMorgan} treat $\MKTP$ as the variant of circuit minimization in which circuit size is replaced by a polynomially related Kolmogorov measure. Thus $\KT/\MKTP$ supplies the standard metacomplexity proxy for asking whether a truth table has a small efficient representation.

To apply the Impagliazzo--Wigderson hardness-vs-randomness theorem directly, the hard predicate must lie in $\E$. We therefore fix a linear rather than an arbitrary polynomial threshold. Fix a constant $\alpha$ with $0{<}\alpha{<}1$, and let $L_{\alpha}^{\KT}{:=}\{z:\KT(z){\leq}\alpha|z|\}$. This fixed linear-threshold slice of $\MKTP$ lies in $\E$ by exhaustive enumeration of descriptions and time bounds whose combined $\KT$ cost is at most $\alpha|z|$. An exponential circuit lower bound for this fixed language would therefore provide an explicit metacomplexity witness to the Impagliazzo--Wigderson hardness hypothesis.

\begin{assumption}[\textbf{Exponential \(\MKTP\) Circuit Hardness}]
\label{ass:exp-random-info-main}
There are constants $\alpha,\epsilon$ with $0{<}\alpha,\epsilon{<}1$ such that every Boolean circuit deciding $L_{\alpha}^{\KT}$ on inputs of length $N$ has size at least $2^{\epsilon N}$ for all sufficiently large $N$.
\end{assumption}

\begin{proposition}
\label{prop:exp-random-info-derand-main}
Assume \textbf{Exponential \(\MKTP\) Circuit Hardness}. Then $\PP{=}\BPP$.
\end{proposition}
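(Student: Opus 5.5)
The plan is to feed the hypothesis directly into the Impagliazzo--Wigderson hardness-versus-randomness theorem. That theorem says that if some language in $\E=\mathbf{DTIME}(2^{O(N)})$ requires circuits of size $2^{\epsilon N}$ for all sufficiently large $N$, then $\BPP=\PP$. Only two things need checking: that $L_{\alpha}^{\KT}$ is in $\E$, and that the circuit lower bound in \textbf{Exponential \(\MKTP\) Circuit Hardness} has exactly the form Impagliazzo--Wigderson require.

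First, membership in $\E$. On input $z$ of length $N$, we enumerate all pairs $(d,T)$ with $|d|+T\le\alpha N$; there are at most $2^{\alpha N}\cdot\alpha N$ of them. For each pair, we run the random-access universal machine on $(d,i)$ for every $i\le N$, each run for $T\le\alpha N$ steps, and compare the output bits with $z$. We accept if some pair reproduces all of $z$. The total running time is $2^{\alpha N}\poly(N)\le 2^{O(N)}$, so $L_{\alpha}^{\KT}\in\E$. The text before the assumption already records this, and the proof only has to spell it out. One technical point is that the machine must also be able to signal ``$i>N$,'' or else we fix the convention that descriptions are evaluated only for $i\le N$. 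The definition given in the paper takes the second convention, so the enumeration is well defined.

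Second, the hardness hypothesis. The assumption gives circuit size at least $2^{\epsilon N}$ for deciding $L_{\alpha}^{\KT}$ on length-$N$ inputs, for all large $N$. This is the almost-everywhere exponential worst-case circuit lower bound that Impagliazzo--Wigderson assume. Applying their theorem (worst-case to average-case amplification, then the Nisan--Wigderson generator with logarithmic seed) yields a pseudorandom generator against polynomial-size circuits, computable in time $2^{O(\log n)}$. Enumerating all seeds then derandomizes every $\BPP$ algorithm, so $\BPP\subseteq\PP$. The reverse inclusion is trivial, giving $\PP=\BPP$.

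The main obstacle is bookkeeping rather than mathematics: we must confirm that the fixed linear threshold $\alpha N$ really bounds both $|d|$ and $T$, so the search over $(d,T)$ is finite and exponential rather than doubly exponential. We also need the universal-machine simulation overhead to be polynomial in $T$ and $N$. If the enumeration had to range over time bounds up to some larger polynomial, we would instead choose the threshold so that the combined cost is still $O(N)$. Either way, the language stays in $\E$ and the Impagliazzo--Wigderson theorem applies unchanged.
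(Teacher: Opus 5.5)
Your proposal is correct and follows the paper's proof. You place $L_{\alpha}^{\KT}$ in $\E$ by enumerating all descriptions $(d,T)$ with $|d|+T\le\alpha N$ and checking each in time $2^{O(N)}$, then feed the assumed almost-everywhere $2^{\epsilon N}$ circuit lower bound into Impagliazzo--Wigderson to get $\BPP\subseteq\PP$ (the reverse inclusion being trivial); your extra bookkeeping on counting descriptions, simulation overhead, and the $i\le N$ convention spells out what the paper states in one line.
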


\begin{proof}
The language $L_{\alpha}^{\KT}$ lies in $\E$: one can enumerate all descriptions and time bounds whose combined $\KT$ cost is at most $\alpha N$ and verify them in time $2^{O(N)}$. The assumed circuit lower bound therefore supplies an $\E$ language with circuit complexity $2^{\Omega(N)}$. The Impagliazzo--Wigderson hardness-vs-randomness theorem gives $\BPP{\subseteq}\PP$, and the reverse inclusion is immediate.
\end{proof}

This reformulation should not be read as replacing the unbounded predicate $R$ in \textbf{KH}. The unbounded predicate supplies the proof-theoretic source of inaccessibility. The $\KT/\MKTP$ predicate supplies the circuit-calibrated interface. In the terminology of Table~\ref{tab:canonical-hard-sets}, $\MKTP$ is characteristic rather than fully canonical for derandomization: it is the standard metacomplexity object polynomially aligned with circuit size, but no known theorem says that $\BPP{=}\PP$ is equivalent to exponential circuit lower bounds for $\MKTP$.
\subsection{Sparse Randomness and Feige-Style Refutation}
\label{subsec:sparse-randomness-feige}

The preceding derandomization discussion uses time-bounded Kolmogorov randomness to express the idea that efficient algorithms should not exploit randomness information unavailable to the relevant theory. A related issue arises for Feige's random 3-$\SAT$ hypothesis. The usual predicate $R{=}\{x{:}K_U(x){\geq}|x|{-}d\log |x|\}$ is not well matched to Feige-style sparsity, because it measures randomness among all bitstrings of a given length. A string of length $m$ and Hamming weight $\rho m$, for fixed $\rho{<}1/2$, has at most $2^{H(\rho)m}$ possibilities, and therefore cannot generally have Kolmogorov complexity close to~$m$. To model sparse random formulas, the randomness predicate should instead be relative to the appropriate sparse ensemble.

\subsubsection{Sparse Kolmogorov randomness}
\label{subsubsec:sparse-kolmogorov-randomness}

For $m{\in}\mathbb N$ and $w{\leq}m$, define the sparse randomness predicate
$R^{\mathrm{sp}}_{m,w}{=}\{x{\in}\{0,1\}^m{:}$ $|x|_1{=}w{\land}K_U(x{\mid} m,w){\geq}\log\binom{m}{w}{-}d\log m\}$.
Thus $x$ is random not among all $m$-bit strings, but among strings with exactly $w$ ones. This is the entropy-correct notion of Kolmogorov randomness for sparse objects. Since $\log\binom{m}{w}{=}w\log(m/w){+}O(w)$, the density $w/m$ may be far below $1/2$, provided the sparse slice still has enough entropy to support the finite-scale hardness being asserted.

This gives a natural Feige-style specialization. Let $\mathcal C_n$ be the set of possible 3-literal clauses on $n$ variables, so $|\mathcal C_n|{=}N_n{=}\Theta(n^3)$. We use the distinct-clause model for notational simplicity. The usual model sampling $\Delta n$ clauses independently with replacement differs from this model by total variation $o(1)$ at constant density, since the probability of a repeated clause is $O((\Delta n)^2/N_n){=}O(1/n)$. Nothing in the argument depends on this distinction. A random 3-$\CNF$ formula with $\Delta n$ clauses is represented by a subset $S{\subseteq}\mathcal C_n$ of size $w_n{=}\Delta n$, equivalently by a bitstring of length $N_n$ and Hamming weight $\Delta n$. Its ambient density is $\Delta n/N_n{=}\Theta(1/n^2)$. Define
$R^{\mathrm{Feige}}_{n,\Delta}{=}\{S{\subseteq}\mathcal C_n:|S|{=}\Delta n\ {\land}\ K_U(S\mid n,\Delta){\geq}\log\binom{N_n}{\Delta n}{-}d\log n\}$.
A formula in $R^{\mathrm{Feige}}_{n,\Delta}$ is therefore Kolmogorov-random within the same sparse-support regime as Feige's distribution.

This predicate has the right density profile. The support has only $\Delta n$ clauses out of $\Theta(n^3)$ possible clauses, so the visible density is $\Theta(1/n^2)$, not close to $1/2$. The randomness lies in the choice of the sparse support. Moreover, by the standard counting bound, all but an $n^{-d}$ fraction of clause supports of size $\Delta n$ lie in $R^{\mathrm{Feige}}_{n,\Delta}$. Thus replacing Feige-random formulas by formulas satisfying $R^{\mathrm{Feige}}_{n,\Delta}$ does not change the distribution on a $1{-}o(1)$ fraction of instances.

This subsubsection should be read as a density statement, not as a proof of Feige's hypothesis. It shows that the information-constraint framework can be placed in a Feige-style entropy regime: the hard instances may be sparse random supports of clauses, rather than nearly unbiased bitstrings. To obtain Feige's original random-refutation hypothesis, one still needs a bridge principle connecting successful refutation of the usual random 3-$\CNF$ distribution to feasible proofs of the corresponding instancewise unsatisfiability claims.

\subsubsection{Bridge assumptions implying Feige's hypothesis}
\label{subsubsec:feige-bridge-assumptions}

We now isolate the bridge needed to derive Feige's hypothesis from the finite-scale hardness framework. Write $\mathcal F_{n,\Delta}$ for the distribution on $3$-$\CNF$ formulas on $n$ variables obtained by drawing $\Delta n$ clauses independently and uniformly from the $\Theta(n^3)$ possible $3$-literal clauses. Since repeated clauses occur with probability $o(1)$ at constant density, we identify a formula with its clause support on the $1{-}o(1)$ event that no repetition occurs. Feige's hypothesis says that, for suitable constant densities $\Delta$, there is no polynomial-time proper refuter for random 3-$\SAT$: no polynomial-time algorithm which never declares a satisfiable formula unsatisfiable and which declares $\Phi$ unsatisfiable with probability $1{-}o(1)$ over $\Phi{\sim}\mathcal F_{n,\Delta}$. 

\begin{assumption}[\textbf{Sparse Feige Hardness}]
\label{ass:sparse-feige-hardness}
For some sufficiently large constant $\Delta$ and every admissible theory
$\mathcal S$ in the intended range: for a $1{-}o(1)$ fraction of formulas
$\Phi{\sim}\mathcal F_{n,\Delta}$ satisfying $R^{\mathrm{Feige}}_{n,\Delta}$, the
theory $\mathcal S$ has no polynomial-size proof of the instancewise claim
$\Unsat(\Phi)$.
\end{assumption}

\begin{assumption}[\textbf{Refutation Reflection}]
\label{ass:refutation-reflection}
Fix an admissible theory $\mathcal S$ and a density $\Delta$. If there is a polynomial-time proper refuter succeeding with probability $1{-}o(1)$ on $\Phi{\sim}\mathcal F_{n,\Delta}$, then, for a $1{-}o(1)$ fraction of such formulas $\Phi$, the theory $\mathcal S$ has polynomial-size proofs of $\Unsat(\Phi)$.
\end{assumption}

\begin{theorem}
\label{thm:feige-from-sparse-bridges}
Assume \textbf{Sparse Feige Hardness} and \textbf{Refutation Reflection}. Then Feige's random 3-$\SAT$ refutation hypothesis holds at density~$\Delta$.
\end{theorem}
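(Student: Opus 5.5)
The plan is a short contradiction argument that combines the two assumptions through a density count. Fix the density $\Delta$ supplied by \textbf{Sparse Feige Hardness}, and fix an admissible theory $\mathcal S$ in the intended range. Suppose, toward a contradiction, that Feige's hypothesis fails at density $\Delta$. Then there is a polynomial-time proper refuter $A$: it never declares a satisfiable formula unsatisfiable, and it declares $\Phi$ unsatisfiable with probability $1{-}o(1)$ over $\Phi{\sim}\mathcal F_{n,\Delta}$. By \textbf{Refutation Reflection} applied to $A$, there is a set $G_n$ of formulas of $\mathcal F_{n,\Delta}$-measure $1{-}o(1)$ such that $\mathcal S$ has polynomial-size proofs of $\Unsat(\Phi)$ for every $\Phi\in G_n$.

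Next I will bring in the sparse randomness predicate. By the counting bound in \S\ref{subsubsec:sparse-kolmogorov-randomness}, the event that $\Phi$ has no repeated clause and its clause support lies in $R^{\mathrm{Feige}}_{n,\Delta}$ has probability at least $1{-}O(n^{-d}){-}o(1)=1{-}o(1)$. Call this event $W_n$. Conditioned on $W_n$, the formula is uniform over supports that satisfy the predicate. Since $\Pr[W_n]{=}1{-}o(1)$, any event of unconditional measure $1{-}o(1)$ still has conditional measure $1{-}o(1)$ given $W_n$. In particular, $\Pr[G_n\mid W_n]{=}1{-}o(1)$.

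\textbf{Sparse Feige Hardness} supplies a set $H_n\subseteq W_n$ with $\Pr[H_n\mid W_n]{=}1{-}o(1)$ on which $\mathcal S$ has no polynomial-size proof of $\Unsat(\Phi)$. A union bound gives $\Pr[G_n\cap H_n\mid W_n]\ge 1{-}o(1)$, so $G_n\cap H_n$ is nonempty for all sufficiently large $n$. Any $\Phi$ in the intersection both has and lacks a polynomial-size $\mathcal S$-proof of $\Unsat(\Phi)$. This contradiction shows that no such refuter exists, which is Feige's hypothesis at density $\Delta$.

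The main obstacle is making ``polynomial-size'' mean the same thing in both assumptions. In \textbf{Refutation Reflection} the proofs come with a fixed polynomial bound $n^c$ that depends only on $A$. In \textbf{Sparse Feige Hardness}, ``no polynomial-size proof'' is instancewise. I will read it as: for every $c$, the hard fraction holds against proofs of size $n^c$. With that reading I instantiate the hardness assumption at the $c$ obtained from $A$, and the two $o(1)$ error terms are then compared for that fixed $c$.

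A secondary point is that the two assumptions must refer to the same $\mathcal S$ and the same $\Delta$. \textbf{Refutation Reflection} holds for every admissible $\mathcal S$ and density, so I instantiate it at the $\Delta$ from the hardness assumption and at any $\mathcal S$ in its intended range.
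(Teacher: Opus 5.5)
Your proposal is correct and follows the paper's own argument: assume a proper refuter, apply \textbf{Refutation Reflection}, intersect with the full-measure event $R^{\mathrm{Feige}}_{n,\Delta}$, condition on that event, and contradict \textbf{Sparse Feige Hardness}. Your explicit union bound, and your step of fixing the exponent $c$ from the refuter before instantiating \textbf{Sparse Feige Hardness}, spell out a quantifier matching that the paper leaves implicit.
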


\begin{proof}
Suppose toward a contradiction that there is a polynomial-time proper refuter succeeding with probability $1{-}o(1)$ on $\Phi{\sim}\mathcal F_{n,\Delta}$. By \textbf{Refutation Reflection}, for a $1{-}o(1)$ fraction of formulas $\Phi{\sim}\mathcal F_{n,\Delta}$, the theory $\mathcal S$ has polynomial-size proofs of $\Unsat(\Phi)$. By the counting bound above, a $1{-}o(1)$ fraction of formulas from $\mathcal F_{n,\Delta}$ also satisfy $R^{\mathrm{Feige}}_{n,\Delta}$. The intersection of these two $1{-}o(1)$ events still has probability $1{-}o(1)$, and conditioning on $R^{\mathrm{Feige}}_{n,\Delta}$ preserves a $1{-}o(1)$ fraction because $R^{\mathrm{Feige}}_{n,\Delta}$ itself has probability $1{-}o(1)$. Hence $\mathcal S$ has polynomial-size proofs of $\Unsat(\Phi)$ for a $1{-}o(1)$ fraction of the sparse Kolmogorov-random formulas covered by \textbf{Sparse Feige Hardness}. This contradicts \textbf{Sparse Feige Hardness}. Therefore no such polynomial-time proper refuter exists.
\end{proof}
\paragraph{What the theorem does and does not contain.} The deductive step is immediate; the mathematical content sits earlier, in the full-measure embedding of \S\ref{subsubsec:sparse-kolmogorov-randomness}: a $1{-}o(1)$ fraction of $\mathcal F_{n,\Delta}$ lies in $R^{\mathrm{Feige}}_{n,\Delta}$, so sparse Kolmogorov-randomness is not a substitute distribution but a full-measure condition inside Feige's own distribution. \textbf{Sparse Feige Hardness} and \textbf{Refutation Reflection} are the load-bearing hypotheses; the theorem only records that, once random refutation is read as an information constraint on sparse random clause supports, Feige's hypothesis is exactly their conjunction.
\section{Disjoint $\NP$ Pairs from Feasible Reflection}
\label{sec:disjoint-np-section}

Disjoint $\NP$ pairs are a standard interface between proof complexity and the existence of optimal proof systems. Recall that a \emph{disjoint $\NP$ pair} is a pair $(A,B)$ of $\NP$ languages with $A{\cap}B{=}\emptyset$; a \emph{separator} is a set $C$ with $A{\subseteq}C$ and $C{\cap}B{=}\emptyset$; and the pair is \emph{$\PP$-separable} if some such $C$ lies in~$\PP$. Razborov's canonical pair construction associates disjoint $\NP$ pairs with propositional proof systems, and the Grollmann--Selman theory connects complete disjoint $\NP$ pairs with optimal proof systems. This section gives the corresponding random-axiom boundary pair and isolates the reflection step needed to make it $\PP$-inseparable.

\paragraph{The canonical pair.}
Fix an admissible base theory $\mathcal S$. Define
\[
A_{\mathcal S}{:=}\{(x,1^m):\text{there is an }\mathcal S{+}(x{\in}R)\text{-proof of }\bot\text{ of length }\le m\},
\]
and
\[
B_{\mathcal S}{:=}\{(x,1^m):\text{there is an }\mathcal S\text{-proof of }\Con_{\mathcal S{+}(x{\in}R)}(m)\text{ of length }\le m\}.
\]
Both languages are in~$\NP$: membership in $A_{\mathcal S}$ is witnessed by a contradiction proof, and membership in $B_{\mathcal S}$ is witnessed by an $\mathcal S$-proof of the bounded-consistency statement. The verifier treats $x{\in}R$ as the distinguished added axiom; it does not verify the truth of $x{\in}R$.

The pair $(A_{\mathcal S},B_{\mathcal S})$ is disjoint. If $(x,1^m){\in}A_{\mathcal S}$, then $\Con_{\mathcal S{+}(x{\in}R)}(m)$ is false. Since $\mathcal S$ is sound, $\mathcal S$ cannot also have a proof of that false bounded-consistency statement. Hence $(x,1^m){\notin}B_{\mathcal S}$.

\begin{assumption}[\textbf{Disjoint-Pair Feasible Reflection}]
\label{ass:dnp-reflection}
If the canonical pair $(A_{\mathcal S},B_{\mathcal S})$ is $\PP$-separable at all sufficiently large random-axiom lengths, then the separator yields weak-base access to the random axiom or feasible bounded-consistency proofs for the random extension: for a density-$1$ set of true $x{\in}R$ at those lengths, either $\EA{+}\Con_{\mathcal S}{\vdash}x{\in}R$, or $\mathcal S{\sststile{}{n^{O(1)}}}\Con_{\mathcal S{+}(x{\in}R)}(n)$.
\end{assumption}

\begin{proposition}
\label{prop:dnp-payoff}
Assume \textbf{Disjoint-Pair Feasible Reflection} and \textbf{SETH-K-Finite}. Then there is no $\PP$ separator for the canonical random-axiom pair $(A_{\mathcal S},B_{\mathcal S})$ whose separation is correct on all sufficiently large random-axiom lengths.
\end{proposition}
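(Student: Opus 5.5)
The argument is by contradiction. Suppose some $C{\in}\PP$ separates $(A_{\mathcal S},B_{\mathcal S})$ correctly at all sufficiently large random-axiom lengths. Then \textbf{Disjoint-Pair Feasible Reflection} gives, at those lengths, a density-$1$ set of true $x{\in}R$ for which one of two disjuncts holds:
\[
\text{(a)}\ \EA{+}\Con_{\mathcal S}{\vdash}x{\in}R, \qquad\text{or}\qquad \text{(b)}\ \mathcal S{\sststile{}{n^{O(1)}}}\Con_{\mathcal S{+}(x{\in}R)}(n).
\]
The plan is to show that, at every length $m$ beyond a fixed threshold, \emph{no} true $x{\in}R\cap\{0,1\}^m$ satisfies either disjunct. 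Since $R$ itself has density $1{-}O(m^{-d})$, a density-$1$ subset of true random strings at infinitely many lengths is nonempty, so this gives the contradiction.

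Disjunct (a) is ruled out by Chaitin-style incompleteness for the sound theory $\EA{+}\Con_{\mathcal S}$. As in the proof of Lemma~\ref{lem:finite-certified-randomness}, that theory proves only finitely many true assertions $x{\in}R$, so (a) fails once $m$ exceeds a threshold $k$. I will take $k\ge k_R^{\mathcal S}$.

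Disjunct (b) is handled with \textbf{SETH-K-Finite}. Fix $m{>}k_R^{\mathcal S}$ and, say, $\epsilon{=}1/2$. By Lemma~\ref{lem:finite-certified-randomness}, $\EA{\not\vdash}\Con_{\mathcal S}{\rightarrow}\Con_{\mathcal S{+}(x{\in}R)}$ for every true $x{\in}R\cap\{0,1\}^m$. The biconditional in \textbf{SETH-K-Finite} then gives $\mathcal S\nvdash_{\leq 2^{n/2}}\Con_{\mathcal S{+}(x{\in}R)}(n)$ for all $n{>}N_{R,m,1/2}^{\mathcal S}$. Now suppose (b) held, with proofs of size at most $n^c$. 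Since $n^c{<}2^{n/2}$ for large $n$, we would get proofs of length at most $2^{n/2}$ for all large $n$, which is a contradiction. So (b) fails for every true $x{\in}R$ of length $m$. This is the string-uniform version of Theorem~\ref{theoremdensity}.

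One step needs care: the hypotheses. \textbf{SETH-K-Finite} requires $\mathcal S$ to be finitely axiomatized and sequential, with $\mathcal S\sststile{}{n^{O(1)}}\Con_{\mathcal S}(n)$. The proposition names only an admissible $\mathcal S$, so I will read it as tacitly restricted to theories satisfying those conditions, which is the only setting in which \textbf{SETH-K-Finite} is stated.

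The main obstacle I expect is quantifier bookkeeping rather than mathematics. ``Correct on all sufficiently large random-axiom lengths'' has to match the antecedent of the reflection assumption exactly. The reflection conclusion is only density-$1$, so I must make sure the set it produces is nonempty at some length $m{>}\max(k,k_R^{\mathcal S})$, and I will get this from the counting bound on $R$. Once that is in place, the elimination of both disjuncts is immediate from the earlier results, and no $\PP$ separator of the stated kind can exist.
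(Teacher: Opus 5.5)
Your proposal is correct and follows essentially the same route as the paper. The paper likewise picks a true $x{\in}R$ from the density-$1$ set with $|x|{>}k_R^{\mathcal S}$. It rules out $\EA{+}\Con_{\mathcal S}{\vdash}x{\in}R$ via the finiteness behind Lemma~\ref{lem:finite-certified-randomness}, and rules out polynomial simulation via \textbf{SETH-K-Finite} at $\epsilon{=}1/2$; your explicit nonemptiness check and your note about the tacit finitely-axiomatized, sequential, $\mathcal S{\sststile{}{n^{O(1)}}}\Con_{\mathcal S}(n)$ hypotheses only make explicit what the paper leaves implicit.
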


\begin{proof}
Suppose such a $\PP$ separator existed. By \textbf{Disjoint-Pair Feasible Reflection}, for a density-$1$ set of true $x{\in}R$ at the relevant lengths, either $\EA{+}\Con_{\mathcal S}{\vdash}x{\in}R$, or $\mathcal S{\sststile{}{n^{O(1)}}}\Con_{\mathcal S{+}(x{\in}R)}(n)$.

Choose such an $x$ with $m{:=}|x|{>}k_R^{\mathcal S}$. The first alternative is impossible by Lemma~\ref{lem:finite-certified-randomness}. In the second alternative, fix $\epsilon{:=}1/2$. A polynomial proof bound is eventually at most $2^{n/2}$. Hence, for every sufficiently large $n{>}N_{R,m,1/2}^{\mathcal S}$, one has $\mathcal S\vdash_{\leq 2^{(1-\epsilon)n}}\Con_{\mathcal S{+}(x{\in}R)}(n)$. But the right-hand side of the \textbf{SETH-K-Finite} biconditional is false because $m{>}k_R^{\mathcal S}$. This is a contradiction.
\end{proof}
\section{Independence: Status and Program}
\label{sec:independence}

No formal independence result is currently known for any natural instance of \textbf{HRC} or \textbf{KH}. This is the most important metamathematical gap in the program. The principles are meant as external claims about genuine polynomial-size proof families: if a sound theory $\mathcal S$ lacks a weak-base relative-consistency explanation for $\mathcal S{+}\phi$, then $\mathcal S$ should not have polynomial-size proofs of $\Con_{\mathcal S{+}\phi}(n)$. To turn this into an independence theorem, one must first decide which formal reading is being tested and then control the nonstandard proof objects that appear in models of arithmetic or set theory.

\paragraph{Internal and external readings.}
For fixed $\mathcal S$, $\phi$, and $c$, let $\mathrm{Sim}^{c}_{\mathcal S,\phi}(n)$ assert that there is an $\mathcal S$-proof of $\Con_{\mathcal S{+}\phi}(n)$ of length at most $n^c$. The appropriate internal shadow of failure of the exponent-$c$ simulation is not $\forall n\,\neg\mathrm{Sim}^{c}_{\mathcal S,\phi}(n)$, which rules out short proofs at every length, but the eventual-failure statement $\forall N\,\exists n{\geq}N\,\neg\mathrm{Sim}^{c}_{\mathcal S,\phi}(n)$. Full polynomial non-simulation is expressed by $\forall c\,\forall N\,\exists n{\geq}N\,\neg\mathrm{Sim}^{c}_{\mathcal S,\phi}(n)$. These are $\Pi^0_2$ statements when the proof predicate and the exponent coding are fixed.

This internal reading is useful for formal independence questions, but it is not the same as the intended external reading. The external reading says that there is no standard polynomial bound and no standard uniform family of genuine $\mathcal S$-proofs of $\Con_{\mathcal S{+}\phi}(n)$. The distinction matters because nonstandard models may contain apparent proof codes, apparent exponents, and apparent simulations that do not correspond to any standard proof family.

\paragraph{The standardness obstacle.}
The main obstacle is familiar from Gödelian reflection. If $\mathcal S$ is consistent, recursively axiomatized, and strong enough for Gödel's second incompleteness theorem, then $\mathcal S{+}\neg\Con_{\mathcal S}$ is consistent. In any model $M{\models}{\mathcal S}{+}\neg\Con_{\mathcal S}$, the model contains a nonstandard object that it regards as a proof of contradiction. By formalized proof manipulation, $M$ then regards every sentence as having an $\mathcal S$-proof. Thus raw internal proof existence cannot by itself distinguish genuine external proof families from nonstandard artifacts. The same problem appears for polynomial simulations: a model may believe that it has a polynomial-size proof family because its proof codes, exponents, or bounding functions are nonstandard.

This explains why \textbf{HRC} and \textbf{KH} are standardness-sensitive principles. They are not merely first-order assertions about an internal proof predicate. They assert that genuine external feasible proof families exist only when weak-base relative consistency explains them. A direct proof or refutation inside an ordinary ambient theory must therefore control the standard cut, not just the internal proof predicate.

\paragraph{Internal unprovability.}
There is one simple G\"odelian fact worth retaining. Consider a self-applicable schema asserting, for each true $x{\in}R$ above the threshold, that $\Con_{\mathcal S{+}(x{\in}R)}(n)$ is true but hard: the statements have no short $\mathcal S$-refutations and no short $\mathcal S$-proofs. Any such schema implies $\Con_{\mathcal S}$: if $\mathcal S$ were inconsistent, a fixed $\mathcal S$-proof of contradiction would remain a contradiction proof in every extension $\mathcal S{+}(x{\in}R)$, giving short refutations of $\Con_{\mathcal S{+}(x{\in}R)}(n)$ for all large~$n$, which the schema excludes. By G\"odel's second incompleteness theorem, $\mathcal S$ cannot prove such a schema, assuming $\mathcal S$ is consistent and satisfies the usual hypotheses. \textbf{KH} is stated only for sound theories, so soundness, and hence consistency, is part of its external domain rather than a conclusion derived from the conjecture. The G\"odelian observation instead applies to a self-applicable internal formalization of the strengthened true-but-hard schema.

The same observation applies to any self-applicable \textbf{HRC} schema strong enough to imply the corresponding true-but-hard schema. This does not prove independence from $\ZFC$ or from stronger metatheories. It only explains why the broader hardness-side formulation is necessarily external: no sufficiently strong sound effective theory should be expected to certify, from within itself, a principle that applies to itself and implies its own consistency.

\paragraph{Changing the ambient theory.}
One might try to avoid this limitation by proving, inside a stronger ambient theory $\mathcal T$, the instances of \textbf{KH} associated with a weaker theory $\mathcal S$. This is a legitimate strategy, but it does not remove the reflection issue; it only moves the boundary. There is a simple monotonicity principle. If $\mathcal B{\subseteq}\mathcal T$ and $\mathcal B{\vdash}x{\in}R$, then $\mathcal T{\vdash}x{\in}R$. Equivalently, if $\mathcal T{\nvdash}x{\in}R$, then $\mathcal B{\nvdash}x{\in}R$. Thus strengthening the ambient theory can certify more randomness facts and can shrink the class of axioms that count as inaccessible, but it cannot make inaccessibility disappear altogether for any fixed sound computably axiomatized ambient theory.

The reason is the relativized Chaitin phenomenon. For each fixed sound computably axiomatized theory $\mathcal T$, the usual Chaitin argument applies with $\mathcal T$ in place of $\mathcal S$: there is a constant $c_{\mathcal T}$ such that $\mathcal T$ proves no true assertion $K_U(x){>}c_{\mathcal T}$. Hence, for the logarithmic-deficiency predicate $R$, there is a threshold $\chi_{\mathcal T,R}$ such that every true $\mathcal T$-provable instance $x{\in}R$ has $|x|{<}\chi_{\mathcal T,R}$. In particular, even a strong ambient theory leaves all sufficiently long true randomness facts $x{\in}R$ inaccessible relative to itself.

This explains the abstract's qualification that a stronger ambient theory may itself give rise to further instances of \textbf{KH}. A proof inside $\mathcal T$ of the instances of \textbf{KH} associated with $\mathcal S$ would be informative, but if $\mathcal T$ is also a sound computably axiomatized theory in the intended range of the framework, then the same random-axiom question can be asked again with $\mathcal T$ as the simulated theory. The hierarchy of ambient theories therefore does not terminate the issue by fiat. It yields a relative program: prove instances of \textbf{KH} for weaker theories inside stronger theories, while recognizing that every fixed sound computably axiomatized ambient theory has its own Chaitin boundary and hence its own potential instances of the conjecture. This intended scope includes theories such as $ZFC$, using standard G\"odel codings of their proofs. We claim neither a full \textbf{HRC}/\textbf{FR} characterization nor independence from $ZFC$, only that every sound effective theory has its own Busy Beaver and Chaitin boundary.

\paragraph{The independence target.}
The appropriate formal target is therefore not the external schema directly, but a first-order internal approximation. For fixed $\mathcal S$, $\phi$, and $c$, one can ask whether the eventual-failure sentence $\forall N\,\exists n{\geq}N\,\neg\mathrm{Sim}^{c}_{\mathcal S,\phi}(n)$ is independent of a background theory such as $\ZFC$; the corresponding full polynomial-non-simulation target is $\forall c\,\forall N\,\exists n{\geq}N\,\neg\mathrm{Sim}^{c}_{\mathcal S,\phi}(n)$. The positive consistency direction would follow from the actual truth of the lower bound in the standard model. The hard direction is consistency with failure of the internal lower bound: one would need a model in which, for some exponent $c$ and some internal cutoff $N$, every $n{\geq}N$ has an internally short $\mathcal S$-proof of $\Con_{\mathcal S{+}\phi}(n)$, while the weak-base side condition $\EA{\nvdash}\Con_{\mathcal S}{\rightarrow}\Con_{\mathcal S{+}\phi}$ remains intact.

Thus the key open problem is a model-construction problem. One needs a nonstandard model, or a suitable cut inside such a model, closed under the relevant polynomial functions but not under stronger growth, carrying apparent feasible bounded-consistency proofs for the chosen extension. The difficulty is to keep the exponent standard and the proof lengths genuinely polynomial over the cut, rather than merely polynomial according to a nonstandard parameter. This is the point at which the independence program becomes a concrete technical problem rather than a slogan.

\paragraph{Program.}
The independence program has three concrete tasks. First, fix robust first-order internal readings of the \textbf{HRC} and \textbf{KH} instances to be tested. Second, construct nonstandard models or cuts in which the internal lower bound fails because the model contains apparent feasible bounded-consistency proofs. Third, prove that the weak-base side condition remains true in the relevant sense, so that the model witnesses failure of the \textbf{HRC} direction rather than simply changing the relative-consistency facts.

Progress on any of these tasks would clarify the status of the proposed principles. A positive independence result would support the view that \textbf{HRC} and \textbf{KH} are genuine reflection principles rather than ordinary theorems awaiting proof. A failure of the program would also be informative: it might reveal that the principles are either too strong, too weakly formalized, or more tightly connected to existing proof-complexity lower bounds than the present framework currently shows.

\paragraph{Precedents from independence and unprovability.}
The independence question for \textbf{KH} and its variants belongs to an existing line of work on whether central complexity-theoretic separations may outrun standard formal systems. Aaronson's survey addresses the possible formal independence of $\PP$ versus $\NP$~\cite{AaronsonIndependent}. Independence of \textbf{KH} would not by itself imply independence of $\PP{\neq}\NP$. If \textbf{KH} or one of its finite-scale or hierarchy-level strengthenings implies standard separations, then an independence result for \textbf{KH} would show that the proposed proof-theoretic explanation of those separations outruns the ambient theory. It would not rule out the possibility that the same ambient theory proves $\PP{\neq}\NP$ by some different method.

Pudl\'ak's work on proof lengths, finite consistency, and the arithmetic/proof-complexity interface~\cite{Pudlak1986length,PudlakLengthsOfProofs} provides the proof-theoretic precedent most relevant here: bounded consistency has a feasible proof theory distinct from the full consistency statements governed by G\"odel incompleteness. Razborov's unprovability results for circuit lower bounds in bounded arithmetic~\cite{RazborovUnprovability} give a complementary complexity-theoretic precedent: under suitable pseudorandomness assumptions, certain lower-bound statements themselves cannot be proved in weak theories. The present program should be read in this tradition. It does not assert formal independence from $\ZFC$ or from any specific strong theory; rather, it asks whether the feasible bounded-consistency consequences associated with random-axiom extensions are inaccessible to the theories to which the corresponding \textbf{KH} instances apply.
\section{Conclusion}
\label{sec:conclusion}

The hard predicates considered in this paper play three distinct roles. A hard set is \emph{canonical} when a theorem or equivalence says that checking that object suffices for the corresponding conjectural phenomenon. It is \emph{characteristic} when it is not known to be complete for the phenomenon, but captures structural features predicted by the framework, such as density, no mutual help, or circuit-aligned incompressibility. It is an \emph{example} when it is a natural test case without a completeness or structural characterization claim.
\begin{table}[t]
\centering
\scriptsize
\setlength{\tabcolsep}{3pt}
\renewcommand{\arraystretch}{1.18}
\begin{adjustwidth}{-0.55in}{-0.55in}
\begin{tabular}{@{}L{0.17\linewidth}L{0.30\linewidth}L{0.18\linewidth}L{0.29\linewidth}@{}}
\hline
\textbf{Hardness type} & \textbf{Hard object or set} & \textbf{Role} & \textbf{Basis} \\
\hline
Non-simulation / no optimal proof system & Exact Busy Beaver extensions $\Sone{+}\phi_{BB}(k)$, for all sufficiently large $k$ & Canonical & Monroe~\cite[Theorem~3.10]{MonroeCharacterizing}; Theorem~\ref{thm:busy-beaver-transfer} \\
\textbf{HRC}/\textbf{FR} & The relative-consistency boundary $EA{\vdash}\Con_{\mathcal S}{\rightarrow}\Con_{\mathcal S{+}\phi}$ versus polynomial-size $\mathcal S$-proofs of $\Con_{\mathcal S{+}\phi}(n)$ & Canonical conjectural boundary & Theorems~\ref{thm:finite-consistency-equivalence} and~\ref{thm:feasible-relative-consistency}; Conjecture~\ref{conj:HRC}; Conjecture~\ref{conj:feasible-reflection} \\
\textbf{KH} & True random axioms $x{\in}R$ and the bounded-consistency instances $\Con_{\mathcal S{+}(x{\in}R)}(n)$ & Characteristic & Conjecture~\ref{conj:kolmogorov-hardness}; Theorem~\ref{theoremHRCImpliesKH}; Lemma~\ref{lem:no-access-no-relative-consistency} \\
\textbf{SETH-K-Finite} & True random-axiom instances $x{\in}R\cap\{0,1\}^m$, tested at every $n{>}N_{R,m,\epsilon}^{\mathcal S}$ for each fixed $0{<}\epsilon{<}1$ & Characteristic & Conjecture~\ref{conj:SETHKfinite}; Theorem~\ref{theoremdensity}; Corollary~\ref{cor:dense-small-hard-tautologies} \\
Pairwise no mutual help & Mutually conditionally random pairs $x{\mathrel{\perp_R}}y$, comparing proofs in $\mathcal S{+}(y{\in}R)$ of $\Con_{\mathcal S{+}(x{\in}R)}(n)$ & Characteristic and bridge-dependent & Conjectures~\ref{conjectureConditionalFactualNoAccess} and~\ref{conjecturePairwiseSETHKFinite}; Theorem~\ref{theoremNoMutualHelp} \\
\textbf{PH} noncollapse & The finite oracle-randomness predicates $Q_i^{t_i}$ whose YES-sets are $\widehat R_i^{t_i}$ & Canonical within the hierarchy-level strengthening & Assumption~\ref{ass:algph}; Theorems~\ref{thm:phnoncollapse} and~\ref{thm:ph-density} \\
Full random-string oracle benchmark & The linear-threshold sets $R_C^{1/2}$ and $R_{KS}^{1/2}$ & Unconditional calibration, not a hardness assumption & Allender et al.~\cite{AllenderetalRandomStrings}; Proposition~\ref{prop:full-versus-certified-random-oracle} \\
One-way functions & The value problem for time-bounded Kolmogorov complexity $\Kt$ on uniform inputs & Canonical & Liu--Pass~\cite{LiuPass}; Theorem~\ref{thm:liu-pass-main} \\
Circuit hardness / derandomization & The $\KT/\MKTP$ truth-table predicate, equivalently the boundary between low and high random-access time-bounded incompressibility & Canonical for circuit-minimization hardness; characteristic for derandomization 
& Allender--Grochow--van Melkebeek--Moore--Morgan~\cite{AllenderGrochowMelkebeekMooreMorgan}; Impagliazzo--Wigderson hardness-vs-randomness; Assumption~\ref{ass:exp-random-info-main}; Proposition~\ref{prop:exp-random-info-derand-main} \\
Natural-proofs compatibility & High-$\KT$ truth tables contrasted with truth tables of small circuits & Compatibility only & Razborov--Rudich~\cite{RazborovRudich}; discussion in \S\ref{subsec:naturalproofs-main} \\
Feige-style refutation & Sparse Kolmogorov-random clause supports $R^{\mathrm{Feige}}_{n,\Delta}$ & Characteristic and bridge-dependent & Counting in the sparse slice; Assumptions~\ref{ass:sparse-feige-hardness} and~\ref{ass:refutation-reflection}; Theorem~\ref{thm:feige-from-sparse-bridges} \\
Disjoint $\NP$ pairs & The random-axiom pair $(A_{\mathcal S},B_{\mathcal S})$ & Bridge-dependent & Assumption~\ref{ass:dnp-reflection}; Proposition~\ref{prop:dnp-payoff} \\
\hline
\end{tabular}
\end{adjustwidth}
\caption{Hard objects used in the information-constraint framework. ``Canonical'' is reserved for objects singled out by a theorem, equivalence, or an explicitly stated conjectural boundary. ``Characteristic'' marks objects that capture the predicted structure of the framework without being known to be complete for the corresponding open problem. ``Bridge-dependent'' marks consequences that require an additional problem-specific reflection or calibration assumption.}
\label{tab:canonical-hard-sets}
\end{table}

This paper has developed a proof-theoretic framework for organizing canonical hardness in complexity theory. The starting point is the bounded-consistency characterization problem: when does a sound, finitely axiomatized sequential theory $\mathcal S{\supseteq}\Sone$ have polynomial-size proofs of $\Con_{\mathcal S{+}\phi}(n)$? The known positive direction says that an $\EA$-level relative-consistency explanation yields feasible relative consistency and hence simulation. \textbf{HRC} proposes the converse on the same class: absent such a weak-base explanation, feasible bounded-consistency proofs should not exist. The random-axiom specialization, \textbf{KH}, applies this dividing line to true Kolmogorov-randomness facts $x{\in}R$, where Chaitin-style incompleteness supplies a canonical source of weak-base inaccessibility.

The central claim is that these principles isolate a coherent and testable source of hardness. Feasible proof-theoretic leverage should come with an explanation in the appropriate weak base; a theory should not obtain usable bounded-consistency information from a true random axiom $x{\in}R$ unless the corresponding relative-consistency implication is already weak-base accessible. Failure of $\EA{+}\Con_{\mathcal S}$ to prove the random axiom is a robust sufficient obstruction, but the exact simulation threshold is weak-base relative consistency. This no-private-information principle is the common thread behind \textbf{HRC}, \textbf{KH}, and the stronger assumptions studied here. The \emph{Unifying Meta-Complexity Assumption} should therefore be understood not as \textbf{KH} alone, nor as a collection of consequences formally generated by \textbf{KH}, but as a family consisting of \textbf{KH}, its single-axiom finite-scale strengthening, and separately stated pairwise, hierarchy-level, average-case, and problem-specific bridge principles governed by the same information constraint.

Several pieces of the framework are unconditional or structural. The positive direction from weak-base relative consistency to feasible relative consistency gives the proof-theoretic benchmark. Chaitin-style incompleteness gives only finitely many weak-base-accessible true randomness facts. The Allender et al. theorem gives the complementary computational benchmark: full access to conventional linear-threshold random-string oracles supports every $\PSPACE$ computation, while the positive fragment certified by any fixed sound effective theory is finite. This is maximal loss of positive oracle information, not an unconditional separation of $\PP$ from $\PSPACE$. The random-axiom formulation yields canonical bounded-consistency tautology families, and the robustness result shows that the construction is not an artifact of a particular universal machine. The density, pairwise no-mutual-help, hierarchy-level, and computational consequences remain conditional on their explicitly stated finite-scale or bridge assumptions. These results do not prove \textbf{KH}, but they explain why the random-axiom instances are natural hard candidates rather than arbitrary encodings.

The finite-scale and hierarchy-level principles give the main conditional payoff. Single-axiom finite-scale \textbf{KH} yields dense hard families after a length-dependent but string-uniform onset. Genuine no mutual help is a separate pairwise consequence: under Conditional Factual No Access and Pairwise SETH-K-Finite, mutually conditionally random axioms fail to provide exponential-scale proof help for one another's bounded-consistency statements. The hierarchy-level strengthening gives explicit dense separators through $\PH$, and in particular yields $\PH$ noncollapse and the standard Karp--Lipton consequence $\SAT{\notin}\PP/\poly$. This is the most developed computational test case for the framework: it turns the proof-theoretic obstruction into explicit level-by-level complexity consequences.

The average-case, sparse-refutation, and random-information material should be read as calibrated extensions rather than as consequences of \textbf{KH} alone. Passing from the unbounded predicate $R$ to time-bounded predicates such as $R_t$, $Q_0^t$, $K^t$, and $KT/MKTP$ requires additional bridge assumptions. The direct external OWF route uses Average-Case Boundary Reflection and Boundary Calibration; the direct derandomization route assumes exponential circuit hardness for a fixed $\E$-language. These direct external routes should remain the principal statements: a proof-producing spillover principle would by itself yield only nonprovability of a global failure or upper-bound sentence in a named theory, not the falsity of that sentence. Passing to Feige's random 3-$\SAT$ refutation hypothesis likewise requires a sparse-support predicate and reflection bridge: almost every Feige-random clause support is Kolmogorov-random in its sparse slice, but the implication to Feige depends on \textbf{Sparse Feige Hardness} and \textbf{Refutation Reflection}. With these bridges, the same boundary viewpoint gives conditional routes to one-way functions, derandomization, and random-refutation hardness; without them, these remain research targets. The natural-proofs discussion is instead a compatibility comparison and does not claim a derivation of the Razborov--Rudich barrier. This separation is important. The paper's core contribution is not to derive every frontier consequence from one conjecture, but to identify exactly where the proof-theoretic obstruction ends and where new calibration assumptions begin.

The framework is also calibrated against the classical proof barriers. It is not a relativizing proof strategy: the hard instances are defined through a fixed universal machine, arithmetized proof predicates, and weak-base access to concrete randomness facts $x{\in}R$, not through an oracle-invariant simulation argument. Relativizing the ambient computation changes the relevant Kolmogorov predicates and the available descriptions. This does not prove that the framework evades every strengthened barrier, such as algebrization, but it places the proposal on the non-relativizing side of the Baker--Gill--Solovay lesson. Likewise, the natural-proofs discussion is a compatibility comparison, not a derivation of the Razborov--Rudich barrier. The framework suggests that large efficient tests exposing randomness information may constitute the kind of feasible access that a suitable reflection principle would forbid, but establishing that implication remains a separate research problem.

The resulting research program is concrete. One can try to prove restricted instances of \textbf{HRC} or \textbf{KH} for weak proof systems, search for nonstandard models witnessing the failure of converse reflection, sharpen the finite-scale thresholds, test the hierarchy-level predicates against known proof-complexity techniques, analyze the random-axiom disjoint $\NP$ pairs, calibrate the $R_t$ interface for average-case hardness, one-way functions, and derandomization, and determine whether the sparse-support reflection principles needed for Feige's hypothesis hold in any natural proof-theoretic setting.

The contribution is a candidate organizing principle for lower-bound theory: feasible leverage requires weak-base explanatory access. In the bounded-consistency setting, this principle identifies the canonical hard instances and explains why random-axiom extensions should resist simulation. Its success would give a common source for hardness behind $\PP{\neq}\NP$, $\PH$ noncollapse, and related lower-bound problems; its failure would reveal a new mechanism by which proof systems extract feasible information from true random facts without weak-base certification.

\bibliographystyle{amsplain}
\bibliography{equivalence}

\end{document}